\documentclass[letterpaper,11pt]{article}

\usepackage[margin=1in]{geometry}

\usepackage[T1]{fontenc}

\usepackage{amsmath, amsthm, amssymb, mathtools, bm, dsfont}
 
\usepackage[comma,longnamesfirst]{natbib}
\usepackage[dvips]{epsfig}         
\usepackage{graphicx}              
\usepackage[table,usenames,dvipsnames]{xcolor}  

\usepackage{dcolumn, hhline, multirow, booktabs}
\usepackage[flushleft]{threeparttable}
\usepackage{longtable}

\usepackage[percent]{overpic}
\usepackage{rotating}
\usepackage{wrapfig}
\usepackage{adjustbox}

\usepackage{enumerate}
\usepackage{enumitem}
\usepackage{paralist}  

\usepackage{afterpage}
\usepackage{placeins}
\usepackage{lscape}  

\usepackage{xr}
\usepackage[toc,page]{appendix}
\usepackage{chngcntr}

\usepackage[breaklinks]{hyperref}
\usepackage{breakurl}

\usepackage{caption}
\usepackage{subcaption}

\usepackage{algorithm}
\usepackage{algpseudocode}

\usepackage[thinc]{esdiff}

\usepackage{diagbox}
\usepackage{slashbox}

\usepackage{titlesec}    
\usepackage{sectsty}
\usepackage{setspace}

\usepackage{eurosym}
\usepackage{quiver}
\usepackage{rotating}

\usepackage{dsfont}

\DeclareUnicodeCharacter{0301}{\'{e}}

\renewcommand{\tablename}{Table}

\def \bbP {\text{$\mathbb{P}$}}
\def \bbE {\text{$\mathbb{E}$}}
\def \bbR {\text{$\mathbb{R}$}}

\DeclareMathOperator{\Cov}{Cov}

\DeclareMathOperator{\Var }{Var}

\DeclareMathOperator*{\argmax}{{arg\,max}}

\DeclareMathOperator{\vecc}{vec}

\newcommand{\lrc}[1]{\left\{#1\right\}}
\newcommand{\lrb}[1]{\left[#1\right]}
\newcommand{\lrp}[1]{\left(#1\right)}

\def \calN {\mathcal N}

\def \calU {\mathcal U}

\def \cvec {\text{\boldmath$c$}}    
    
\def \evec {\text{\boldmath$e$}}

\def \uvec {\text{\boldmath$u$}}    
\def \vvec {\text{\boldmath$v$}}    
\def \wvec {\text{\boldmath$w$}}    
\def \xvec {\text{\boldmath$x$}}    
\def \yvec {\text{\boldmath$y$}}    
\def \zvec {\text{\boldmath$z$}}

\def \mB {\text{\boldmath$B$}}

\def \mG {\text{\boldmath$G$}}

\def \mK {\text{\boldmath$K$}}

\def \mR {\text{\boldmath$R$}}

\def \mU {\text{\boldmath$U$}}
\def \mV {\text{\boldmath$V$}}

\def \mX {\text{\boldmath$X$}}

\def \mZ {\text{\boldmath$Z$}}

\def \ztildevec {\text{\boldmath$\tilde z$}}

\def \mtildeZ {\text{$\widetilde{\mZ}$}}

\def \varepsilonvec   {\text{\boldmath$\varepsilon$}}

\def \etavec          {\text{\boldmath$\eta$}}
\def \thetavec        {\text{\boldmath$\theta$}}

\def \lambdavec       {\text{\boldmath$\lambda$}}
\def \muvec           {\text{\boldmath$\mu$}}
\def \nuvec           {\text{\boldmath$\nu$}}

\def \sigmavec        {\text{\boldmath$\sigma$}}

\def \varphivec       {\text{\boldmath$\varphi$}}
\def \psivec          {\text{\boldmath$\psi$}}

\def \zerovec {\mathbf{0}}
\def \onevec {\mathbf{1}}

\def\theequation{\thesection.\arabic{equation}}  
\def\abstract{\if@twocolumn
\section*{Abstract}
\else \normalsize 
\begin{center}
{\bf Summary\vspace{-.5em}\vspace{0pt}} 
\end{center}
\quotation 
\fi}
\def\endabstract{\if@twocolumn\else\endquotation\fi}

\makeatletter
\newcommand{\myappendix}[1]{
	\setcounter{section}{1}
        \renewcommand{\thesection}{A\arabic{section}}}

\usepackage{color}
\usepackage{colordvi}
\newlength{\breite}
\breite\textwidth
\newcounter{aufg}[section]
  {\refstepcounter{aufg}\noindent\textbf{Exercise \arabic{aufg}:}
   \\*[1ex]\noindent}{\vspace{.5cm}}
   
 \newcounter{notes}[section]
  {\refstepcounter{aufg}\noindent\textbf{}
   \\*[1ex]\noindent}{\vspace{.5cm}}
   
\usepackage{amsthm}  
\newtheorem{assumption}{Assumption}

\newtheorem{theorem}{Theorem}

\newtheorem{definition}{Definition} 

\newtheorem{remark}[]{Remark}
\newtheorem{lemma}[]{Lemma}
\newtheorem*{beisp*}{Example}
\newtheorem{Proof}{Proof}


\newtheoremstyle{break}
  {}
  {}
  {}
  {}
  {\bfseries}
  {.}
  {\newline}
  {}
  
\theoremstyle{break}

\newcommand{\head}[2]%
 {\hrule \vspace{.15cm} {\sfbold Advanced Statistical Inference, Summer Term 2012, Georg-August-University G\"ottingen}\hfill
{\sfbold Sheet #1}\\
{\sfbold Prof. Dr. Thomas Kneib, Nadja Klein}\hfill {\sfbold #2}

\vspace{.2cm}
\hrule

\vspace{1cm}

}

\newcounter{auf}
{\refstepcounter{auf}
\begin{center}
\fcolorbox[gray]{0}{.95}{
\makebox[\breite]{
\textbf{Exercise \arabic{auf}}
}}\\*[1ex]\noindent
\end{center}
}{\vspace{.5cm}}

\newcounter{loes}[section]
{\stepcounter{loes}
\begin{center}
\fcolorbox[gray]{0}{.95}{
\makebox[\breite]{
\textbf{L"osung \arabic{loes}}
}}\\*[1ex]\noindent
\end{center}
}{}

{\begin{center}
\fcolorbox[gray]{0}{.95}{
\makebox[\breite]{
\textbf{Zu Aufgabe #1}
}}\\*[1ex]\noindent
\end{center}\vspace{1cm}
}{\vspace{1cm}}

\newcounter{ka}
{\refstepcounter{ka}
\begin{center}
\framebox[\textwidth]{
\textbf{Aufgabe \arabic{ka}} \hfill #1 Punkte
}\\*[1ex]\noindent
\end{center}
}{\vspace{1cm}}

\newcounter{lka}
{\refstepcounter{lka}
\begin{center}
\framebox[\textwidth]{
\textbf{L\"osung \arabic{lka}} \hfill #1 Punkte
}\\*[1ex]\noindent
\end{center}
}{\vspace{1cm}}

\titlespacing*\section{0pt}{0pt plus 4pt minus 2pt}{0pt plus 2pt minus 2pt}
\titlespacing*\subsection{0pt}{0pt plus 4pt minus 2pt}{0pt plus 2pt minus 2pt}
\titlespacing*\subsubsection{0pt}{0pt plus 4pt minus 2pt}{0pt plus 2pt minus 2pt}
 
\definecolor{myblue}{RGB}{0,73,114}

\renewenvironment{itemize}[1]{\begin{compactitem}#1}{\end{compactitem}}
\renewenvironment{enumerate}[1]{\begin{compactenum}#1}{\end{compactenum}}

\usepackage[skins,breakable]{tcolorbox}
\newtcolorbox{cvbox}[2][]{%
  blanker,
  after skip=8mm,  
  title=#2,
  coltitle=cyan,
  #1 
} 

\makeatletter
\def\@seccntformat#1{%
  \@ifundefined{#1@cntformat}%
    {\csname the#1\endcsname\quad}
    {\csname #1@cntformat\endcsname}%
}
\let\oldappendix\appendix
\renewcommand\appendix{%
  \oldappendix
  \newcommand{\section@cntformat}{\appendixname~\thesection\quad}%
} 
\makeatother

\begin{document}
  \setlength{\abovedisplayskip}{0.15cm} 
  \setlength{\belowdisplayskip}{0.15cm}
  
  \pagestyle{empty}
  \begin{titlepage}

\newcommand{\titlename}{\LARGE\bfseries\color{myblue}  
Vector Vine Copula Models for Multivariate Longitudinal Data}

\title{\titlename}
\author{Michael Stanley Smith and Lin Deng}
\date{\today}
\maketitle
\noindent
{\small Michael Smith is Professor of Management (Econometrics) and Lin Deng is a Postdoctoral Fellow, both at the Melbourne Business School, University of Melbourne, Australia.  Correspondence should be directed to Michael Smith at {\tt mikes70au@gmail.com}. 
\\

\noindent \textbf{Acknowledgments:} Michael Smith's research has been partially supported by the Australian Research Council (ARC) Discovery Project grant DP250101069. This paper uses unit record data from Household, Income and Labour Dynamics in Australia Survey (HILDA) conducted by the Australian Government Department of Social Services (DSS). The findings and views reported in this paper, however, are those of the authors and should not be attributed to the Australian Government, DSS, or any of DSS’ contractors or partners. The authors thank Dr. Weiben Zhang for research assistance in constructing the HILDA dataset.\\}

\newpage
\begin{center}
\mbox{}\vspace{2cm}\\
{\title{\titlename}}\\
\vspace{1cm}
{\Large Abstract \\} 
\end{center}
\vspace{-1pt}
\onehalfspacing
\noindent
Multivariate longitudinal data may exhibit non-Gaussian margins, nonlinear dynamics, and response vectors with composition that varies across waves. To account for these features, we introduce a vector drawable vine (VD-vine) copula that extends conventional drawable vine copulas from scalar to vector-valued nodes. Here, the response vector at each wave forms a multivariate marginal, and serial dependence is captured through a sequence of linking vector copulas. We establish that the VD-vine is itself a vector copula and reduces to a conventional drawable vine for scalar nodes. Recursive forward and backward conditional transports are derived that enable efficient likelihood evaluation and predictive simulation, with parsimonious reductions under finite-order Markov and stationary restrictions. Unconstrained parameterizations for Gaussian and FGM linking vector copulas, flexible multivariate marginals, and Bayesian variational inference provide a practical implementation. Simulations show improved predictive accuracy when the marginals are asymmetric and serial dependence is multivariate, with little loss under a correctly specified Gaussian panel vector autoregression. In an eight-wave Australian panel of 1,093 individuals with varying response vectors, the full VD-vine delivers the best cross-validated distributional forecasts among the models considered, establishing the benefit of capturing asymmetry and nonlinear dependence.
\vspace{15pt}
 
\noindent
{\bf Keywords}: Multivariate Panel Data; Transport Maps; Variational Bayes; Vector Copulas; Vines
\vspace{15pt}

\noindent
{\bf Declaration of AI Usage}: ChatGPT (GPT-5.6 Pro) was used to undertake a pre-submission review of the consistency of the mathematical notation, identification of typographical and minor grammatical errors, and the clarity and succinctness of mathematical expressions and proofs. The authors independently evaluated all suggestions and made manual edits in response to them. The research ideas, mathematical development, and writing (other than as indicated above) are the authors' own. \vspace{15pt}

\noindent
{\bf Data Availability}: The HILDA Survey unit-record data cannot be redistributed under the terms of the data licence. This data is available to authorised researchers through the DSS Longitudinal Studies Dataverse. Instructions on how to construct the dataset used in the application from the unit-record data are provided in~\ref{sm:hildaextra} of the Online Appendix.
\end{titlepage}
  
  \newpage
   
  \pagestyle{plain}
  \setcounter{equation}{0}
  \renewcommand{\theequation}{\arabic{equation}}
  
  
  \section{Introduction}\label{sec:intro}
Multivariate longitudinal data (also called panel data in the social sciences), comprising repeated observations of multiple variables for each individual or other study unit, are common in economics~\citep{HoltzEakinNeweyRosen1988,AttanasioPicciScorcu2000}, finance~\citep{lovezicchino2006}, the social
sciences~\citep{AsparouhovHamakerMuthen2018}, 
medicine and public health~\citep{VerbekeEtAl2014,ZhaoEtAl2021} and elsewhere.
Most existing models for such data have 
temporal and cross-response dependence represented through linear or additive predictors, Gaussian innovations and a fixed number of variables.
But in practice dependencies may be nonlinear, the variable distributions may be strongly non-Gaussian, and the set of variables collected may change across time.
In this paper we propose a new
model for multivariate longitudinal data based on vector copulas~\citep{fan2023vector} that allows for all three features. It generalizes the conventional drawable vine (D-vine) copula model~\citep{aas2009pair} that is
used to capture nonlinear serial dependence in non-Gaussian univariate longitudinal data~\citep{smith2010modeling} to the vector-valued case, so that we call it a ``vector drawable vine'' (VD-vine) copula model. This treats the entire vector of variables observed at each time point as a single multivariate margin and decomposes serial dependence through a sequence of linking vector copulas. While we focus on the VD-vine because it is naturally applicable to multivariate longitudinal data, our approach is general and can extend to other vines, such as R-vines or C-vines~\citep{Czado2019}, to allow for multivariate marginals in a similar fashion.

Conventional vines are high-dimensional copulas constructed 
from bivariate copula components called pair-copulas. Because any copula function can be used for each pair-copula, vines provide a very flexible nonlinear dependence structure; see~\cite{Czado2019} for an overview.
Unfortunately, conventional copulas are not directly
extendable to distributions with given multivariate marginals~\citep{genest1995}.  
However, \cite{fan2023vector} formalize the idea of what they call a ``vector copula'' 
that captures the interdependence between $k$ multivariate marginals, based on transport maps
from each marginal to a product uniform distribution.
In this paper, we generalize pair-copulas to linking vector copulas for two marginals, and use them to construct an extension of a conventional vine to accommodate $k$ multivariate marginals. For the VD-vine, this
allows for localized nonlinear temporal dependence, while accommodating multivariate marginals with composition that can differ across time.
A related model is proposed by~\cite{zhang2026copula}, who use cyclically monotone optimal transport maps to transform each multivariate node margin directly to a standard Gaussian distribution and then estimate an undirected Gaussian graphical model. Their approach also treats an entire vector as a single multivariate marginal, but employs a global Gaussian covariance or precision representation rather than a vector vine factorization.
As far as we are aware, ours is the first paper to use vector copulas
to extend vines to vector-valued nodes.

We establish that the VD-vine constructed in this way is itself a vector copula, 
and show how to compute its density. 
Unlike conventional vines, evaluating the vector vine density requires repeated evaluation of transport maps and their inverses between multivariate conditional distributions. To do so we propose an efficient algorithm that is based on recursive forward and backward transport maps.
Derivation of these recursions is a major contribution of our work, and they
generalize a key result for conventional vines in~\cite{joe1996} to the vector vine case.
Similarly, the proposed algorithm generalizes that for the evaluation of the D-vine density in~\cite{aas2009pair} and~\cite{smith2010modeling}, making likelihood-based inference feasible. The VD-vine also simplifies under finite-order Markov and/or stationary assumptions, reducing computational complexity for long series substantially. 

There are a number of existing approaches that use conventional copulas to capture temporal dependence in multivariate data. One way is to employ conventional vines for a single vector with observed values stacked over both scalar variables and time~\citep{BeareSeo2015,smith2015IJF,BrechmannCzado2015,NaglerKruegerMin2022}. A second way is to model the serial dependence of each response through a separate D-vine and then link the variable-specific conditional distributions using another copula~\citep{ZhaoShiZhang2022,SefidiGanjaliBaghfalaki2022,ShiZhao2024}. A third stream of research employs a latent factor or state dependence structure~\citep{RussoFarcomeni2024,ZitoKowal2025,LiJoeGenest2026}. In contrast, the proposed VD-vine treats the entire vector observed at each time point as a single multivariate margin and decomposes serial dependence through linking vector copulas. This blockwise formulation directly represents nonlinear cross-variable lag dependence, and avoids imposing a scalar-node vine structure over all variable–time coordinates.

Vector copula functions have multivariate marginals that are product uniform distributions, making them more difficult to define than conventional copula functions. We use two different choices for the linking vector copulas. The first is the Gaussian vector copula suggested by~\cite{fan2023vector} and the second is a
novel extension of the Farlie-Gumbel-Morgenstern copula~\citep{JohnsonKotz1975}. In both cases, we propose an unconstrained parameterization that maintains product uniform marginals and is invariant to the ordering of each vector, while being tractable. For the marginals of the vector copula model we consider a 
learnable multivariate generalization of the transformation of~\cite{yeo2000new} that allows for multivariate dependence and skew, although 
other bijective transport maps, such as some normalizing flows, can also be used. 

We estimate the VD-vine using Bayesian variational inference methods, and
their efficacy is demonstrated in a simulation study. It is shown that using the VD-vine model to capture asymmetry in the multivariate marginal distributions, nonlinear serial dependence and cross-wave heterogeneity can all increase the accuracy of density forecasting substantially. 
The effectiveness of the methodology is illustrated in a study of the relationship between work-life conflict and well-being using an Australian panel dataset. This is a panel of $n=1093$ individuals with vector length varying between 4 and 5 over $T=8$ waves. 
We find that accounting for asymmetry in the marginals, along with nonlinear and time-varying  serial dependence, improves accuracy substantially.

The rest of the paper is organized as follows. Section~\ref{sec:vcopmod} specifies 
a vector copula model, Section~\ref{sec:vdvcop} develops the novel VD-vine, and
Section~\ref{sec:panel} outlines its use to model multivariate longitudinal (i.e. panel) data. Section~\ref{sec:vb} shows how to compute likelihood-based inference and prediction efficiently, Sections~\ref{sec:sim} and~\ref{sec:unbalanced} contain the simulation study and empirical application, while Section~\ref{sec:conc} concludes. All proofs and additional material are provided in the Supplementary Material.

  \section{Vector Copula Models}\label{sec:vcopmod}
This section specifies a vector copula model, which is constructed from multivariate marginals and a vector copula function. Suggestions are made for both the marginals and vector copula function that are learnable. In Section~\ref{sec:vdvcop} we develop the VD-vine as a new vector copula that can be used.

\subsection{Definition}
Conventional copula models are based on Sklar's theorem which shows 
that every multivariate distribution can be decomposed into a copula function and its univariate marginals~\cite[pp.46-47]{nelsen06}. A copula model is defined by 
selecting the univariate marginals, along with a copula function, to define a multivariate
distribution. Unfortunately, Sklar's theorem does not apply to multivariate marginals (with the exception of the trivial independence case); see~\cite{genest1995}. However, an analogous decomposition applies to multivariate marginals expressed as transport maps, along with what~\cite{fan2023vector} call a ``vector copula'', as follows.

\begin{definition}[$k$-Vector Copula]\label{def:kvcop}
	Let $\bm{U} = (\bm{U}_1^\top, \bm{U}_2^\top, \dots, \bm{U}_k^\top)^\top$, where each $\bm{U}_j \in [0,1]^{d_j}$ is a subvector of dimension $d_j$, and let $d = \sum_{j=1}^k d_j$ be the dimension of $\bm{U}$. A $k$-vector copula function $C_v : [0,1]^d \to [0,1]$ is a conventional copula function for $\bm{U}$,
	where each subvector is marginally distributed as $\bm{U}_j\sim {\cal U}_j$, with ${\cal U}_j$ a product uniform distribution on $[0,1]^{d_j}$.
\end{definition}

The $k$-vector copula $C_v$ characterizes the dependence between the subvectors $\bm{U}_1, \dots, \bm{U}_k$, while preserving the marginal uniformity of each subvector. 
As with conventional copulas, $c_v(\uvec)= \partial^d C_v(\uvec) / \partial \uvec $ is the vector copula density. 
Theorem~1 of~\cite{fan2023vector} gives the vector copula extension of Sklar's theorem, and we focus
here on its use to construct a vector copula model for $\bm{X} = (\bm{X}_1^\top, \bm{X}_2^\top, \dots, \bm{X}_k^\top)^\top$ with absolutely continuous distribution $\mathbb{P}_X$ on $\mathbb{R}^d$ and 
marginals $\mathbb{P}_{X_j}$ on $\mathbb{R}^{d_j}$ as follows. 

\begin{definition}[$k$-Vector Copula Model]\label{def:vcopmod}
Let $S_j : [0,1]^{d_j} \to \mathbb{R}^{d_j}$ be a measurable bijective transport map that pushes forward ${\cal U}_j$ to marginal $\mathbb{P}_{X_j}$. That is, $(S_j )_{\#} \mathcal{U}_j = \mathbb{P}_{X_j}$,
where ``$\#$'' denotes the push-forward operator. Then 
if $\uvec=(\uvec_1^\top,\ldots,\uvec_k^\top)^\top$,  $S(\uvec)=(S_1(\uvec_1)^\top,\ldots,S_k(\uvec_k))^\top$, $\bm{U}\sim C_v$, 
the joint distribution of $\bm{X}$ is 
given by the push-forward $\mathbb{P}_X=S_{\#}\mathbb{P}_U$, with density 
\begin{equation}\label{eq:vcopdecomp}
f_X(\xvec)=c_v\left(S_1^{-1}(\xvec_1),\ldots,S_k^{-1}(\xvec_k)\right)\prod_{j=1}^k f_j(\xvec_j)\,,
\end{equation}
where $\xvec=(\xvec_1^\top,\ldots,\xvec_k^\top)^\top$ and the marginal density of $\bm{X}_j$ is given by
\[
f_j(\bm{x}_j) = \left| \det \nabla S_j^{-1}(\bm{x}_j) \right|,
\]
with $\nabla S_j^{-1}(\bm{x}_j)$ the Jacobian matrix of the map $S_j^{-1}$ evaluated at $\bm{x}_j$.
\end{definition}

When $d_1=d_2=\ldots= d_k=1$, then 
Definition~\ref{def:vcopmod} is the specification of a $k$-dimensional conventional copula model with univariate marginal distribution functions $F_j(x_j)=S_j^{-1}(x_j)$
for $j=1,\ldots,k$.
Two components are required to use Definition~\ref{def:vcopmod} to construct a vector copula model: the vector copula $C_v$ and the marginal transport maps $S_1,\ldots,S_k$. Suggestions for both that are tractable are given below. 

\begin{remark}\label{rem:cmono}
The canonical construction of~\cite{fan2023vector} takes each $S_j$ to be a cyclically monotone vector quantile map, making $S_j^{-1}$ a vector rank function. However,
cyclical monotonicity is not required for a valid definition of a vector copula model. Any bijective function $S_j$ satisfying $(S_j)_{\#}\mathcal{U}_j=\mathbb{P}_{X_j}$ suffices, although the resulting vector copula is relative to this chosen transport map. We adopt this broader formulation for empirical tractability because imposing cyclical monotonicity on flexible parametric families for $S_j$ is difficult.
\end{remark}

\subsection{Learnable vector copulas}\label{sec:lvcops}
Below are two learnable $k$-vector copulas that 
meet Definition~\ref{def:kvcop}. The first is the Gaussian vector copula proposed by~\cite{fan2023vector}, and the second is a novel extension of the Farlie--Gumbel--Morgenstern (FGM) conventional copula to the vector case.
Efficient parameterization of both for the special case of a linking vector copula, where $k=2$, is discussed later in Section~\ref{sec:vector_dvine_copula}.

\subsubsection{Gaussian Vector Copula (GVC)}\label{sec:gvc}
Define the set of all $(d\times d)$ correlation matrices with identity matrices $I_{d_1},I_{d_2},\ldots,I_{d_k}$ on the block diagonal as ${\cal P}$. 
The GVC is a special case of the popular Gaussian copula~\citep{xue2000}, but where the correlation matrix parameter $\Omega \in {\cal P}$ is patterned.  
The GVC function is given by
\[
C^{\text{GVC}}_v(\bm{u};\Omega) = \Phi_d\left( \underline{\Phi_1}^{-1}(\uvec_1), \dots, \underline{\Phi_1}^{-1}(\uvec_k);\Omega \right)\,.
\]
Here, 
$\Phi_d(\cdot;\Omega)$ is an $ \calN_d(\zerovec,\Omega)$ distribution function, and $\underline{\Phi_1}^{-1}$ is the univariate standard normal quantile function, where the underline denotes that the function is applied element-wise.\footnote{In this paper we use an underline to denote element-wise application of a scalar function to a vector-valued argument.} The corresponding vector copula density is 
\[
c_v^{\text{GVC}}(\uvec;\Omega)=\mbox{det}(\Omega)^{-1/2}\exp\left\{-\frac{1}{2}(\underline{\Phi_1}^{-1}(\uvec))^\top (\Omega^{-1}-I_d)\underline{\Phi_1}^{-1}(\uvec)\right\}\,.
\]
We note that the model of~\cite{zhang2026copula} corresponds to adopting the GVC above
with transports \(S_j=T_j^{-1}\circ\underline{\Phi_1}^{-1}\), where \(T_j\) is the Brenier map from \(\mathbb{P}_{X_j}\) to \(\calN_{d_j}(\zerovec,I)\). 

\subsubsection{Farlie--Gumbel--Morgenstern Vector Copula (FGMVC)}\label{sec:fgmvc}
 The Farlie--Gumbel--Morgenstern (FGM) vector copula is an extension of the conventional FGM copula~\citep{JohnsonKotz1975}. 
 For scalar uniform variables $u_1,\ldots,u_d\in[0,1]$, the $d$-variate conventional FGM copula function is
 \[
 C^{\text{FGM}}(u_1,\ldots,u_d)
 =
 \prod_{l=1}^{d}u_l
 \left[
 1+
 \sum_{\substack{A\subseteq\{1,\ldots,d\}\\ |A|\geq2}}
 \theta_A
 \prod_{l\in A}(1-u_l)
 \right].
 \]
 Each parameter $\theta_A$ corresponds to one interaction subset $A$ with at least two scalar variables. 

 To adapt the FGM copula to the vector copula setting
define the scalar component index set
$\mathcal I=\{(j,i):j=1,\ldots,k,\ i=1,\ldots,d_j\}$, and let
 \[
 \mathcal A_v=\left\{A\subseteq\mathcal I: A \text{ involves {\bf at least two} distinct vector blocks}\right\}.
 \]
 Removing all within-block interaction terms defines the FGM vector copula function
 \[
 C_v^{\text{FGMVC}}(\uvec)
 =
 \prod_{j=1}^{k}\prod_{i=1}^{d_j}u_{ji}
 \left[
 1+
 \sum_{A\in\mathcal A_v}
 \theta_A
 \prod_{(j,i)\in A}(1-u_{ji})
 \right].
 \]
 In the innermost product the scalar index is replaced by the block-component index $(j,i)$, and all interactions that connect at least two vector blocks are retained. To obtain the marginal copula of $\uvec_r$, set $\uvec_j=\onevec_{d_j}$ for all $j\neq r$, in which case $C_v^{\text{FGMVC}}(\uvec)=\prod_{i=1}^{d_r}u_{ri}$ is the independence copula.
 


To reduce the proliferation of parameters as $d$ increases, only pairwise (quadratic) interaction terms are considered. Thus, the vector copula is given by redefining
  \[
  \mathcal A_v=\left\{A\subseteq\mathcal I: \text{contains {\bf exactly two elements}, each {\bf from a different vector block}}\right\}.
  \]
  It is straightforward to show that the vector copula is then
  \begin{equation}\label{eq:fgmvcq}
  C_v^{\text{FGMVC}}(\uvec;\thetavec_M)
  =
  \left( \prod_{j=1}^{k} \prod_{i=1}^{d_j}u_{ji} \right)
  \left[
  1+\sum_{1\leq i<j \leq k} (\onevec_{d_i}-\uvec_i)^\top M_{ij} (\onevec_{d_j}-\uvec_j) \right]\,,
  \end{equation}
  where $M_{ij}\in \mathbb{R}^{d_i\times d_j}$ for $1\leq i<j\leq k$ are parameter matrices with a one-to-one relationship with the unique values of $\theta_A$ for $A\in \mathcal A_v$, and $\thetavec_M=\{M_{ij};1\leq i<j \leq k\}$. The total number of parameters is then $\sum_{1\leq i<j\leq k}d_i d_j$, which is far fewer than in the full FGMVC. A challenge is bounding $\thetavec_M$ to values where~\eqref{eq:fgmvcq} is non-negative, which we address when $k=2$ in Section~\ref{sec:fgm_linking}.

\subsubsection{Discussion of vector copulas}
Enforcing within-block independence makes defining
valid vector copulas more difficult than defining conventional copulas. The implicit copulas of elliptical distributions~\citep{fang2002meta} and skew-elliptical distributions~\citep{joe2019tail} are common choices, with the t~\citep{demarta2005} and skew-t~\citep{deng2024} copulas particularly useful. Apart from the Gaussian case above, extending these to vector copulas is difficult. This is because it requires specification of transport maps that are generalizations of the quantile functions of the multivariate marginals of an elliptical or skew-elliptical distribution, as discussed in~\cite{fan2023vector}. 
Research on the specification of valid vector copulas is limited, and the VD-vine is a new flexible alternative.

\subsection{Learnable multivariate marginals}\label{sec:ltmaps}
Each $S_j^{-1}$ is a bijective transport map, for which we consider a 
flexible form that allows for multivariate dependence and skew in each marginal density $f_j$. It is constructed from a composition of three bijective transformations as follows.
The first transformation maps $\bm{X}_j=(X_{j1},\ldots,X_{jd_j})^\top$ to normality, for which we use the
transformation of~\cite{yeo2000new} (YJ). Let $\etavec_j=(\eta_{j1},\ldots,\eta_{j d_j})^\top$, then the YJ transformation of the standardized random variable $Y_{ji}=(X_{ji}-\mu_{ji})/\sigma_{ji}$ with parameter $\eta_{ji} \in (0,2)$ for each $j,i$ is
\begin{equation}
k_{\eta_{ji}}(y_{ji}) 
=
\begin{cases}
	\displaystyle \frac{(1+y_{ji})^{\eta_{ji}} - 1}{\eta_{ji}}, & y_{ji} \ge 0,\\[1.2ex]
	\displaystyle -\frac{(1-y_{ji})^{2-\eta_{ji}} - 1}{\,2-\eta_{ji}\,}, & y_{ji} < 0.
\end{cases}\label{eq:yj}
\end{equation}
When $\eta_{ji}=1$,~\eqref{eq:yj} is the identity transform, whereas other values of $\eta_{ji}$ account for either positive or negative skew in the distribution of $Y_{ji}$. Setting the diagonal matrix $D_j=\mbox{diag}(\sigma_{j1},\ldots,\sigma_{jd_j})$ and $\muvec_j=(\mu_{j1},\ldots,\mu_{jd_j})^\top$, then the vector-valued transformation applied to $\bm{X}_j$ is
\[
\mK_j=\underline{k_{\etavec_j}}(D_j^{-1}(\bm{X}_j-\muvec_j))=
\left(k_{\eta_{j1}}(Y_{j1}),\ldots,k_{\eta_{jd_j}}(Y_{jd_j})\right)^\top\,.
\]

A second transformation then removes any within block correlation in $\mK_j$ measured by the correlation matrix $\Sigma_j(\varphivec_j)$ which is parameterized in terms of $d_j(d_j-1)/2$ spherical coordinates $\varphivec_j$ that are constrained to ensure $\Sigma_j$ is a correlation matrix; see~\cite{pinheiro1996unconstrained} and~\ref{sm:priors} of the Online Appendix.
 To remove the within-block correlation in the transport, we pre-multiply $\mK_j$ by the spectral matrix operation \((\Sigma_j(\varphivec_j))^{-1/2}  = Q \Lambda^{-1/2} Q^\top\) with spectral decomposition. 

Finally, element-wise application of 
a strictly increasing distribution function $G_\nu$ with parameters $\nuvec$, location zero and unit scale,
maps to uniformity. 
The composition of these three transformations gives 
\begin{equation}
S_j^{-1}(\xvec_j)=\underline{G_\nu}
\left((\Sigma_j(\varphivec_j))^{-1/2}\underline{k_{\etavec_j}}\left(D_j^{-1}(\xvec_j-\muvec_j)\right)
\right)\label{eq:Sj}\,.
\end{equation}
Recall that $\Sigma_j$ is a function of $\varphivec_j$, so that~\eqref{eq:Sj} has 
$(3d_j+d_j(d_j-1)/2+\mbox{dim}(\nuvec))$ parameters $\thetavec_j=\{\nuvec,\muvec_j,\sigmavec_j,\etavec_j,\varphivec_j\}$. 
The bijection $S_j:[0,1]^{d_j} \rightarrow \bbR^{d_j}$ is 
\[
S_j(\uvec_j) = \muvec_j+D_j\underline{k_{\etavec_j}^{-1}}\left(
\Sigma_j(\varphivec_j)^{1/2}\underline{G_\nu}^{-1}(\uvec_j)
\right)\,.
\]

In Sections~\ref{sec:sim} and~\ref{sec:unbalanced} we set $G_\nu=\Phi$ and show how this transformation defines an effective multivariate marginal distribution. We label this distribution as YJN for ``Yeo-Johnson transformed Normal''. It is equivariant to permutation (by which we mean that the distribution is preserved when the data coordinates and the parameters are permuted together) of the elements of $\bm{X}_j$, and when $\etavec_j=\bm{1}$ the YJN is simply a Gaussian. Finally, we stress that other learnable bijective transport maps $S_j^{-1}$ can also be used to define the multivariate marginals.

  \section{Vector Drawable Vine Copula}\label{sec:vdvcop}
\label{sec:vector_dvine_copula}
We now propose a novel $T$-vector copula for $T\geq 2$ that is built from 2-vector copula components.
It generalizes the popular Drawable Vine (D-vine) conventional copula suggested by~\cite{aas2009pair} 
so that we call it a Vector Drawable Vine (VD-vine) copula. Its evaluation
is based on its interpretation as a composition of transport maps.

\subsection{Specification of VD-vine}\label{sec:vcopspecification}
Let $\bm{U}_t\sim {\cal U}_t$ 
denote
a product uniform distribution on $[0,1]^{d_t}$ for $t=1,\ldots,T$. The joint density of $\bm{U}_{1:T}$ \footnote{From Section~\ref{sec:vector_dvine_copula} we adopt standard time series notation for time-indexed variables, where for $a<b$, $\bm{U}_{a:b}$ denotes either the vector $(\bm{U}_a^\top,\ldots,\bm{U}_b^\top)^\top$ or set $\{\bm{U}_a,\ldots,\bm{U}_b\}$ depending on context, whereas if $a>b$, then $\bm{U}_{a:b}\equiv \emptyset$ is the empty set.} is of dimension $d=\sum_{t=1}^T d_t$ and
can be decomposed sequentially as 
\begin{equation}\label{eq:udecomp}
c^\dagger(\uvec_{1:T})=\prod_{t=2}^T f(\uvec_t|\uvec_{1:t-1})\,,
\end{equation}
where $\uvec_t \in [0,1]^{d_t}$ for all $t$, and $f(\uvec_1)=1$. The VD-vine is constructed by assuming, for all $j<t$, a 2-vector copula model
for each conditional distribution $\bm{U}_t,\bm{U}_j|\bm{U}_{j+1:t-1}=\uvec_{j+1:t-1}$ with 2-vector copula function $C_{t,j}$ and density $c_{t,j}$, so that 
\begin{equation}\label{eq:u2vcop}
f(\uvec_t,\uvec_j|\uvec_{j+1:t-1})=c_{t,j} \left( R_{t|j+1}^{-1}(\uvec_t),R_{j|t-1}^{-1}(\uvec_j) \right) f(\uvec_t|\uvec_{j+1:t-1})f(\uvec_j|\uvec_{j+1:t-1})\,.
\end{equation}
This is a $2$-vector copula model defined at~\eqref{eq:vcopdecomp} conditional on the
intervening values $\{\bm{U}_{j+1}=\uvec_{j+1},\ldots,\bm{U}_{t-1}=\uvec_{t-1}\}$. We 
make the simplifying assumption that $C_{t,j}$ is not a function of the intervening values,
which is analogous to the assumption made when constructing a conventional vine copula~\citep{aas2009pair,haff2010simplified}. We call $C_{t,j}$ a ``linking vector copula'', and when $d_j=d_t=1$ it is equivalent to a pair copula used to construct conventional vine copulas.

The measurable transport map $R_{t|j+1}$ pushes forward ${\cal U}_t$ to $\mathbb{P}_{\bm{U}_t|\bm{U}_{j+1|t-1}=\uvec_{j+1|t-1}}$,\footnote{Following convention, $\mathbb{P}_{\bm{X}_1|\bm{X}_2=\xvec_2}$ denotes the probability measure of 
		the conditional distribution $\bm{X}_1|\bm{X}_2=\xvec_2$.} 
and the measurable transport map $R_{j|t-1}$ pushes forward ${\cal U}_j$ to $\mathbb{P}_{\bm{U}_j|\bm{U}_{j+1|t-1}=\uvec_{j+1|t-1}}$. Their inverses  $R_{t|j+1}^{-1}$ and $R_{j|t-1}^{-1}$  specify the corresponding pull-backs. Following 
Remark~\ref{rem:cmono}, $R_{t|j+1}$ and  $R_{j|t-1}$ are not unique here, 
except when $d_t=d_j=1$.
 When $j=t-1$, the conditioning set is defined as empty and the two maps are identity 
transformations with $R_{t|t}(\uvec_t)=\uvec_t$ and $R_{t-1|t-1}(\uvec_{t-1})=\uvec_{t-1}$.
In this special case, because marginally $\bm{U}_t\sim {\cal U}_t$ and  $\bm{U}_j\sim {\cal U}_j$, then the joint density at~\eqref{eq:u2vcop} is exactly equal to $c_{t,j}(\uvec_t,\uvec_j)$.

Re-arranging~\eqref{eq:u2vcop} gives
\begin{equation}\label{eq:ucond}
	f(\uvec_t|\uvec_{j:t-1})=c_{t,j}\left(R_{t|j+1}^{-1}(\uvec_t),R_{j|t-1}^{-1}(\uvec_j)\right)f(\uvec_t|\uvec_{j+1:t-1})\,.
\end{equation}
Setting $j=1$ gives
\[
f(\uvec_t|\uvec_{1:t-1})=c_{t,1}\left(R_{t|2}^{-1}(\uvec_t),R_{1|t-1}^{-1}(\uvec_1)\right)f(\uvec_t|\uvec_{2:t-1})\,.
\] 
Repeated 
application of~\eqref{eq:ucond} with $j=2,3,\ldots,t-1$ leads to the following decomposition for the conditional density
\begin{equation}\label{eq:ucondpdf}
	f(\uvec_t|\uvec_{1:t-1})=\prod_{j=1}^{t-1}c_{t,j}\left(R_{t|j+1}^{-1}(\uvec_t),R_{j|t-1}^{-1}(\uvec_j)\right)\,.
\end{equation}
Substituting~\eqref{eq:ucondpdf} into~\eqref{eq:udecomp} gives the density of the VD-vine copula:
\begin{equation}\label{eq:cvdagger}
c^\dagger(\uvec_{1:T})=\prod_{t=2}^T \prod_{j=1}^{t-1}c_{t,j}\left(R_{t|j+1}^{-1}(\uvec_t),R_{j|t-1}^{-1}(\uvec_j)\right)
=\prod_{t=2}^T \prod_{j=1}^{t-1}c_{t,j}\left(\uvec_{t|j+1},\uvec_{j|t-1}\right)
\end{equation}
which is the product of $(T(T-1)/2)$ linking vector copula densities evaluated at the points 
\[
\uvec_{t|j}:= R^{-1}_{t|j}(\uvec_t)\;\; \mbox{ and }\;\; \uvec_{j|t}:= R^{-1}_{j|t}(\uvec_j)\,,
\]
for $t=2,\ldots,T$ and $j<t$. The VD-vine is also a valid $T$-vector copula, as stated in Lemma~\ref{lem:cvdag} below, with proof in~\ref{sm:grs} of the Online Appendix. 

\begin{lemma}\label{lem:cvdag}
The function $c^\dagger:(0,1)^d  \rightarrow [0,\infty)$ in~\eqref{eq:cvdagger} is the density (defined a.e.) of a valid $T$-vector copula as specified in Definition~\ref{def:kvcop}. 
\end{lemma}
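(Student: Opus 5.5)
The plan is to prove the result by induction on $T$. Rather than working with $c^\dagger$ as a closed formula, I would show that the recursive construction \eqref{eq:udecomp}--\eqref{eq:ucondpdf} produces a well-defined joint density whose $T$ blocks each have marginal law $\mathcal{U}_t$. The key observation is that each factor $f(\uvec_t|\uvec_{1:t-1})$ in \eqref{eq:ucondpdf} is a genuine conditional density of $\bm{U}_t$ given $\bm{U}_{1:t-1}$, and that its $\bm{U}_t$-marginal remains product uniform.

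Concretely, I would establish by induction on $m=t-j$ the following claim. For every $j<t$ and every fixed $\uvec_{j+1:t-1}$, the function
\[
g_{j}(\uvec_t):=\prod_{i=j}^{t-1}c_{t,i}\bigl(R^{-1}_{t|i+1}(\uvec_t),R^{-1}_{i|t-1}(\uvec_i)\bigr)
\]
(with $\uvec_{j}$ also fixed) is a probability density in $\uvec_t$. Moreover, integrating $g_j(\uvec_t)\,f(\uvec_j|\uvec_{j+1:t-1})$ over $\uvec_j$ returns $g_{j+1}(\uvec_t)$, where $g_t\equiv 1$.

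The integration identity follows from a change of variables. Set $\vvec_t=R^{-1}_{t|j+1}(\uvec_t)$ and $\vvec_j=R^{-1}_{j|t-1}(\uvec_j)$. Because $R_{t|j+1}$ pushes $\mathcal{U}_t$ forward to the conditional law with density $g_{j+1}$ (by the inductive hypothesis), and $R_{j|t-1}$ pushes $\mathcal{U}_j$ forward to $f(\uvec_j|\uvec_{j+1:t-1})$, the joint density $c_{t,j}(\vvec_t,\vvec_j)\,g_{j+1}(\uvec_t)\,f(\uvec_j|\cdot)$ is the push-forward of the 2-vector copula $C_{t,j}$. This is exactly the 2-vector copula model of Definition~\ref{def:vcopmod}, applied conditionally. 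By Definition~\ref{def:kvcop}, $C_{t,j}$ has product uniform margins, so integrating out $\uvec_j$ gives $\int c_{t,j}(\vvec_t,\vvec_j)\,d\vvec_j=1$. This leaves $g_{j+1}(\uvec_t)$, and a further integration over $\uvec_t$ gives $1$. Taking $j=1$ shows that $f(\uvec_t|\uvec_{1:t-1})$ in \eqref{eq:ucondpdf} is a nonnegative function integrating to one in $\uvec_t$. Hence $c^\dagger$ is a valid density on $(0,1)^d$, and $\bm{U}_{1:T}$ is well defined by sequential sampling. Nonnegativity and a.e.\ definition are inherited from the $c_{t,j}$ and the measurability of the $R$ maps.

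The substantive step is showing that the marginal of each $\bm{U}_t$ is $\mathcal{U}_t$. Write $\bm{U}_t|\bm{U}_{1:t-1}$ via the factorization above, and integrate out $\uvec_1,\uvec_2,\ldots,\uvec_{t-1}$ in that order. Each integration over $\uvec_j$ uses the identity just proved to peel off the factor $c_{t,j}$ and leave $g_{j+1}$. One must check that the remaining factors, and the conditional densities $f(\uvec_j|\uvec_{j+1:t-1})$ needed at each stage, are themselves given by the lower-order vine structure. Ending with $g_t\equiv1$ shows that $\bm{U}_t$ is product uniform. Finally, integrating $\uvec_{t+1:T}$ out of $c^\dagger$ is trivial, since the later conditionals integrate to one.

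I expect the main obstacle to be the bookkeeping needed to justify that the maps $R_{t|j+1}$ and $R_{j|t-1}$ really push $\mathcal{U}$ forward to conditionals that are consistent with the constructed density, rather than only with some postulated joint law. The point is that these conditionals must be recursively induced by the vine itself, in particular $f(\uvec_j|\uvec_{j+1:t-1})$ for the lower-triangular subvine on $\bm{U}_{j:t-1}$. I would handle this with a nested induction on the size of the window $\bm{U}_{j:t}$. Each window subvine is a VD-vine on fewer blocks, so by the outer induction it is a valid vector copula, and its conditionals are the ones the $R$ maps are defined to transport. With that in place, the change-of-variables step uses only the bijectivity of the $R$ maps and the product uniform margins of each $C_{t,j}$, and no further property of these components.
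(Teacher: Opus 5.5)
Your proposal is correct and follows essentially the paper's argument: each factor in \eqref{eq:ucondpdf} is a normalized conditional density, and marginal uniformity of $\bm{U}_t$ follows by peeling off $c_{t,1},\ldots,c_{t,t-1}$ via $\int f(\uvec_t\mid\uvec_{j:t-1})f(\uvec_j\mid\uvec_{j+1:t-1})\,d\uvec_j=f(\uvec_t\mid\uvec_{j+1:t-1})$; your window induction additionally makes explicit that the $R$ maps are consistent with the law being constructed, which the paper takes for granted. One small slip: integrating out $\uvec_j$ first only shows that the $f(\uvec_j\mid\uvec_{j+1:t-1})$-average of $\int g_j\,d\uvec_t$ equals one, so normalization of $g_j$ in $\uvec_t$ for each fixed $\uvec_j$ instead needs the change of variables $\vvec_t=R^{-1}_{t|j+1}(\uvec_t)$ and the other margin, $\int c_{t,j}(\vvec_t,\vvec_j)\,d\vvec_t=1$.
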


When $d_1=d_2=\cdots=d_T=1$ then~\eqref{eq:cvdagger} is the density of a conventional D-vine. When at least one $d_s\neq d_t$ for some $s,t$, then this decomposition generalizes the D-vine and, as we show later, is applicable to panel data. 

\subsection{Evaluation of VD-vine}
The key to computing $c^\dagger$ at~\eqref{eq:cvdagger} is the evaluation of the arguments ${\cal A}:=\{\uvec_{t|j}, \uvec_{j|t}; t=1,\ldots,T, j\leq t\}$ of the linking vector copulas. Assumption~\ref{asmp:h_fun} below
is helpful in developing an algorithm to do so. 

\begin{assumption}[Conditional Linking Vector Copula Transport]\label{asmp:h_fun}
Let $\bm{V}_t\in [0,1]^{d_t}$ and $\bm{V}_j \in [0,1]^{d_j}$ be jointly distributed with the 2-vector copula
$C_{t,j}$ with density $c_{t,j}$ that is absolutely continuous with $c_{t,j}(\vvec_t,\vvec_j)>0$ for $(\vvec_t,\vvec_j)\in(0,1)^{d_t+d_j}$.
Then assume that the conditional distributions $\bm{V}_t|\bm{V}_j=\vvec_j$ and $\bm{V}_j|\bm{V}_t=\vvec_t$ admit transport maps 
$(H_{t|j}(\cdot|\vvec_j))_{\#} {\mathbb P}_{\bm{V}_t|\bm{V}_j=\vvec_j} ={\cal U}_t$ and
$(H_{j|t}(\cdot|\vvec_t))_{\#} {\mathbb P}_{\bm{V}_j|\bm{V}_t=\vvec_t} ={\cal U}_j$ that push-forward the conditional measure to joint uniform measures.\footnote{We stress that 
	the transport maps $H_{t|j}^{-1}(\cdot|\uvec_j)$ \& $H_{j|t}^{-1}(\cdot|\uvec_t)$ are not equal to $R_{t|j}(\cdot)$ \& $R_{j|t}(\cdot)$ because they transport $\mathcal{U}_t$ \& $\mathcal{U}_j$ to different probability measures. In practice, the maps $R_{t|j}$ and $R_{j|t}$ are not calculated when evaluating the VD-vine copula, and are simply defined to be consistent with the choice of valid forward and backward conditional transports.} 
Also assume that an inverse transport map $H_{t|j}^{-1}(\cdot|\vvec_j)$ can be constructed such that 
\[
H_{t|j}^{-1}\left(H_{t|j}(\vvec_t|\vvec_j)\mid \vvec_j\right)=\vvec_t\,, \mbox{ for }\vvec_t\in[0,1]^{d_t}
\]
for any fixed $\vvec_j$, and an inverse transport map $H_{j|t}^{-1}(\cdot|\vvec_t)$ can be constructed
such that
\[
H_{j|t}^{-1}\left(H_{j|t}(\vvec_j|\vvec_t)\mid \vvec_t\right)=\vvec_j\,, \mbox{ for }\vvec_j\in[0,1]^{d_j}
\]
for any fixed $\vvec_t$.
\end{assumption}

Assumption~\ref{asmp:h_fun} requires the linking vector copula to admit tractable conditional transport maps $H_{t|j}$ and $H_{j|t}$ and their inverse transports, which need not be unique. In general, $H_{t|j}$ and $H_{j|t}$ are defined separately because they are unequal for most choices of vector copulas 
when $d_t>1$ and/or $d_j>1$. 
When $d_j=d_t=1$, so that $c_{t,j}$ is a bivariate conventional pair-copula, then the maps are given by the conditional distribution functions
$H_{t|j}(v_1|v_2)=\frac{\partial}{\partial v_2} C_{t,j}(v_1,v_2)$ and $H_{j|t}(v_2|v_1)=\frac{\partial}{\partial v_1}C_{t,j}(v_1,v_2)$ labeled as `$h$-functions' in~\cite{aas2009pair};  
but when either $d_j>1$ and/or $d_t>1$ they are not.

Theorem~\ref{thm:recursive} below gives recursive relationships between the elements of ${\cal A}$. This generalizes a widely known result due to~\cite{joe1996} for vines to the VD-vine case. It can be used to evaluate
the values of ${\cal A}$ in Algorithm~\ref{thm:recursive} below, which is important because it enables efficient evaluation of the likelihood.

\begin{theorem}[Recursive Evaluation of Arguments]\label{thm:recursive}
	Let Assumption~\ref{asmp:h_fun} be satisfied, the VD-vine density \( c^\dagger(\boldsymbol{u}_{1:T}) \) be specified as in~\eqref{eq:cvdagger}, and that $\uvec_{t|t}\equiv \uvec_t$ for all $t=1,\ldots,T$. 
	Then, for all \( t = 2, \dots, T \) and all \( j < t \), the following recursions hold:
	\begin{enumerate}[(i)]
		\item \text{Forward Conditional Transport Map: }
		$\boldsymbol{u}_{t|j} =  H_{t|j}(\boldsymbol{u}_{t|j+1} \mid \boldsymbol{u}_{j|t-1})$,
		\item \text{Backward Conditional Transport Map: }
		$\boldsymbol{u}_{j|t} = H_{j|t}(\boldsymbol{u}_{j|t-1} \mid \boldsymbol{u}_{t|j+1})$.
	\end{enumerate}
\end{theorem}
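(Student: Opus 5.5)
The proof goes by induction on the gap $t-j$, and at each step uses the defining property of a 2-vector copula model for the conditional pair $(\bm{U}_t,\bm{U}_j)\,|\,\bm{U}_{j+1:t-1}$ in \eqref{eq:u2vcop}. Recall that $\uvec_{t|j}=R_{t|j}^{-1}(\uvec_t)$, and that $R_{t|j}$ is only required to push $\mathcal U_t$ forward to $\mathbb{P}_{\bm{U}_t|\bm{U}_{j:t-1}=\uvec_{j:t-1}}$. By Remark~\ref{rem:cmono} these maps are not unique. The statement is therefore best read as follows: \emph{the maps defined by the recursions (i)--(ii) are valid choices of $R_{t|j}^{-1}$ and $R_{j|t}^{-1}$}, in the sense that they push the relevant conditional law to the product uniform, and with this choice \eqref{eq:cvdagger} holds. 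This matches the footnote to Assumption~\ref{asmp:h_fun}, which says the $R$'s are defined to be consistent with the forward and backward transports. So the goal at each step is to show that the right-hand sides of (i) and (ii) are measurable bijections in $\uvec_t$ (resp.\ $\uvec_j$) with the correct push-forward property.

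The base case is $t-j=1$. Here $R_{t|t}$ and $R_{t-1|t-1}$ are identities, so $\uvec_{t|t}=\uvec_t$ and $\uvec_{t-1|t-1}=\uvec_{t-1}$. The pair $(\bm U_t,\bm U_{t-1})$ then has density exactly $c_{t,t-1}$. By Assumption~\ref{asmp:h_fun}, $H_{t|t-1}(\cdot\mid\uvec_{t-1})$ pushes $\bm U_t\,|\,\bm U_{t-1}=\uvec_{t-1}$ to $\mathcal U_t$. Hence $H_{t|t-1}(\uvec_t\mid\uvec_{t-1|t-1})$ is a valid $R_{t|t-1}^{-1}$, which is (i). Case (ii) follows symmetrically using $H_{t-1|t}$.

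For the inductive step, fix $j<t-1$ and assume (i)--(ii) hold for all pairs with smaller gap. Then $R_{t|j+1}^{-1}$ pushes $\bm U_t\,|\,\bm U_{j+1:t-1}$ to $\mathcal U_t$, and $R_{j|t-1}^{-1}$ pushes $\bm U_j\,|\,\bm U_{j+1:t-1}$ to $\mathcal U_j$. Condition on $\bm U_{j+1:t-1}=\uvec_{j+1:t-1}$ and set $\bm V_t=R_{t|j+1}^{-1}(\bm U_t)$ and $\bm V_j=R_{j|t-1}^{-1}(\bm U_j)$. By the change of variables formula applied to \eqref{eq:u2vcop} (which is the 2-vector copula model \eqref{eq:vcopdecomp} conditional on the intervening values), $(\bm V_t,\bm V_j)$ has density $c_{t,j}$ conditionally, and this density does not depend on $\uvec_{j+1:t-1}$ by the simplifying assumption. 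The next step is the key one. Conditioning additionally on $\bm U_j=\uvec_j$ is equivalent to conditioning on $\bm V_j=\uvec_{j|t-1}$, because $R_{j|t-1}^{-1}$ is a bijection for fixed intervening values. Hence the law of $\bm V_t\,|\,\bm U_{j:t-1}=\uvec_{j:t-1}$ is $\mathbb{P}_{\bm V_t|\bm V_j=\uvec_{j|t-1}}$ under $c_{t,j}$. Assumption~\ref{asmp:h_fun} then says $H_{t|j}(\cdot\mid\uvec_{j|t-1})$ pushes this to $\mathcal U_t$. Composing, the map $\uvec_t\mapsto H_{t|j}(R_{t|j+1}^{-1}(\uvec_t)\mid\uvec_{j|t-1})$ pushes $\mathbb{P}_{\bm U_t|\bm U_{j:t-1}}$ to $\mathcal U_t$. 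It is invertible, with inverse $R_{t|j+1}\circ H_{t|j}^{-1}(\cdot\mid\uvec_{j|t-1})$ by the inverse property in the Assumption. So it is an admissible $R_{t|j}^{-1}$, giving (i). The argument for (ii) is identical with the roles swapped: conditioning on $\bm U_t$, which corresponds to $\bm V_t=\uvec_{t|j+1}$, and using $H_{j|t}$ pushes $\bm U_j\,|\,\bm U_{j+1:t}$ to $\mathcal U_j$.

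The main obstacle is making the conditioning step rigorous, namely that conditioning on $\bm U_j$ may be replaced by conditioning on $\bm V_j$ given the intervening values, and that the resulting maps are well defined a.e. I would handle this at the level of densities. By \eqref{eq:ucond}, the conditional density of $\bm U_t$ given $\uvec_{j:t-1}$ equals $c_{t,j}(R_{t|j+1}^{-1}(\uvec_t),\uvec_{j|t-1})\,f(\uvec_t|\uvec_{j+1:t-1})$. The factor $f(\uvec_t|\uvec_{j+1:t-1})$ is the Jacobian of $R_{t|j+1}^{-1}$ (its pushforward of $\mathcal U_t$ has unit density). Therefore the image density of $\bm V_t$ is $c_{t,j}(\vvec_t,\uvec_{j|t-1})$, which is exactly the conditional copula density of $\bm V_t$ given $\bm V_j=\uvec_{j|t-1}$, since $\bm V_j$ is marginally uniform. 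This identity, together with positivity of $c_{t,j}$ on the open cube, makes the step clean and null sets harmless. As a last check, I would confirm that the resulting choices make the product in \eqref{eq:cvdagger} coincide with the sequential factorization \eqref{eq:ucondpdf}.
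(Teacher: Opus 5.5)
Your proposal is correct and follows essentially the same argument as the paper's proof. Both condition on the intervening blocks and pass to $(\bm V_t,\bm V_j)$ distributed under $C_{t,j}$, replace conditioning on $\bm U_j$ by conditioning on $\bm V_j=\uvec_{j|t-1}$ via bijectivity, and identify $H_{t|j}(\cdot\mid\uvec_{j|t-1})\circ R^{-1}_{t|j+1}$ as $R^{-1}_{t|j}$, with the symmetric step for the backward map. Your explicit induction on the gap $t-j$, the invertibility check, and the density-level verification only make rigorous what the paper compresses into the phrase that the pull-backs are ``those generated by the VDV construction.''
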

 
The algorithm below employs these recursive relationships to evaluate ${\cal A}$. 
The density $c^\dagger$ can then be computed as the product at~\eqref{eq:cvdagger} using
the elements ${\cal A}$ as the linking vector copulas arguments.
 
\begin{algorithm}[H]
	\caption{Recursive Evaluation of ${\cal A}$}\label{alg:recursive}
	\begin{algorithmic}
		\State Input: $\uvec_1,\ldots,\uvec_T$
		\State \text{Step 0: Initialization}
		\For{$t = 1$ \text{to} $T$}
		\State Set $\uvec_{t|t} := \uvec_t$
		\EndFor
		
		
		\State Step 1: Nested Recursion
		\For{$r = 1,\ldots,T-1$}
		\For{$j=1,\ldots,T-r$}
		\State Set $t=r+j$ and compute:
		\[
		\uvec_{t|j} = H_{t|j}(\uvec_{t|j+1} \mid \uvec_{j|t-1}), \quad \mbox{ and }\quad
		\uvec_{j|t} = H_{j|t}(\uvec_{j|t-1} \mid \uvec_{t|j+1})
		\]
		\EndFor
		\EndFor
		
		
		\State Output: ${\cal A}=\{\uvec_{t|j},\uvec_{j|t}; t=1,\ldots,T,\,\, j\leq t\}$
		
	\end{algorithmic}
\end{algorithm}

The ordering (i.e. formation of the loops) for the recursive computations in Algorithm~\ref{alg:recursive} is important for 
three reasons. First, it ensures that the arguments of $H_{t|j}$ and $H_{j|t}$ are available to calculate each forward and backward recursion. Second, it arranges the dependencies such that the inner loop in $j$ can be undertaken in parallel. Third, as discussed further below, it allows for the computations to be easily simplified for Markov processes.   

\subsection{Linking vector copulas}\label{sec:linkvcs}
We use the two vector copulas in Section~\ref{sec:lvcops} with $k=2$ as linking vector copulas in the VD-vine. Key to employing each is deriving a tractable
vector copula parameterization and suitable prior, as discussed below. Appendix~\ref{app:2vcop} specifies the
corresponding conditional transport maps and their inverses required to implement Algorithm~\ref{alg:recursive}.

\subsubsection{GVC Parameterization}

When $k=2$, the vector copula
blocks are of size $d_1$ and $d_2$, and the parameter matrix of the GVC is 
\begin{equation*}
\Omega=
\begin{bmatrix}
I_{d_1} & M^\top\\
M & I_{d_2}\\
\end{bmatrix}
\end{equation*}
The condition that $\Omega\in \mathcal{P}$ is a correlation matrix
is equivalent to the operator norm condition that
$\|M\|_{\mathrm{op}}<1$~\citep[Sec.~7.7]{horn2012matrix}. A convenient one-to-one re-parameterization of $M$ to an unconstrained matrix $B\in \mathbb{R}^{d_2\times d_1}$ that satisfies this condition 
is suggested in the lemma below.

\begin{lemma}\label{lem:free_param}
	Let $\delta>0$ and 
	${\cal M}_\delta=\{M\in\mathbb{R}^{d_2\times d_1}:\|M\|_{\mathrm{op}}<\delta\}$. Then 
	the map \(\Psi_\delta:{\cal M}_\delta\rightarrow\mathbb{R}^{d_2\times d_1}\) defined by
	\[
	B=\Psi_\delta(M)=\frac{1}{\delta}M(I_{d_1}-\delta^{-2}M^\top M)^{-1}
	\]
	is a bijection, and its inverse is
	\[
	\Psi_\delta^{-1}(B)
	=
	2\delta B\bigl(I_{d_1}+(I_{d_1}+4B^\top B)^{1/2}\bigr)^{-1},
	\]
	where \((I_{d_1}+4B^\top B)^{1/2}\) denotes the principal matrix square root.
	In particular, the inverse transformation always satisfies the operator norm condition
	$\|\Psi_\delta^{-1}(B)\|_{\mathrm{op}}<\delta$.
\end{lemma}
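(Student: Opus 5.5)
First reduce to $\delta=1$: writing $M=\delta N$, we have $\Psi_\delta(M)=N(I_{d_1}-N^\top N)^{-1}=\Psi_1(N)$, and $\|M\|_{\mathrm{op}}<\delta$ if and only if $\|N\|_{\mathrm{op}}<1$. So it suffices to prove that $\Psi_1$ is a bijection from the open unit operator-norm ball onto $\mathbb{R}^{d_2\times d_1}$, with inverse $B\mapsto 2B\bigl(I_{d_1}+(I_{d_1}+4B^\top B)^{1/2}\bigr)^{-1}$. The claimed formula for $\Psi_\delta^{-1}$ then follows by multiplying by $\delta$.

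The main tool is the singular value decomposition, which makes everything scalar. Let $N=U\Sigma V^\top$ with $U\in\mathbb{R}^{d_2\times d_2}$ and $V\in\mathbb{R}^{d_1\times d_1}$ orthogonal, and $\Sigma$ rectangular diagonal with entries $s_i\in[0,1)$. Then $N^\top N=V\Sigma^\top\Sigma V^\top$, so $I-N^\top N$ is positive definite and invertible, and
\[
\Psi_1(N)=U\,\Sigma(I-\Sigma^\top\Sigma)^{-1}V^\top .
\]
This is again an SVD, with singular values $g(s_i)$, where $g(s)=s/(1-s^2)$. The scalar map $g$ is a strictly increasing bijection from $[0,1)$ onto $[0,\infty)$. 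Solving $b=s/(1-s^2)$ gives $bs^2+s-b=0$, whose root in $[0,1)$ is
\[
s=\frac{2b}{1+\sqrt{1+4b^2}},
\]
and this root is valid including at $b=0$.

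Next, define $\Phi(B)=2B\bigl(I+(I+4B^\top B)^{1/2}\bigr)^{-1}$ and apply the same device. Write $B=U\,\mathrm{diag}(b_i)\,V^\top$ for an SVD of $B$. Because the principal square root commutes with orthogonal conjugation, $(I+4B^\top B)^{1/2}=V\,\mathrm{diag}\bigl(\sqrt{1+4b_i^2}\bigr)V^\top$. Hence $\Phi(B)=U\,\mathrm{diag}\bigl(g^{-1}(b_i)\bigr)V^\top$, and all its singular values are strictly less than $1$. This shows that $\Phi$ maps into the open unit ball, which gives the final operator-norm claim $\|\Psi_\delta^{-1}(B)\|_{\mathrm{op}}<\delta$.

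It remains to check the two compositions. For $\Phi\circ\Psi_1$: take the SVD of $N$, note that $\Psi_1(N)$ has SVD with the same $U$ and $V$ and singular values $g(s_i)$, and then $\Phi$ returns singular values $g^{-1}(g(s_i))=s_i$, which recovers $N$. The composition $\Psi_1\circ\Phi$ is handled symmetrically.

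The main obstacle is a well-definedness point: the SVD is not unique when singular values repeat or vanish, so $\Psi_1$ and $\Phi$ must be shown not to depend on the choice of decomposition. I would settle this by writing both maps intrinsically as $N\,h(N^\top N)$, where $h$ is a matrix function given by a spectral function of a symmetric matrix. For $\Psi_1$ take $h(x)=(1-x)^{-1}$, and for $\Phi$ take $h(x)=2\bigl(1+\sqrt{1+4x}\bigr)^{-1}$. Spectral functions of symmetric matrices are basis-independent, so the SVD computations above are legitimate. Equivalently, one can avoid the SVD altogether: verify the identity $\Phi(B)^\top\Phi(B)=h_\Phi(B^\top B)^2B^\top B$, and then observe that all the matrices involved are functions of the single symmetric matrix $B^\top B$, so they commute and the compositions reduce to the scalar identity $g\bigl(g^{-1}(b)\bigr)=b$.
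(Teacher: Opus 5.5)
Your proposal is correct and follows essentially the same route as the paper: a singular value decomposition reduces both $\Psi_\delta$ and the claimed inverse to the strictly increasing scalar bijection $s\mapsto s/(1-s^2)$ and its root $2b/(1+\sqrt{1+4b^2})$, with the principal square root diagonalized by the same right singular vectors. Your preliminary rescaling to $\delta=1$ is cosmetic. Checking both compositions $\Phi\circ\Psi_1$ and $\Psi_1\circ\Phi$ makes injectivity a little more explicit than the paper's appeal to ``uniqueness of the singular values.'' Your well-definedness concern is harmless but unnecessary, because both maps are given by intrinsic matrix formulas that hold for any choice of SVD.
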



Setting $\delta=1$ in Lemma~\ref{lem:free_param} provides an unconstrained parameterization $B=M(I_{d_1}-M^\top M)^{-1}$ for GVC. 

\subsubsection{FGMVC Parameterization}\label{sec:fgm_linking}

We also consider the case where FGM vector copulas are used as linking copulas in the VD-vine construction. 
Let $\vvec_j=(v_{j1},\ldots,v_{jd_j})^\top\in[0,1]^{d_j}$ for $j=1,2$, then with $k=2$ 
\[
C^{\text{FGMVC}}_{v}(\vvec_1,\vvec_2;M)
=
\left(\prod_{j=1}^2 \prod_{i=1}^{d_j} v_{ji} \right)
\lrb{
1+(\onevec_{d_2}-\vvec_2)^\top M(\onevec_{d_1}-\vvec_1)
},
\]
with $M\in \mathbb{R}^{d_2\times d_1}$ (so that $\thetavec_M=M_{12}=M^\top$ in~\eqref{eq:fgmvcq}) and the corresponding copula density 
\begin{equation}\label{eq:fgmvc_dens}
c^{\text{FGMVC}}_v(\vvec_1,\vvec_2;M)
=
1+(\onevec_{d_2}-2\vvec_2)^\top M(\onevec_{d_1}-2\vvec_1).
\end{equation}
The admissible parameter space is determined by the non-negativity condition $1+\zvec_2^\top M \zvec_1\geq 0$ for all $\zvec_j\in[-1,1]^{d_j}$. 
A simple sufficient condition is $\|M\|_{\mathrm{op}}< 1/\sqrt{d_1 d_2}$, since $|\zvec_2^\top M\zvec_1|\leq \|\zvec_1\|_2\|M \|_{\mathrm{op}}\|\zvec_2\|_2\leq \sqrt{d_1 d_2}\|M\|_{\mathrm{op}}$. As with the GVC, Lemma~\ref{lem:free_param} can be used to define an unconstrained parameter matrix $B$ that satisfies this operator condition by setting $\delta=1/\sqrt{d_1 d_2}$. Therefore, the parameterization is given by $B=\sqrt{d_1 d_2}M(I_{d_1}-d_1d_2M^\top M)^{-1}\in \mathbb{R}^{d_2\times d_1}$, with inverse $M=\Psi_{(d_1d_2)^{-1/2}}^{-1}(B)$. 

  \section{Multivariate Dynamic Panel Model}\label{sec:panel}
D-vine copulas are popular for capturing serial dependence in univariate time
series~\citep{smith2010modeling} and the proposed VD-vine extends this to capturing serial dependence in multivariate time series. 
It can be used to define a multivariate dynamic panel model that generalizes existing PVAR models~(see~\citealp{Canova2013}) to allow for nonlinear dependence as outlined below. 

\subsection{Multivariate time series}\label{sec:multTS}
Consider a vector stochastic process $\{\bm{X}_t\}$ with $\bm{X}_t\in \mathbb{R}^{d_t}$, which
may differ in composition of elements, so that it is possible that $d_t\neq d_j$ for 
one or more pairs $t\neq j$. 
The VD-vine copula is used to model the multivariate temporal dependence in this process as follows. Following Definition~\ref{def:vcopmod}, for each $t$ consider a transport map $S_t:[0,1]^{d_t}\rightarrow \mathbb{R}^{d_t}$ that pushes
a uniform product measure ${\cal U}_t$ onto that of the marginal of $\bm{X}_t$, so that $(S_t)_{\#}{\cal U}_t=\mathbb{P}_{\bm{X}_t}$. In our work we use a
learnable transformation of the form given in Section~\ref{sec:ltmaps} for each $S_1,\ldots,S_T$. 
Then the joint distribution of $\bm{X}_{1:T}$ is given by the $T$-block VD-vine copula model with density as in Definition~\ref{def:vcopmod}:
\begin{equation}\label{eq:fX1toT}
f_{1:T}(\xvec_{1:T})=c^{\dagger}\left(S_1^{-1}(\xvec_1),\ldots,S_T^{-1}(\xvec_T)\right)\prod_{t=1}^T f_t(\xvec_t)\,,
\end{equation}
where $c^\dagger$ is the VD-vine copula density.
Assuming $S_t$ is a measurable differentiable bijective function, 
then the marginal density of $\bm{X}_t$ is 
$f_t(\bm{x}_t) = \left| \det \nabla S_t^{-1}(\bm{x}_t) \right|$.
Evaluation of the term $c^\dagger$ above  involves computing in order: (i)~$\uvec_t=S_t^{-1}(\xvec_t)$ for $t=1,\ldots,T$, (ii)~the arguments ${\cal A}$ using Algorithm~\ref{alg:recursive}, and finally (iii)~the product
at~\eqref{eq:cvdagger}. 

It also follows from~\eqref{eq:ucondpdf} that the conditional distribution of  $\bm{X}_t|\bm{X}_{1:t-1}=\xvec_{1:t-1}$ has density 
\begin{equation}\label{eq:TScondpdf}
	f(\xvec_t|\xvec_{1:t-1})=\prod_{j=1}^{t-1}c_{t,j}\left(\uvec_{t|j+1},\uvec_{j|t-1}\right)f_t(\xvec_t)\,.
\end{equation}
Note that for all $j\leq t-1$ (i.e. the terms in the product), the first argument of $c_{t,j}$ is $\uvec_{t|j+1}=R_{t|j+1}^{-1}(\uvec_t)=R_{t|j+1}^{-1}(S_t^{-1}(\xvec_t))$ is a function of 
$\xvec_t$. However, the second argument $\uvec_{j|t-1}=R_{j|t-1}^{-1}(\uvec_j)=R_{j|t-1}^{-1}(S_j^{-1}(\xvec_j))$ with $j<t$ is not dependent on $\xvec_t$. A transport map that pushes forward from the distribution with density given at~\eqref{eq:TScondpdf} to the uniform ${\cal U}_t$, can be obtained by repeated application of the forward transport
map in Theorem~\ref{thm:recursive} to obtain the composition map as outlined below.

\begin{lemma}[Composition Forward Transport Map]\label{lem:compforwardmap}
For $j=1,\ldots,t-1$, let $H_{t|j}(\cdot)$ be shorthand for the forward transport map $H_{t|j}(\cdot|\uvec_{j|t-1})$ defined as in Assumption~\ref{asmp:h_fun}, then the composition forward transport maps
\begin{eqnarray}
\left(H_{t|1}\circ H_{t|2}\circ \cdots \circ H_{t|t-1}\right)_{\#}  \mathbb{P}_{\bm{U}_t|\bm{U}_{1:t-1}=\uvec_{1:t-1}} &=&{\cal U}_t\,,\label{eq:ctransportu}\\ 
\left(H_{t|1}\circ H_{t|2}\circ \cdots \circ H_{t|t-1}\circ S_t^{-1}\right)_{\#}  \mathbb{P}_{\bm{X}_t|\bm{X}_{1:t-1}=\xvec_{1:t-1}} &=&{\cal U}_t \,.\label{eq:conditionaltransport}
\end{eqnarray}
Here, the arguments $\uvec_{1|t-1},\ldots,\uvec_{t-1|t-1}$ are dependent on $\xvec_{1:t-1}$ and can be obtained by 
applying Algorithm~\ref{alg:recursive}, but with the outer loop running $r=1,\ldots,t-1$, rather than to $T$.
\end{lemma}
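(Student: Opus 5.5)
The plan is to prove \eqref{eq:ctransportu} by induction on the number of composed maps, peeling off one conditioning variable at a time from the far end, and then to get \eqref{eq:conditionaltransport} by prepending $S_t^{-1}$. For $j=t-1,t-2,\ldots,1$ I will show the statement
\[
\left(H_{t|j}\circ \cdots\circ H_{t|t-1}\right)_{\#}\mathbb{P}_{\bm{U}_t|\bm{U}_{j:t-1}=\uvec_{j:t-1}}=\mathcal U_t,
\]
and more specifically that the partial composition $H_{t|j}\circ\cdots\circ H_{t|t-1}$ sends $\uvec_t$ to $\uvec_{t|j}$. The second claim follows from Theorem~\ref{thm:recursive}(i): starting from $\uvec_{t|t}=\uvec_t$ and applying $\uvec_{t|j}=H_{t|j}(\uvec_{t|j+1}\mid\uvec_{j|t-1})$ repeatedly gives exactly this composition. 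The conditioning arguments $\uvec_{j|t-1}$, $j<t$, depend only on $\uvec_{1:t-1}$. They are generated by Algorithm~\ref{alg:recursive} with the outer loop stopped at $r=t-1$, since every backward quantity $\uvec_{j|s}$ with $s\le t-1$ only involves indices up to $t-1$. This settles the final sentence of the lemma.

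For the inductive step, fix $\uvec_{1:t-1}$ and set $\bm{V}_t:=\uvec_{t|j+1}$, viewed as the image of $\bm{U}_t$ under the composition of $H_{t|j+1},\ldots,H_{t|t-1}$ (for $j=t-1$ this is just $\bm{U}_t$). The induction hypothesis gives $\bm{V}_t\sim\mathcal U_t$ conditional on $\bm{U}_{j+1:t-1}$; by construction this map is a version of $R_{t|j+1}^{-1}$. I then change variables in \eqref{eq:ucond}: conditional on $\bm{U}_{j:t-1}=\uvec_{j:t-1}$, the density of $\bm{U}_t$ is $c_{t,j}(R^{-1}_{t|j+1}(\uvec_t),\uvec_{j|t-1})\,f(\uvec_t|\uvec_{j+1:t-1})$. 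Since $R^{-1}_{t|j+1}$ pushes $f(\cdot|\uvec_{j+1:t-1})$ forward to the uniform density, the Jacobian factor cancels that density. Hence $\bm{V}_t$ has conditional density $c_{t,j}(\vvec_t,\uvec_{j|t-1})$, which is precisely the conditional law of the first block of $C_{t,j}$ given its second block equals $\uvec_{j|t-1}$. The second block's marginal is uniform, so the conditional density equals the joint copula density. Assumption~\ref{asmp:h_fun} then says $H_{t|j}(\cdot\mid\uvec_{j|t-1})$ pushes this law to $\mathcal U_t$, which completes the step. Taking $j=1$ gives \eqref{eq:ctransportu}.

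For \eqref{eq:conditionaltransport}, I use the fact that $S_s^{-1}$ is a bijection for each $s$. Conditioning on $\bm{X}_{1:t-1}=\xvec_{1:t-1}$ is therefore the same as conditioning on $\bm{U}_{1:t-1}=\uvec_{1:t-1}$ with $\uvec_s=S_s^{-1}(\xvec_s)$, and $(S_t^{-1})_\#\mathbb{P}_{\bm{X}_t|\bm{X}_{1:t-1}}=\mathbb{P}_{\bm{U}_t|\bm{U}_{1:t-1}}$ follows from \eqref{eq:fX1toT} and \eqref{eq:TScondpdf}. Composing with \eqref{eq:ctransportu} gives the result. The main obstacle is the change-of-variables step. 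It requires the partial compositions to be bijective, or at least to satisfy the inverse property in Assumption~\ref{asmp:h_fun}, and it requires them to coincide with the maps $R^{-1}_{t|j+1}$ in \eqref{eq:ucond}. I would handle this by stating that the $R$'s are defined as these compositions, as the footnote to Assumption~\ref{asmp:h_fun} allows, and then verifying the pushforward identity measure-theoretically through expectations of test functions, so that no differentiability is needed.
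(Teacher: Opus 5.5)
Your proof is correct. It uses the same ingredients as the paper (Theorem~\ref{thm:recursive}(i), Assumption~\ref{asmp:h_fun} and \eqref{eq:ucond}), but it obtains the push-forward property differently.

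The paper's proof is three lines. Theorem~\ref{thm:recursive}(i) identifies $H_{t|1}\circ\cdots\circ H_{t|t-1}$ with $R^{-1}_{t|1}$. Then \eqref{eq:ctransportu} is read off from the defining property that $R_{t|1}$ pushes $\mathcal{U}_t$ forward to $\mathbb{P}_{\bm{U}_t|\bm{U}_{1:t-1}=\uvec_{1:t-1}}$, implicitly using $R^{-1}_{t|1}\circ R_{t|1}=\mathrm{id}$. The passage to \eqref{eq:conditionaltransport} is the same as yours.

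You instead prove by induction on $j$ that every partial composition is a valid conditional pull-back. Before invoking Assumption~\ref{asmp:h_fun}, you derive from \eqref{eq:ucond} and the induction hypothesis that $R^{-1}_{t|j+1}(\bm{U}_t)$, given $\bm{U}_{j:t-1}=\uvec_{j:t-1}$, has density $c_{t,j}(\cdot,\uvec_{j|t-1})$. This is essentially the argument inside the paper's proof of Theorem~\ref{thm:recursive}, where that conditional-law claim is simply asserted from \eqref{eq:u2vcop}. Your version justifies it explicitly and shows that taking the $R$'s to be these compositions is consistent. The paper adopts this as a convention and gains brevity.

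Your worry about bijectivity is unnecessary in the test-function form. Applying the induction hypothesis to $\vvec\mapsto g(\vvec)\,c_{t,j}(\vvec,\uvec_{j|t-1})$ gives $\mathbb{E}[g(R^{-1}_{t|j+1}(\bm{U}_t))\mid \bm{U}_{j:t-1}=\uvec_{j:t-1}]=\int g(\vvec)\,c_{t,j}(\vvec,\uvec_{j|t-1})\,d\vvec$ directly. No Jacobian and no invertibility of the partial compositions is needed.

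You also justify the lemma's closing sentence about Algorithm~\ref{alg:recursive}, which the paper's proof leaves implicit.
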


Thus, to obtain a draw from the conditional at~\eqref{eq:TScondpdf}, first draw $\uvec_t \sim {\cal U}_t$, then apply the inverse of~\eqref{eq:conditionaltransport}:
\begin{equation}
	\xvec_t= \left\{ \begin{array}{lr} S_t(\uvec_t) &\mbox{ if }t=1\\
	S_t \circ H_{t|t-1}^{-1} \circ H_{t|t-2}^{-1} \circ \cdots \circ H_{t|1}^{-1}(\uvec_t|\uvec_{1|t-1}) &\mbox{ if }t>1 \end{array}\right.\,, \label{eq:simx}
\end{equation}
where each inverse transport map $H_{t|j}^{-1}(\cdot|\uvec_{j|t-1})$ conditions on $\uvec_{j|t-1}$; see Appendix~\ref{app:2vcop} for these inverses for the 2-GVC and 2-FGMVC
linking copulas.

\subsection{Markov and stationary processes}
\subsubsection{Markov process}\label{sec:markovdef}
If the stochastic process is Markov of order $p\geq 1$, then for all $j<t-p$ the linking vector densities $c_{t,j}(\vvec_1,\vvec_2)=1$ (i.e. it is an independence vector copula). 
Thus, if $t_0=\max(1,t-p)$, the conditional
density at~\eqref{eq:TScondpdf} simplifies to
\[
f(\xvec_t|\xvec_{t-1},\ldots,\xvec_1)=\prod_{j=t_0}^{t-1}c_{t,j}\left(\uvec_{t|j+1},\uvec_{j|t-1}\right)f_t(\xvec_t)\,.
\]
When drawing from this conditional note that $H_{t|j}(\uvec_t|\uvec_j)=\uvec_t$ for all $j<t-p$ when evaluating~\eqref{eq:simx}.
Similarly, the lower limit of the inner product in~\eqref{eq:cvdagger} is $j=t_0$, which greatly reduces the number of linking copulas that need evaluating to compute $c^\dagger$ when $p<<T$. 

\subsubsection{Stationary process} 
When the vector composition is constant over time, the VD-vine can be constrained to define multivariate strictly stationary processes on both the copula and data scales, as follows.
\begin{lemma}\label{lem:station}
Let $\{{\bm U}_t\}$ and $\{{\bm X}_t\}$ follow the stochastic processes with densities at~\eqref{eq:cvdagger} and~\eqref{eq:fX1toT} for observations at times $1,2,\ldots,T$, and fixed composition so that $d_t=d_{t'}$ for all $t,t'$. Also, let Assumption~\ref{asmp:h_fun} hold and ${\cal L}_{t,j}:=(C_{t,j},H_{t|j},H_{j|t},H_{t|j}^{-1},H_{j|t}^{-1})$ define the orientated linking copula specification. Then
\begin{itemize}
	\item[(i)] 
 the process $\{{\bm U}_t\}$ is strictly stationary if for all $\ell=1,\ldots,T-1$, the linking specifications ${\cal L}_{t,t-\ell}={\cal L}_{t',t'-\ell}$ for all $\ell+1\leq t,t'\leq T$, 
 \item[(ii)] the process $\{{\bm X}_t\}$ is also strictly stationary if both~(i) holds and the transport maps $S_t=S_{t'}$ for all $t,t'$.  
 \end{itemize}
\end{lemma}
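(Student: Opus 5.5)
The plan is to show that, for every window length $m$ and every shift $s$ with $1\leq s$ and $s+m\leq T$, the joint density of $\bm{U}_{s+1:s+m}$ equals that of $\bm{U}_{1:m}$. Write the marginal of a window through the sequential factorization~\eqref{eq:udecomp} restricted to the window: $f(\uvec_{s+1:s+m})=\prod_{t=s+2}^{s+m} f(\uvec_t|\uvec_{s+1:t-1})$. Applying~\eqref{eq:ucond} repeatedly, exactly as in the derivation of~\eqref{eq:ucondpdf} but stopping at $j=s+1$, gives
\[
f(\uvec_t|\uvec_{s+1:t-1})=\prod_{j=s+1}^{t-1}c_{t,j}\bigl(\uvec_{t|j+1},\uvec_{j|t-1}\bigr),
\]
where $f(\uvec_t)=1$ because $\bm{U}_t$ is marginally ${\cal U}_t$ by Lemma~\ref{lem:cvdag}. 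The key observation is that the arguments $\uvec_{t|j+1},\uvec_{j|t-1}$ for $s+1\leq j<t\leq s+m$ depend only on $\uvec_{s+1:s+m}$. By Theorem~\ref{thm:recursive}, they are produced by Algorithm~\ref{alg:recursive} using only the maps $H_{t|j},H_{j|t}$ with indices inside the window, starting from $\uvec_{t|t}=\uvec_t$.

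Next I will prove by induction on the lag $r=t-j$ that, with the reindexing $\uvec'_{i}:=\uvec_{s+i}$, the computed arguments satisfy
\[
\uvec_{s+i|s+i'}=\uvec'_{i|i'}
\]
for all $1\leq i,i'\leq m$. Here the right side is what Algorithm~\ref{alg:recursive} outputs on the window $(\uvec'_1,\ldots,\uvec'_m)$ using the specifications ${\cal L}_{i,i'}$. The base case $r=0$ is trivial. For the inductive step, the recursion $\uvec_{t|j}=H_{t|j}(\uvec_{t|j+1}\mid\uvec_{j|t-1})$ has inputs at lag $r-1$, and the hypothesis ${\cal L}_{t,t-r}={\cal L}_{t-s,t-s-r}$ says the maps $H_{t|j}$ and $H_{j|t}$ coincide. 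Hence the outputs coincide. The same argument with $c_{t,j}=c_{t-s,j-s}$, which holds because $C_{t,j}$ is part of ${\cal L}_{t,j}$ and the dimensions agree, shows that the window density equals the density of $\bm{U}_{1:m}$ evaluated at $\uvec'_{1:m}$. This proves (i).

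For (ii), note that $\bm{X}_t=S_t(\bm{U}_t)$ componentwise, so $\bm{X}_{s+1:s+m}=(S_{s+1}(\bm{U}_{s+1}),\ldots,S_{s+m}(\bm{U}_{s+m}))$. With $S_t\equiv S$, this is the same measurable map applied to equally distributed vectors, so it has the same law as $\bm{X}_{1:m}$. Equivalently, use~\eqref{eq:fX1toT} restricted to the window.

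The main obstacle is the first step. I need to justify that the marginal of a sub-window of the VD-vine is again a VD-vine whose conditional arguments are built only from in-window variables. In particular, $R_{t|j+1}$ and $R_{j|t-1}$ must be tied to the in-window recursion, not to earlier variables. I will handle this by noting that~\eqref{eq:u2vcop} already conditions only on $\bm{U}_{j+1:t-1}$, and that the $R$ maps are defined consistently with the $H$ recursion (see the footnote to Assumption~\ref{asmp:h_fun}). I will also check that integrating out $\uvec_{1:s}$ and $\uvec_{s+m+1:T}$ is consistent with the sequential factorization, which holds since each factor in~\eqref{eq:udecomp} is a genuine conditional density.
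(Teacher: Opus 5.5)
Your proposal is correct and follows essentially the same route as the paper. Like the paper, you express the law of a consecutive window as the product of in-window linking copula densities, with arguments generated only by the in-window forward and backward recursions. You then use lag-invariance of ${\cal L}_{t,t-\ell}$ to show these factors are unchanged under the shift $t\mapsto t-s$, and push forward by the common $S$ for part (ii).

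There are two minor differences. First, you get the window marginal directly from the chain rule and~\eqref{eq:ucond} truncated at $j=s+1$, whereas the paper successively integrates out endpoint blocks. Second, you make the re-indexing explicit by induction on the lag. You should also add the paper's one-line remark that any finite sub-collection $(\bm{U}_{t_1},\ldots,\bm{U}_{t_k})$ is a marginal of a consecutive window; that is what completes the strict stationarity claim.
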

 Lemma~\ref{lem:station} extends the stationary D-vine of~\cite{smith2015IJF} and the translation-invariant stationary vine framework of~\cite{NaglerKruegerMin2022}. For vectors with composition that changes over time, multivariate stationarity is difficult to define and enforce without additional model structure.

\subsubsection{Connection to partial correlation matrices} Let all linking copulas in a stationary VD-vine be 2-GVCs, and consider the latent Gaussian stochastic process $\{\bm{Z}_t\}$ where $\bm{Z}_t=\underline{\Phi}^{-1}(\bm{U}_t)$. In this case, the parameter matrix $M_{t,t-\ell}$ of the lag-$\ell$ linking copula is the lag-$\ell$ partial correlation matrix of $\{\bm{Z}_t\}$ as discussed by~\cite{AnsleyKohn1986} and~\cite{Heaps2023}. 
Finally, for the case where $d_1=d_2=\ldots=d_T=1$, the matrix $M_{t,t-\ell}$ is the scalar partial
correlation for a time series as noted by~\cite{smith2010modeling}.

  \section{Bayesian Inference}\label{sec:vb}
Exact posterior inference can be computed for conventional D-vine copula models using 
Markov chain Monte Carlo (MCMC) methods as discussed in~\cite{MinCzado2010,MinCzado2011} and~\cite{smith2010modeling}. However, for large vine copula models these can be slow, so as a scalable alternative, variational inference (VI) methods have been used for vine copula models by~\cite{LoaizaMayaSmith2019} and \cite{KejzlarMaiti2022}.
Exact posterior inference can also be even slower for the VD-vine copula model proposed here because
the parameters $(\thetavec,\psivec)$ can be high dimensional
(e.g. $(\#(\thetavec),\#(\psivec))=(179,598)$ in the HILDA panel application). 
Therefore, we also employ VI to compute an approximate posterior for the VD-vine copula model as now outlined. 

\subsection{Variational Inference}
VI methods are popular for computing Bayesian inference for large stochastic
models for which it is difficult to compute the exact posterior distribution~\citep{blei2017variational,Zhang2019AdvancesVI}. Let $\mathcal{D}$ denote the observed
data, $p(\mathcal{D}|\thetavec,\psivec)$ the likelihood, and $p(\thetavec,\psivec)$ the prior. Then VI approximates the posterior distribution $p(\thetavec,\psivec|\mathcal{D})\propto p(\mathcal{D}|\thetavec,\psivec)p(\thetavec,\psivec):=h(\thetavec,\psivec)$ using a 
 density $q_\lambda(\thetavec,\psivec)\in Q$ in a variational family $Q$ indexed by 
 $\lambdavec$ that are called the variational parameters. The idea is to identify a value of $\lambdavec$ so that $q_\lambda$ is close to $p(\thetavec,\psivec|\mathcal{D})$, with distance usually measured by the Kullback-Leibler divergence (KLD). It is straightforward to show that minimizing the KLD is equivalent to maximizing the Evidence Lower Bound (ELBO) function as follows:
 \begin{equation}
 	\lambdavec^\star = \argmax_{\lambdavec} \mathbb{E}_{q_\lambda}\left[\log h(\thetavec,\psivec)-\log q_\lambda(\thetavec,\psivec)\right]\,.\label{eq:elbo}
 \end{equation}
 Key to effective VI is the choice of a family $Q$ with density $q_\lambda$ which is sufficiently expressive, and amenable to fast implementation of optimization methods
 to solve~\eqref{eq:elbo}. 
 Here, we follow~\cite{ong2018gaussian} and employ a Gaussian approximation with a factor decomposition, which satisfies both requirements. This is a popular choice and these authors give details on the efficient implementation of VI, which we use here and do not describe. Our empirical work uses five factors in the approximation, which these authors found to be sufficient for accurate approximations to a variety of posteriors in practice.
 Alternative variational approximations, 
 such as normalizing flows or even other copula models, can also be used here. Code to implement these methods is widely available.

\subsection{Longitudinal Likelihood}
Multivariate panel data consist of longitudinal observations on 
a multivariate time series over units, such as individuals or households. Define $\xvec_{tj}^i$ as the observation of variable $j\in \{1,2,\ldots,d_t\}$ at time point $t\in\{1,2,\ldots,T\}$ for unit $i\in\{1,\ldots,n\}$, and let
$\xvec_{t}^i:=(\xvec_{t1}^i,\ldots,\xvec_{td_t}^i)^\top$, $\xvec_{s:t}^i=\{\xvec_{s}^i,\xvec_{s+1}^i,\ldots,\xvec_{t}^i\}$ for $s<t$. Denote the parameters of each
learnable marginal transformation $S_j$ as $\thetavec_j$, and set $\thetavec=(\thetavec_1^\top,\ldots,\thetavec_T)^\top$. Denote the parameters of each linking vector copula $c_{t,j}$ as 
$\psivec_{t,j}$, and let $\psivec$ be the vector comprising the parameters $\{\psivec_{t,j};t=2,\ldots,T,\; j<t\}$ of all linking vector copulas. 
Following the same convention of denoting with superscript ``$i$'' quantities that correspond to a longitudinal observation, then
the likelihood is
$p(\mathcal{D}|\thetavec,\psivec)=\prod_{i=1}^n {\mathcal L}^i(\xvec^i_{1:T};\thetavec,\psivec)$ with
\begin{equation}
	 {\mathcal L}^i(\xvec^i_{1:T};\thetavec,\psivec)= c^\dagger(\uvec_{1:T}^i;\psivec)\prod_{t=1}^T f(\xvec_{t}^i;\thetavec_t)=
	\prod_{t=2}^T\prod_{j<t}c_{t,j} \lrp{\uvec^i_{t|j+1},\uvec^i_{j|t-1};\psivec_{t,j}}\prod_{t=1}^T f(\xvec_{t}^i;\thetavec_t)\,.\label{eq:longlike}
\end{equation}
Here, each set of arguments ${\cal A}^i:=\{\uvec_{t|j}^i,\uvec_{j|t}^i;t=1,\ldots,T,j<t\}$ is computed using Algorithm~\ref{alg:recursive} applied to $\uvec^i_{1:T}=(S_1^{-1}(\xvec_{1}^i;\thetavec_1)^\top,\ldots,S_T^{-1}(\xvec_{T}^i;\thetavec_T)^\top)^\top$. 
To implement gradient-based estimation methods we compute the gradient of the log posterior $\log p(\thetavec,\psivec|\mathcal{D})= \log h(\thetavec,\psivec) + {\cal C}=\sum_i\log{\mathcal L}^i(\xvec^i_{1:T};\thetavec,\psivec)+\log p(\thetavec,\psivec)+{\cal C}$ up to constant ${\cal C}$ using 
automatic differentiation in PyTorch.

\subsection{Priors}
\label{sec:priors}

\subsubsection{Prior for each linking vector copula}
For both the 2-GVC and 2-FGMVC linking copulas, dependence is parameterized by a matrix $M$, subject to the restriction $\lVert M\rVert_{\mathrm{op}}<\delta$ with $\delta=1$ for the GVC and $\delta=(d_1d_2)^{-1/2}$ for the FGMVC. Lemma~\ref{lem:free_param} defines a unconstrained matrix \(B\) that satisfies this constraint, for which we use the Gaussian prior 
$\vecc(B) \sim \calN(0,\tau^2 I_{d_1d_2})$ with $\tau^2=25$ in our empirical work. It is straightforward in our framework to also treat $\tau^2$ as a hyperprior, or 
consider some other conditionally Gaussian regularization priors such as the horseshoe~\citep{carvalho2010horseshoe}.

For the 2-GVC, the implied prior on $M$ is 
\begin{equation*}
	p_M(\vecc M) =\left\{  
	\begin{array}{ll}
		\phi\lrp{\vecc \lrp{M(I - M^\top M)^{-1} },\zerovec,\tau^2 I_{d_1d_2} }\,|J(M)| &\mbox{ if } \|M\|_{\mathrm{op}}<1 \\
		0 &\mbox{ otherwise}\,,
	\end{array}
	\right.
\end{equation*}
where \(\phi(\,\cdot\,;\boldsymbol{\mu},\Sigma)\) denote the Gaussian density function with mean $\muvec$ and covariance matrix $\Sigma$, and $|J(M)|$ is the determinant of the Jacobian term. 
Let \(\gamma_1,\dots,\gamma_r\) denote the singular values of \(M\), where
$r=\min(d_1,d_2)$, then when \(\|M\|_{\mathrm{op}}<1\) this determinant can be computed analytically as
\[|J(M)|= \left| \det\!\left(
\frac{\partial \,\vecc B}{\partial \,\vecc M^\top} \right)
\right|
=
\prod_{i=1}^r
\frac{1+\gamma_i^2}{(1-\gamma_i^2)^{d_1+d_2}}
\;\prod_{1\le i<j\le r}(1-\gamma_i^2\gamma_j^2).
\]
The behavior of this prior on $M$ differs markedly across its two spectral regimes. 
When all singular values of $M$ are close to zero the prior density is smoothly equivalent to a Gaussian in $M$. In contrast, when $\|M\|_{\mathrm{op}} $ approaches boundary \(\delta\), the prior strongly suppresses mass near the boundary of the admissible space. Under mild regularity of the likelihood, this behavior discourages posterior concentration near this boundary and contributes to numerical stability in estimation. An analysis of the implied prior on $M$ for the 2-FGMVC is similar. In either case, evaluation 
of this prior or the posterior of $M$ is unnecessary for inference, although evaluation of $M=\Psi^{-1}_\delta(B)$ is required.

\subsubsection{Prior for each marginal}
Because we set $G_\nu = \Phi$, the parameters of each YJN marginal are \(\thetavec_j=(\muvec_j^\top,\sigmavec_j^\top,\etavec_j^\top,\varphivec_j^\top)^\top\). 
To facilitate VI these are transformed to unconstrained parameterizations as follows. 
We employ the logarithm of each element of $\sigmavec_j$, transform each element of $\etavec_j$ from $(\epsilon_\eta,2-\epsilon_\eta)$ using a scaled logit transform, and each hyper-spherical
angle from $(\epsilon_\varphi,\pi-\epsilon_\varphi)$ using a scaled logit transform; $\muvec_j$ is not transformed. 
The small numbers $\epsilon_\eta=10^{-3}$ and $\epsilon_\varphi=0.03$ are set to ensure 
numerical stability.
Independent proper priors for each scalar element of $\thetavec_j$ are used. The priors on the transformed parameters in our empirical analysis are all zero mean normals, with a standard deviation of 5 for the elements of $\muvec_j$, 0.5 for the elements of $\sigmavec_j$, and unity for the elements of $\etavec_j$ and $\varphivec_j$.

\subsection{Predictive Simulation}\label{sec:predsim}
Panel wave predictions at origin $t$ for $h\geq 1$ steps ahead with $t+h\leq T$ are constructed via Monte Carlo simulation from the joint predictive distribution on the copula scale $\bm{U}_{t+1:t+h}|\bm{U}_{1:t}=\uvec_{1:t}$. The marginal predictive distribution of $\bm{U}_{t+h}|\bm{U}_{1:t}=\uvec_{1:t}$ is obtained by
retaining the draws of $\bm{U}_{t+h}$. 
 From~\eqref{eq:udecomp}, simulation from the joint corresponds to generating sequentially from each 
conditional  $\bm{U}_{s}|\bm{U}_{1:s-1}$ for $s=t+1,\ldots,t+h$.
At each step $s$, condition on a realized path
\(\uvec_{1:s-1}^{\star}\), where \(\uvec_j^{\star}=\uvec_j\) for
\(j\leq t\) and \(\uvec_j^{\star}\) is simulated for \(j>t\). By
Lemma~\ref{lem:compforwardmap}, with
\(H_{s|j}(\cdot)\) denoting
\(H_{s|j}(\cdot\mid\uvec_{j|s-1}^{\star})\),
\[
\left(
H_{s|1}\circ H_{s|2}\circ\cdots\circ H_{s|s-1}
\right)_{\#}
\mathbb{P}_{\bm{U}_s\mid \bm{U}_{1:s-1}=\uvec_{1:s-1}^{\star}}
=
{\cal U}_s .
\]
Hence the inverse composition maps an independent
\(\wvec_s\sim{\cal U}_s\) into a draw from
\(\mathbb{P}_{\bm{U}_s\mid \bm{U}_{1:s-1}=\uvec_{1:s-1}^{\star}}\).

\begin{algorithm}[H]
	\caption{Joint predictive simulation from
		\(\mathbb{P}_{\bm{U}_{t+1:t+h}\mid \bm{U}_{1:t}=\uvec_{1:t}}\)}
	\label{alg:predsimU}
	\begin{algorithmic}
		\State Input: observed history \(\uvec_{1:t}\) and horizon \(h\geq 1\) for $t+h\leq T$.
		
		\State Set \(\uvec_j^{\star}:=\uvec_j\), \(j=1,\ldots,t\).
		\State Apply Algorithm~\ref{alg:recursive} to \(\uvec_{1:t}^{\star}\) to obtain
		\(\uvec_{j|t}^{\star}\), \(j=1,\ldots,t\), with
		\(\uvec_{t|t}^{\star}\equiv \uvec_t^{\star}\).
		
		\For{\(s=t+1,\ldots,t+h\)}
		\State Draw \(\wvec_s\sim{\cal U}_s\) and set
		$\uvec_{s|1}^{\star}:=\wvec_s$.
		\State Invert the composition transport:
		\For{\(j=1,\ldots,s-1\)}
		\[
		\uvec_{s|j+1}^{\star}
		:=
		H_{s|j}^{-1}
		\left(
		\uvec_{s|j}^{\star}\mid \uvec_{j|s-1}^{\star}
		\right).
		\]
		\EndFor
		
		\State Set \(\uvec_s^{\star}:=\uvec_{s|s}^{\star}\).
		
		\State Update the conditioning arguments required for subsequent future blocks:
		\For{\(j=1,\ldots,s-1\)}
		\[
		\uvec_{j|s}^{\star}
		:=
		H_{j|s}
		\left(
		\uvec_{j|s-1}^{\star}\mid \uvec_{s|j+1}^{\star}
		\right).
		\]
		\EndFor
		\EndFor
		
		\State Output: the future path	$\uvec_{t+1:t+h}^{\star}
		=
		(\uvec_{t+1}^{\star},\ldots,\uvec_{t+h}^{\star})$.
	\end{algorithmic}
\end{algorithm}

Algorithm~\ref{alg:predsimU} yields a draw from the joint
predictive distribution
$\bm{U}_{t+1:t+h}\mid \bm{U}_{1:t}=\uvec_{1:t}$. 
The first inner loop applies the inverse of the composition map in
Lemma~\ref{lem:compforwardmap} to produce a draw from
$\mathbb{P}_{\bm{U}_s\mid \bm{U}_{1:s-1}=\uvec_{1:s-1}^{\star}}$, while the second inner loop is the backward recursion in
Theorem~\ref{thm:recursive} that updates
\(\uvec_{j|s}^{\star}\) as needed for simulating later blocks. 
If only the \(h\)-step-ahead predictive distribution is required, retain
\(\uvec_{t+h}^{\star}\) and discard the intermediate simulated blocks.
If a predictive draw is
required on the original data scale, note that 
$(\bm{X}_{t+1:t+h}|\bm{X}_{1:t}=\xvec_{1:t}) \overset{d}{=}
(\bm{X}_{t+1:t+h}|\bm{U}_{1:t}=\uvec_{1:t})$, so that 
$\xvec_s^{\star}=S_s(\uvec_s^{\star})$ for $s=t+1,\ldots,t+h$.

For a Markov process of order \(p\), the same algorithm is used after omitting
the independence links \(j<s-p\), as discussed in
Section~\ref{sec:vector_dvine_copula}. Equivalently, at step $s$, set
$s_0=\max(1,s-p)$, initialize
$\uvec_{s|s_0}^{\star}:=\wvec_s$ 
and run the inverse and update recursions only over
\(j=s_0,\ldots,s-1\).

  \section{Simulation Study}\label{sec:sim}
We undertake a simulation study to compare the VD-vine copula model to a Gaussian PVAR. It shows that
capturing marginal asymmetry and nonlinear serial dependence using the VD-vine copula model
can improve distributional fit greatly. Yet, when data is generated from a PVAR, the VD-vine copula model is almost as accurate.

\subsection{Data Generating Processes}
We consider two data generating processes with $T=15$ waves and $d_1=\ldots=d_T=5$. The first (DGP1) is a VD-vine copula model, where the vector vine is stationary with Markov $p=2$. The linking
vector copulas are 2-GVCs with $(5\times 5)$ parameter matrices; see~\ref{sm:sim} of the Online Appendix. The marginals are YJN distributions with location zero, unit scale, but skew parameters that vary over $t$, with  
\[
\etavec_t =
\begin{cases}
	\lrp{1,1,1,1,1}^{\top}, & t=1,\\
	\lrp{0.60,1.40,0.70,1.30,1}^{\top}, & t\ \text{even},\\
	\lrp{1.35,0.65,1.30,0.70,1}^{\top}, & t\ \text{odd}, t\geq 3.
\end{cases}
\]
Therefore, while $\{\bm{U}_t\}$ is a stationary process,
$\{\bm{X}_t\}$ is not because the marginals change over time. 

The second data generating process (DGP2) is identical to DGP1 except that the multivariate
marginals have $\etavec_t=(1,\ldots,1)^\top$ for all $t$. This corresponds to generating from
a stationary PVAR with $p=2$ lags.    
We generate 100 datasets from both DGP1 and DGP2. Each dataset has $n=500$ independent observations, where each observation $\xvec^{i}_{1:T}$ is generated sequentially from~\eqref{eq:simx}.

\subsection{Benchmarks and results}
We fit the following VD-vines with 2-GVC linking vector copulas and a PVAR benchmark to each simulated dataset:
\begin{itemize}
	\item \textbf{VDV}: YJN marginals {\em plus} stationary VD-vine copula
	\item \textbf{VDV(2):} Same as above, but where the VD-vine has known Markov order $p=2$
	\item \textbf{M1}: Multivariate Gaussian marginals (i.e. $\etavec_t=\bm{1}$) {\em plus} stationary VD-vine copula
		\item \textbf{M1(2)}: Same as above, but where the VD-vine has known Markov order $p=2$
	\item \textbf{M2}: Univariate YJN marginals (i.e. where $\Sigma_j=I_5$ for all $j$ in~\eqref{eq:Sj}) {\em plus} stationary D-vine copulas for each variable
		\item \textbf{M2(2)}: Same as above, but where the D-vines have known Markov order $p=2$
		\item \textbf{PVAR(2)}: A stationary Gaussian panel vector autoregression of Markov order 2
\end{itemize}
Model VDV is our proposed model with stationary assumptions on the copula unit hypercube domain, 
but not on the data domain, while VDV(2) is the same but where the correct
Markov order is imposed. M1 and M1(2) are sub-models used to assess if accounting for marginal asymmetry affects the accuracy of the fitted panel model. The sub-models M2 and M2(2) assume independence between the five variables and are included to assess if accounting for multivariate serial dependence affects the accuracy, while PVAR(2) is the benchmark. 
Table~\ref{tab:modprop} reports the number of parameters and summarizes key features of these models.

\begin{table}[htbp]
    \centering
    \caption{Features of the Different Models in the Simulation Study}
    \label{tab:sim_summary}
    \resizebox{\textwidth}{!}{%
    \begin{tabular}{lccccccc}
        \toprule
        \toprule
        Model
        & VDV & VDV(2) & M1 & M1(2) & M2 & M2(2) & PVAR(2) \\
        \midrule
        Number of Parameters
        & $(375,350)$
        & $(375,50)$
        & $(300,350)$
        & $(300,50)$
        & $(225,70)$
        & $(225,10)$
        & $70$ \\
        Asymmetric Marginals
        & $\checkmark$ & $\checkmark$ & & &
          $\checkmark$ & $\checkmark$ & \\
        Time-Varying Marginals
        & $\checkmark$ & $\checkmark$ & $\checkmark$ & $\checkmark$ &
		$\checkmark$ & $\checkmark$ & \\
        Nonlinear Dependence
        & $\checkmark$ & $\checkmark$ & &  &
          $\checkmark$ & $\checkmark$ & \\
        Multivariate Time Series
        & $\checkmark$ & $\checkmark$ & $\checkmark$ & $\checkmark$ &
          & & $\checkmark$ \\
        Markov
        &  & $\checkmark$ & & $\checkmark$ &
           & $\checkmark$ &$\checkmark$ \\
        \bottomrule
        \bottomrule
    \end{tabular}%
    }
    \label{tab:modprop}
    \caption*{Note: For each VD-vine model, the ordered pair gives
    $\lrp{\#\thetavec,\#\psivec}$, where $\thetavec$ are the marginal parameters and $\psivec$ are the stationary VD-vine parameters.  
    For PVAR(2), the reported value is the total number of parameters. A tick indicates that the corresponding feature is accommodated by the model.}
\end{table}
To evaluate the distributional fit we consider three scores for the 1-step ahead
in-sample predictive distributions. Let $F_{t|t-1}(\cdot|\xvec_{1:t-1})$ be the distribution function for the density at~\eqref{eq:TScondpdf}. Then for a scoring function ${\cal S}(F_{\mbox{\tiny pred}},\yvec)$ of distribution $F_{\mbox{\tiny pred}}$ evaluated at observed point $\yvec$ we compute the average score
\begin{equation}
\overline{{\cal S}}=\frac{1}{n} \sum_{i=1}^n\sum_{t=2}^T {\cal S}(F_{t|t-1}(\cdot|\xvec_{1:t-1}^i),\xvec^i_{t})\,.\label{eq:Sbar}
\end{equation}
We consider three different scoring functions that evaluate multivariate density forecast accuracy. The first is
the negative log score (LS), the second is a dimension-normalized energy score (ES) and the 
third is the squared maximum mean discrepancy (MMD); definitions of these scores are given in \ref{sm:evalfit} of the Online Appendix. Table~\ref{tab:simulation_scores} reports the average scores for the models under the two DGPs, and Figure~\ref{fig:simulation_boxplots} provides boxplots of the scores over the 100 replicates from DGP1.

\begin{table}[htbp]
    \centering
    \caption{Simulation Evaluation Scores under DGP1 and DGP2.}
    \label{tab:simulation_scores}
    \resizebox{\textwidth}{!}{%
    \begin{tabular}{lccc@{\hspace{2em}}lccc}
        \toprule
        \toprule
        \multicolumn{4}{c}{Panel A: DGP1} & \multicolumn{4}{c}{Panel B: DGP2} \\
        \cmidrule(lr){1-4}\cmidrule(lr){5-8}
        Model & Log Score & Energy Score & MMD & Model & Log Score & Energy Score & MMD \\
        \midrule
        VDV & 57.9586 & 7.7006 & 3.9366 & VDV & 57.0441 & 7.4110 & 3.9201 \\
        VDV(2) & \textbf{57.6641} & \textbf{7.6676} & \textbf{3.9192} & VDV(2) & 56.7448 & 7.3786 & 3.9023 \\
        M1 & 64.1007 & 7.7581 & 3.9739 & M1 & 56.9143 & 7.4087 & 3.9187 \\
        M1(2) & 63.8511 & 7.7261 & 3.9571 & M1(2) & 56.6339 & 7.3769 & 3.9012 \\
        M2 & 91.9623 & 8.7761 & 4.5210 & M2 & 91.0501 & 8.4610 & 4.5145 \\
        M2(2) & 92.2938 & 8.8195 & 4.5416 & M2(2) & 91.3809 & 8.5039 & 4.5357 \\
        PVAR(2) & 63.9500 & 7.7376 & 3.9584 & PVAR(2) & \textbf{56.2512} & \textbf{7.3597} & \textbf{3.8911} \\
        \bottomrule
        \bottomrule
    \end{tabular}%
    }
    \caption*{Note: Entries report the mean of $\overline{\mathcal S}$ over the 100 replications. Lower values indicate higher accuracy for all three scores, with the lowest value within each DGP in bold. The models are described in the text, with VDV(2) the correct model for DGP1 and PVAR(2) the correct model for DGP2.}
\end{table}

For DGP1, VDV and VDV(2) are the best performing because they nest the correct model. 
The sub-models M1 and M1(2) incorrectly enforce 
symmetry on the marginals, resulting in a reduction in accuracy.
A far greater reduction occurs when the multivariate aspect of the panel time series is ignored in M2 and M2(2). The benchmark PVAR(2) is similar to M1(2), except that it incorrectly enforces identical marginals over waves, resulting in a small reduction in accuracy relative to M1(2). 

For DGP2, the PVAR(2) is the correct model and results in the most accurate fit. However, the flexible VD-vine copula models VDV, VDV(2), M1 and M1(2) which nest the data generating process as a special case are only slightly less accurate. Taken together, the results of this simulation support the
importance of jointly modeling marginal asymmetry and multivariate serial
dependence.

\begin{figure}[thbp]
    \centering
    \includegraphics[width=\linewidth]{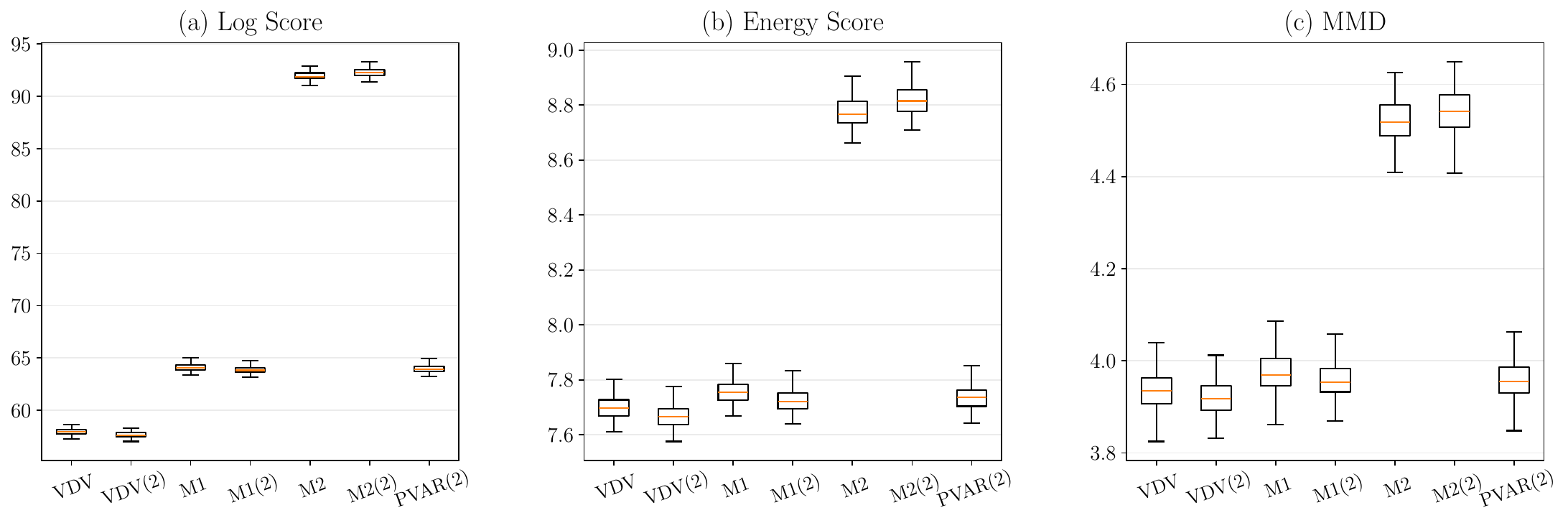}
    \caption{Boxplots of the log score, energy score,
    and squared maximum mean discrepancy across 100 replications for DGP1, with lower values indicating greater accuracy. Equivalent boxplots for 
    DGP2 are found in the Online Appendix.}
    \label{fig:simulation_boxplots}
\end{figure}

\section{Example: Work-Life Conflict and Well-Being}\label{sec:unbalanced}
Following~\cite{Yuetal2025} we study the temporal dependence between 
work-life conflict and subjective measures of well-being. These authors use 
the ``Household, Income and Labour Dynamics in Australia'' (HILDA) survey, which is 
a longitudinal household panel that collects detailed information on economic conditions, labor market outcomes, and individual well-being~~\citep{WatsonWooden2021}. As in~\cite{Yuetal2025} we construct a measure of work-life conflict (WLC), three measures of subjective well-being called Domain Satisfaction (DS), Positive Affect (PA), and reverse-coded Negative Affect (NA), and income (INC) as a control variable; see~\ref{sm:hildaextra} of the Online Appendix for details. We do this for
the \(n = 1093\) individuals observed in each of \(T = 8\) annual survey waves from 2016 to 2023. However, the multivariate outcome varies across waves because WLC is not collected in 2018, 2020 and 2022, so that $(d_1,\ldots,d_{8}) = (5, 5, 4, 5, 4, 5, 4, 5)$. 
 
In their original study~\cite{Yuetal2025} employ a dynamic structural equation model. Our objective is not to replicate their work, but to show how important distributional features that are not captured in their original study are captured by the proposed VD-vine copula model. These include (i)~multivariate asymmetry in the contemporaneous variable distributions, (ii)~nonlinear serial dependence that is not first order Markov, and (iii)~heterogeneity in the dependence structure over wave. We also stress that the original study did not model directly varying response vector composition, whereas the VD-vine does.  

Estimates of the YJN multivariate marginals specified in Section~\ref{sec:ltmaps} are far from Gaussian and exhibit substantial skewness. 
This is seen in Figure~\ref{fig:wave_1_marginals} which plots the 10 bivariate slices of the five-dimensional density for the 2016 wave (i.e. $t=1$).
These closely track the features in the underlying data, illustrating the flexibility of the transport at~\eqref{eq:Sj} in capturing the distribution. 
Table~\ref{tab:hilda_empirical_pair_skewness} reports Mardia's empirical  skewness~\citep{Mardia1970} for the bivariate marginals of all waves, where skew is both significant throughout and changes over waves.

\begin{figure}[htbp]
\centering
\caption{Contours of the fitted marginal density for the 2016 wave}
\resizebox{1.05\textwidth}{!}{%
\begin{tabular}{ccccc}
\includegraphics[width=0.2\linewidth]{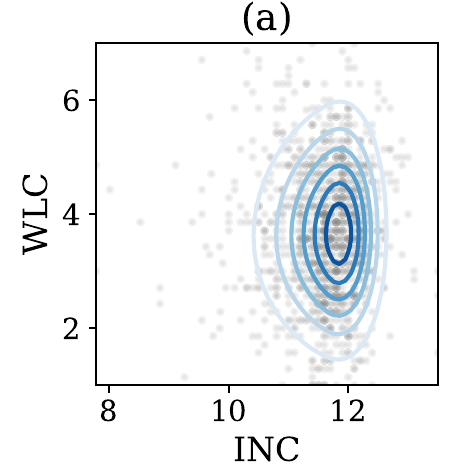} &
\includegraphics[width=0.2\linewidth]{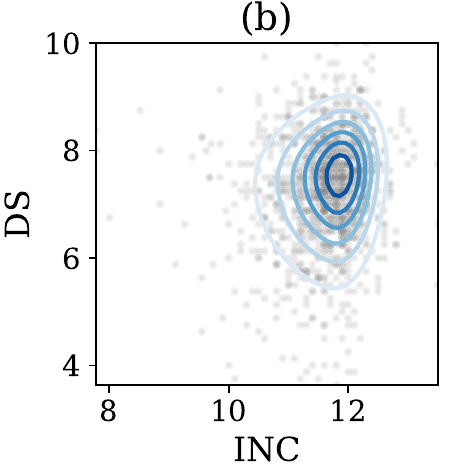} &
\includegraphics[width=0.2\linewidth]{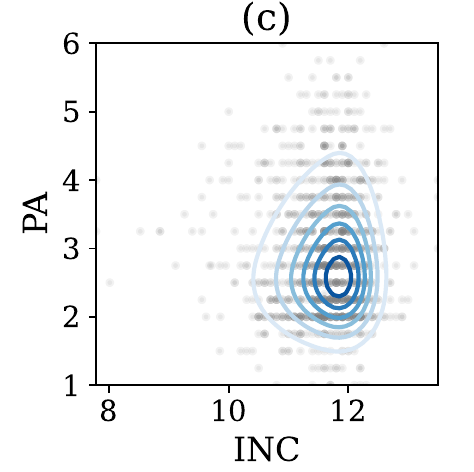} &
\includegraphics[width=0.2\linewidth]{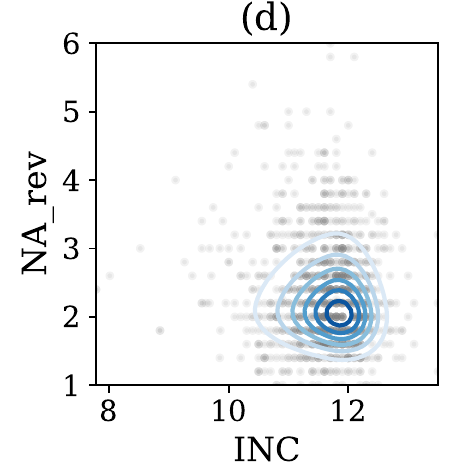} &
\includegraphics[width=0.2\linewidth]{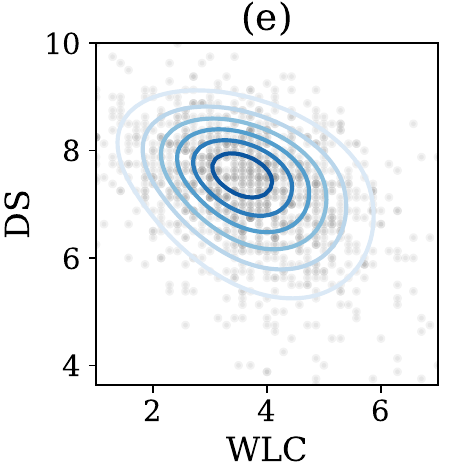}
\\
\includegraphics[width=0.2\linewidth]{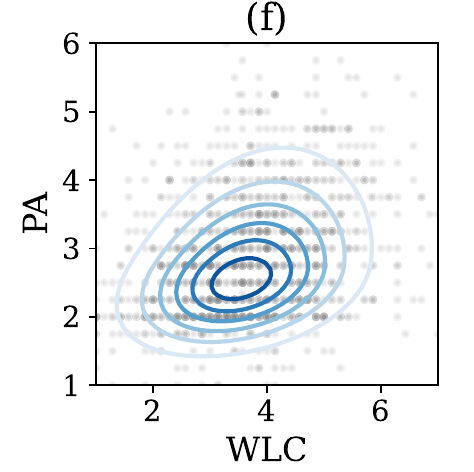} &
\includegraphics[width=0.2\linewidth]{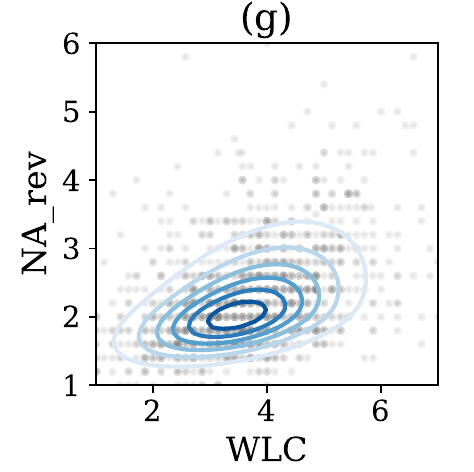} &
\includegraphics[width=0.2\linewidth]{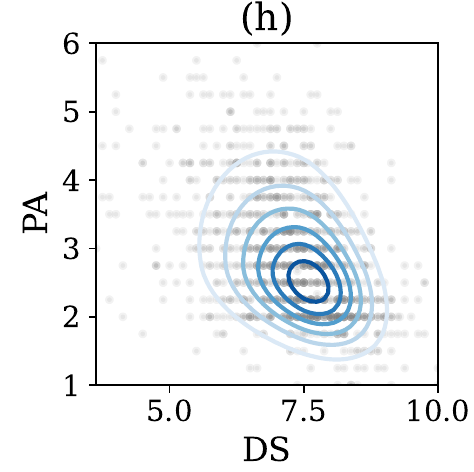} &
\includegraphics[width=0.2\linewidth]{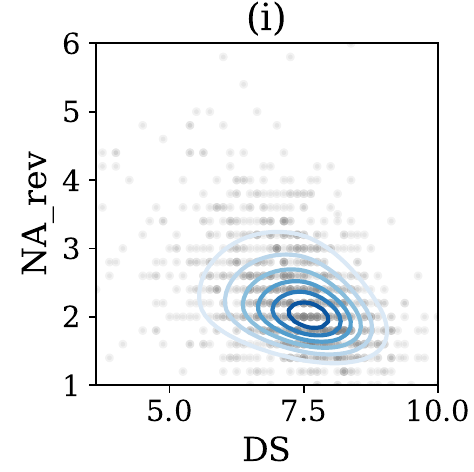} &
\includegraphics[width=0.2\linewidth]{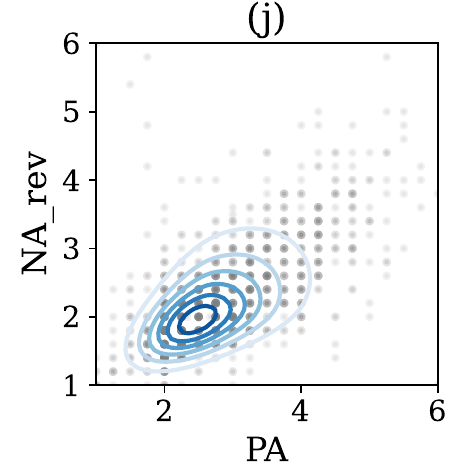}
\end{tabular}

}
\caption*{Note: Contours of all ten bivariate slices of the fitted $5$-dimensional marginal density for the 2016 wave (i.e. $t=1$). Also shown are scatterplots of the observed data.}
\label{fig:wave_1_marginals}
\end{figure}

\begin{table}[htbp]
\centering
\caption{Empirical Mardia skewness of the HILDA bivariate marginals.}
\label{tab:hilda_empirical_pair_skewness}
\resizebox{0.8\linewidth}{!}{
\begin{tabular}{lcccccccc}
\toprule
\toprule
Pair & 2016 & 2017 & 2018 & 2019 & 2020 & 2021 & 2022 & 2023 \\
\midrule
(INC, WLC) & 1.854 & 1.189 &       & 0.776 &       & 0.737 &       & 0.125 \\
(INC, DS)  & 2.202 & 1.546 & 1.704 & 1.069 & 0.908 & 1.178 & 1.564 & 0.410 \\
(INC, PA)  & 2.303 & 1.535 & 1.755 & 1.074 & 0.763 & 0.932 & 1.524 & 0.302 \\
(INC, NA)  & 2.884 & 1.946 & 2.509 & 1.718 & 1.335 & 1.645 & 2.290 & 1.061 \\
(WLC, DS)  & 0.309 & 0.291 &       & 0.359 &       & 0.410 &       & 0.423 \\
(WLC, PA)  & 0.483 & 0.404 &       & 0.356 &       & 0.257 &       & 0.272 \\
(WLC, NA)  & 1.179 & 0.812 &       & 0.990 &       & 0.968 &       & 1.225 \\
(DS, PA)   & 0.860 & 0.729 & 0.741 & 0.724 & 0.645 & 0.551 & 0.372 & 0.547 \\
(DS, NA)   & 1.441 & 1.049 & 1.452 & 1.207 & 1.094 & 1.214 & 1.121 & 1.330 \\
(PA, NA)   & 2.820 & 1.302 & 1.777 & 1.382 & 1.495 & 1.049 & 1.069 & 1.453 \\
\bottomrule
\bottomrule
\end{tabular}
}
\caption*{Note: Entries report Mardia's empirical multivariate skewness for each bivariate pair, with larger values indicating greater asymmetry. WLC was not collected in 2018, 2020, or 2022, so the corresponding cells are blank. Each available pair-year contains $n=1{,}093$ observations. Standard tests show that all skew coefficients are significantly different from zero.}
\end{table}

We fit the full VD-vine copula model and four sub-models below:
\begin{itemize}
	\item \textbf{VDV}: YJN marginals {\em plus} full VD-vine copula\,,
	\item \textbf{M1}: Multivariate Gaussian marginals (i.e. $\etavec=\bm{1}$) {\em plus} full VD-vine copula
	\item \textbf{M2}: Univariate YJN marginals (i.e. where $\Sigma_t=I_{d_t}$ for all $t$ in~\eqref{eq:Sj}) {\em plus} independent D-vine copulas for each variable
	\item \textbf{M3}: YJN marginals {\em plus} independence copula
	\item \textbf{M4}: YJN marginals {\em plus} Markov lag 1 (ie. $p=1$ in Section~\ref{sec:markovdef}) VD-vine copula
\end{itemize}
For these VD-vines (excluding M3) three linking vector copula choices are considered: (i)~2-GVC, (ii)~2-FGMVC and, (iii)~a hybrid case with 2-GVC for $C_{t,t-1}$ for $t=2,\ldots,T$ and 2-FGMVC for the rest.
The sub-models are selected to assess if marginal asymmetry (M1),
multivariate serial dependence (M2), any serial dependence (M3), and lag lengths longer than one (M4) are important features. As an additional benchmark we also employ stationary PVAR(1) and PVAR(2) models with Gaussian disturbances. 

Ten-fold cross-validation (CV) is used to assess predictive accuracy of the models.
For each fold, a model is estimated using 90\% of the individuals, and~\eqref{eq:Sbar} computed for the 10\% validation data.
Figure~\ref{fig:hilda_cv_energy_score_boxplot} gives boxplots of the mean scores over the 10 folds for the energy scoring function. The full VD-vine model with 2-GVC linking vector copulas (which has 777 parameters) is the most accurate,
illustrating the importance of the full set of dependence and marginal distribution features. The dependence structure is particularly important, with 
the M3 sub-models that omit this feature least accurate. Other observations include that (i)~the 2-GVC linking vector copulas outperform both the 2-FGMVC and Hybrid vector copulas, (ii)~the VDV outperforms M1 for each choice of vector linking copula indicating the importance of asymmetry in the marginals, (iii)~the PVAR(1) and PVAR(2) models are clearly dominated, and (iv)~imposing first order Markov dependence 
degrades accuracy. Figures~\ref{fig:hilda_cv_mmd_boxplot} and~\ref{fig:hilda_cv_logscore_boxplot} and Table~\ref{tab:hilda_cv_scores} in the Online Appendix report results for all scoring functions, which confirm these findings.

\begin{figure}[htbp]
    \centering
    \caption{Ten-fold cross-validation boxplots of the normalized energy score for the HILDA application.}
    \includegraphics[width=\textwidth]{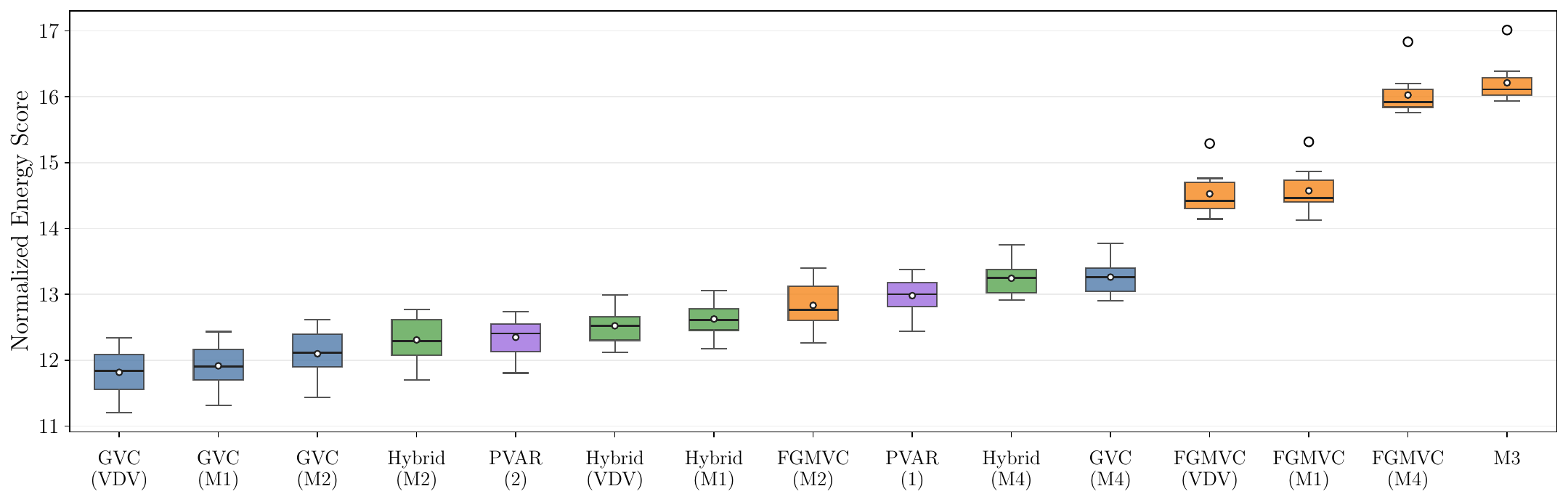}
    \label{fig:hilda_cv_energy_score_boxplot}
\end{figure}

Figure~\ref{fig:wlc_lagged_spearman} in the Online Appendix reports the Spearman correlations between WLC and well-being variables NA, PA, and DS from the fitted VD-vine. For each variable,  contemporaneous, one-year and two-year lagged correlation is plotted. WLC is positively associated with contemporaneous and lagged PA and NA, with correlations 
stable across waves. In contrast, WLC is negatively associated with DS, again with similar magnitudes across years and lags. The results are consistent with the findings of the original study that workplace conflict is strongly related with well-being.

  \section{Discussion}\label{sec:conc}
Conventional vine copulas have been very popular
in multivariate analysis because they are a modular way of defining flexible dependence structures in high dimensions~\citep{Czado2019}. Likelihood-based inference is 
made feasible in~\cite{aas2009pair} using the recursive identities in~\cite{joe1996}. 
A key contribution in our paper is to develop transport map analogues to these identities and show how to evaluate them for linking vector copula generalizations of the conventional pair-copulas. For the VD-vine, we use these to develop an efficient algorithm for the evaluation 
of its density, likelihood and predictive distributions. We establish that the VD-vine
is a valid vector copula, which is useful because the list of tractable vector copulas is currently very limited.

Our focus on the VD-vine is because of its applicability to modeling multivariate longitudinal (panel) data. It allows for 
different multivariate marginals for each wave, which is an important practical modeling consideration. This includes the case where the composition of variables differs over wave, which is commonplace~\citep{HoqueAcarTorabi2023}. 
The simulations show the efficacy of the methodology, with the VD-vine improving predictive accuracy when the marginals are asymmetric and the serial dependence is multivariate, while incurring little loss when the data are generated by a correctly specified Gaussian PVAR.
In the HILDA panel application, the full VD-vine with Gaussian linking vector copulas provides the most accurate cross-validated predictive distributions among the models considered. Comparisons with the nested models suggest that asymmetric margins, multivariate dependence, and dynamics beyond the first lag all contribute to accuracy. 
Some or all of these features are often overlooked in stochastic models for panel data.
We also stress that our use of VD-vine models for panel data is one of the first applications of vector
copulas. Related uses of copulas and transport maps with multivariate margins include~\cite{fan2023vector}, who model cross-sectional dependence in equity returns;~\cite{zhang2026copula}, who use cyclically monotone marginal transports in a Gaussian graphical model for multi-attribute data; and~\cite{fu2025vector}, who employ patterned vector copulas as fixed-form approximations in Bayesian VI.

We conclude with some suggestions for extensions of our work.
Richer linking vector copulas could accommodate stronger asymmetry and tail dependence, although constructing valid families with tractable forward and backward conditional transports remains a challenge. More expressive marginal transports could replace the YJN specification. Shrinkage priors could be employed for the parameters of the linking vector copulas to regularize the posterior in high dimensions. The same vector-node principle can also be applied to other vines, extending its use beyond longitudinal multivariate data. Finally, alternative posterior approximations and exact or hybrid simulation methods could be investigated when greater accuracy in posterior uncertainty is required. 
\noindent
\appendix
\section{2-Vector Copulas}\label{app:2vcop}
This appendix derives conditional transport maps and their 
associated inverses
for the 2-GVC and the 2-FGMVC vector copulas specified in Section~\ref{sec:linkvcs}.

\subsection{2-GVC}\label{app:2GVC}
Let $(\bm{V}_1^\top, \bm{V}_2^\top)^\top \sim C_v^{\text{GVC}}(\cdot; \Omega)$, where the vector copula is derived from a joint Gaussian distribution:
\[
\begin{bmatrix}
	\boldsymbol{Z}_1 \\
	\boldsymbol{Z}_2
\end{bmatrix}
\sim \mathcal{N}(\boldsymbol{0}, \Omega),
\quad \text{with} \quad
\Omega =
\begin{bmatrix}
	I_{d_1} & M^\top \\
	M & I_{d_2}
\end{bmatrix},
\]
with $M \in \bbR^{d_2 \times d_1}$ and $||M||_{op}<1$. Define $\boldsymbol{Z}_j = \underline{\Phi_1}^{-1}(\bm{V}_j)$ and $\zvec_j=\underline{\Phi_1}^{-1}(\vvec_j)$ for $j = 1,2$.
Then the conditional distribution satisfies
$\boldsymbol{Z}_1 | \boldsymbol{Z}_2 = \boldsymbol{z}_2 \sim \mathcal{N}(M^\top \boldsymbol{z}_2, \, \Sigma_{1|2})$
with $\Sigma_{1|2}= I_{d_1} - M^\top M$ positive definite. 
Denote the principal symmetric square root\footnote{Let the positive definite matrix $A$ have spectral decomposition $A=Q\Lambda Q^\top$, then the principal symmetric square root is $A^{1/2}=Q\Lambda^{1/2} Q^\top$ with $\Lambda^{1/2}$ denoting the diagonal matrix with the square roots of the elements of $\Lambda$.} of $\Sigma_{1|2}$ as $\Sigma_{1|2}^{1/2}$, and let $\widetilde{\bm Z}_j\sim \mathcal{N}(\bm{0},I_{d_j})$ for $j=1,2$. Adopting standard  notation, the linear map $T_{1|2}^Z(\vvec_1|\vvec_2)=\Sigma_{1|2}^{-1/2}(\zvec_1-M^\top \zvec_2)$ is the push-forward $T_{1|2}^Z(\cdot|\vvec_2)_{\#} \mathbb{P}_{{\bm Z}_1|{\bm Z}_2=\zvec_2}=\mathbb{P}_{\widetilde{\bm{Z}}_1}$. 
In addition, 
the push-forwards of the element-wise maps $\left(\underline{\Phi_1}^{-1}\right)_{\#}\mathbb{P}_{\bm{V}_1|\bm{V}_2=\vvec_2}= \mathbb{P}_{\bm{Z}_1|\bm{Z}_2=\zvec_2}$ and $\left(\underline{\Phi_1}\right)_{\#}\mathbb{P}_{\widetilde{\bm{Z}}_1}=\mathcal{U}_1$, where $\mathcal{U}_1$ is the uniform measure on $[0,1]^{d_1}$.
The composition of these three maps defines the forward conditional transport map:
\[
\vvec_{1|2}=H_{1|2}(\vvec_1 | \vvec_2) =\underline{\Phi_1}\circ T_{1|2}^Z(\cdot|\vvec_2)\circ\underline{\Phi_1}^{-1}(\vvec_1)= \underline{\Phi_{1}}\left(\Sigma_{1|2}^{-1/2}\left(\underline{\Phi_1}^{-1}(\vvec_1)-M^\top \boldsymbol{z}_2\right)\right),
\]
which is the required push-forward $(H_{1|2}(\cdot|\vvec_2))_{\#}\mathbb{P}_{\bm{V}_1|\bm{V}_2=\vvec_2}=\mathcal{U}_1$.
The
inverse conditional transport (pull-back transport) is
\[
\vvec_1=H_{1|2}^{-1}(\vvec_{1|2}|\vvec_2) = \underline{\Phi_1}(M^\top \boldsymbol{z}_2 + \Sigma_{1|2}^{1/2} \cdot \underline{\Phi_1}^{-1}(\vvec_{1|2}))\,.
\] 

Similarly,  
$\boldsymbol{Z}_2 | \boldsymbol{Z}_1 = \boldsymbol{z}_1 \sim \mathcal{N}(M \boldsymbol{z}_1, \,\Sigma_{2|1})\,,$ with 
$\Sigma_{2|1}=I_{d_2} - M M^\top$ positive definite. Let $\Sigma_{2|1}^{1/2}$ be the principal symmetric square root of $\Sigma_{2|1}$, then by a similar argument the backward conditional transport map is 
\[
\vvec_{2|1}=H_{2|1}(\vvec_2 | \vvec_1) = \underline\Phi_{1}\left(\Sigma_{2|1}^{-1/2}\left(\underline{\Phi_1}^{-1}(\vvec_2)- M\boldsymbol{z}_1\right)\right),
\]
with $\zvec_1=\underline{\Phi_1}^{-1}(\vvec_1)$. Its inverse conditional transport is
\[
H_{2|1}^{-1}\lrp{\vvec_{2|1}|\vvec_1}
=
\underline{\Phi}
\lrp{
	M\zvec_1
	+
	\Sigma_{2|1}^{1/2}
	\underline{\Phi}^{-1}\lrp{\vvec_{2|1}}
}, 
\]

We observe that $T^Z_{1|2}$ is a Brenier transport and is invariant to ordering of the vector. When conjugated by the marginal transports to the copula and original data scales to obtain $H_{1|2}$, it remains a valid transport.
The same arguments apply to $H_{2|1}$. 

\subsection{2-FGMVC}\label{app:2-fgmvc-kr}
Let $(\mV_1^\top,\mV_2^\top)^\top\sim C_{v}^{\mathrm{FGMVC}}(\cdot;M)$, where $M\in\bbR^{d_2\times d_1}$ and $\lVert M\rVert_{\mathrm{op}}<1/\sqrt{d_1d_2}$. Write $M=(M_1,\ldots,M_{d_1})$, where $M_j$ is the $j$th column of $M$, $\vvec_1=(v_{1,1},\ldots,v_{1,d_1})^\top$ and 
$\widetilde{\vvec}_r =\onevec_{d_r}-2\vvec_r$, for $r=1,2$. Because the marginal in $\mV_2$ is the product uniform, the density of $\mV_1|\mV_2=\vvec_2$ is
\begin{equation}
f_{{\bm V}_1|\bm{V}_2}(\vvec_1|\vvec_2)=c_{v}^{\mathrm{FGMVC}}(\vvec_1,\vvec_2;M)
=1+\sum_{j=1}^{d_1}M_j^\top\widetilde{\vvec}_2(1-2v_{1,j}),
\label{eq:fmgvc_cond1}
\end{equation}
where we write~\eqref{eq:fgmvc_dens} as a summation of terms in $M_j$ and \(v_{1,j}\).
The operator-norm restriction $\lVert M^\top\widetilde{\vvec}_2\rVert_1<1$ ensures the conditional density is strictly positive.

For the forward conditional transport map $H_{1|2}$ we use the Knothe--Rosenblatt (KR) transport, which is formed from successive scalar conditional distribution functions. For the coordinate order $1,\ldots,d_1$, let $\vvec_{1|2}=H_{1|2}(\vvec_1|\vvec_2)$ with $j$th component defined for $j=1,\ldots,d_1$ as
\[
v_{1\mid2,j}
:=\big[H_{1|2}(\vvec_1|\vvec_2)\big]_j
=F_{V_{1,j}\mid V_{1,1},\ldots,V_{1,j-1},\mV_2}
(v_{1,j}\mid v_{1,1},\ldots,v_{1,j-1},\vvec_2),
\]
where the conditioning variables preceding $V_{1,1}$ are omitted when $j=1$. For the FGMVC, these components and their inverses are available in closed form as shown below. From~\eqref{eq:fmgvc_cond1}, the density of the conditional distribution $(V_{1,j}| V_{1,1}=v_{1,1},\ldots,V_{1,j-1}=v_{1,j-1},\mV_2=\vvec_2)$ is 
\[
f(v_{1,j}|v_{1,1},\ldots,v_{1,j-1},\vvec_2)
=\frac{1+\sum_{k=1}^{j}M_k^\top\widetilde{\vvec}_2(1-2v_{1,k})}
{1+\sum_{k=1}^{j-1}M_k^\top\widetilde{\vvec}_2(1-2v_{1,k})}\,.
\]
Integrating this scalar conditional density from zero to $v_{1,j}$ yields
\[
v_{1\mid2,j}
=v_{1,j}+\frac{M_j^\top\widetilde{\vvec}_2}
{1+\sum_{k=1}^{j-1}M_k^\top\widetilde{\vvec}_2(1-2v_{1,k})}
v_{1,j}(1-v_{1,j}),
\qquad j=1,\ldots,d_1,
\]
where an empty sum is zero. The map $H_{1|2}(\cdot|\vvec_2)$ is lower triangular and strictly increasing in each current coordinate. 

The inverse map is evaluated in the same order. Once $v_{1,1},\ldots,v_{1,j-1}$ have been recovered, define the coefficient in the $j$th forward equation by
\[
a_{1\mid2,j}
:=\frac{M_j^\top\widetilde{\vvec}_2}
{1+\sum_{k=1}^{j-1}M_k^\top\widetilde{\vvec}_2(1-2v_{1,k})}.
\]
The $j$th component of the inverse conditional transport map is
\[
\lrb{H_{1|2}^{-1}(\vvec_{1|2}|\vvec_2)}_j
=v_{1,j}
=\frac{2v_{1|2,j}}
{1+a_{1|2,j}+\sqrt{(1+a_{1|2,j})^2-4a_{1|2,j}v_{1|2,j}}},
\]
for $j=1,\ldots,d_1$. This sequential recursion defines $\vvec_1=H_{1|2}^{-1}(\vvec_{1|2}|\vvec_2)$.

The backward conditional transport $H_{2|1}(\vvec_2|\vvec_1)$ and its inverse follow from the same construction as the forward conditional transport by interchanging the subscripts $1$ and $2$, replacing $d_1$ by $d_2$, and replacing $M_j^\top\widetilde{\vvec}_2$ by $(M\widetilde{\vvec}_1)_j$. 

Unlike the symmetric Gaussian maps in Appendix~\ref{app:2GVC}, the KR maps are triangular and depend on the chosen coordinate order.  Nevertheless, they satisfy the bijective push-forward condition in Assumption~\ref{asmp:h_fun}. The natural order declared above is therefore fixed throughout estimation and forecasting.

  \newpage
  \singlespacing
  \bibliography{references}
\newpage
\noindent
\begin{center}
	{\bf \Large{Online Appendix for ``Vector Vine Copula Models for Multivariate Longitudinal Data''}}
\end{center}
\spacing{1.5}
\vspace{10pt}
\setcounter{page}{1}
\setcounter{figure}{0}
\setcounter{table}{0}
\setcounter{section}{0}
\setcounter{equation}{0}
\setcounter{algorithm}{1}
\renewcommand{\thetable}{A\arabic{table}}
\renewcommand{\thefigure}{A\arabic{figure}}
\renewcommand{\thealgorithm}{\Alph{section}\arabic{algorithm}}
\renewcommand{\thesection}{Part~\Alph{section}}
\renewcommand{\theequation}{A\arabic{equation}}
\makeatletter
  \renewcommand\@seccntformat[1]{\csname the#1\endcsname\quad}
\makeatother

\noindent

This Online Appendix has six parts:
\begin{itemize}
    \item[] {\bf Part~A}: Proofs
    \item[] {\bf Part~B}: Simulation Details
    \item[] {\bf Part~C}: Details on the Evaluation of Conditional Distributional Fit
    \item[] {\bf Part~D}: Benchmark PVAR Model for Missing Data
    \item[] {\bf Part~E}: Further Information on the HILDA Panel Data Example
    \item[] {\bf Part~F}: Extra Details on the Hyper-Spherical Parameterization
\end{itemize}
\newpage

\section{Proofs}\label{sm:grs}
This section contains the proofs of all lemmas and Theorem~1.

\begin{proof}[{\bf Proof of Lemma~\ref{lem:cvdag}}]
	For \(c^\dagger\) to be a \(T\)-vector copula density as specified in Definition~\ref{def:kvcop}, the following have to hold:
	\begin{itemize}
	\item[(i)] Non-negativity: Since \(c_{t,j}\geq 0\) for all \(t,j\), it follows from~\eqref{eq:cvdagger} that \(c^\dagger(\uvec_{1:T})\geq 0\) on \((0,1)^d\).	
	\item[(ii)] Normalization: Proceed recursively in \(t\). For \(t=1\), \(f(\uvec_1)=1\). Suppose that \(f(\uvec_{1:t-1})\) is a normalized density. Starting from the product-uniform marginal \(f(\uvec_t)=1\), apply the conditional vector-copula construction at~\eqref{eq:u2vcop} successively for \(j=t-1,\ldots,1\). At each step,~\eqref{eq:ucond} defines a normalized conditional density, and hence the density at~\eqref{eq:ucondpdf} satisfies
	$$
	\int_{(0,1)^{d_t}} f(\uvec_t\mid \uvec_{1:t-1})\,d\uvec_t=1.
	$$
	Therefore, \(f(\uvec_{1:t})=f(\uvec_{1:t-1})f(\uvec_t\mid \uvec_{1:t-1})\) is normalized. Repeating this argument to \(t=T\) gives
	$$
	\int_{(0,1)^d}c^\dagger(\uvec_{1:T})\,d\uvec_{1:T}=1.
	$$
	\item[(iii)] Marginal Uniformity: Integrating the conditional joint density at~\eqref{eq:u2vcop} over \(\uvec_j\) gives
	$$
	\int f(\uvec_t\mid \uvec_{j:t-1})
	f(\uvec_j\mid \uvec_{j+1:t-1})\,d\uvec_j
	=
	f(\uvec_t\mid \uvec_{j+1:t-1}).
	$$
	Applying this identity successively for \(j=1,\ldots,t-1\) yields
	$$
	\int f(\uvec_t\mid \uvec_{1:t-1})f(\uvec_{1:t-1})\,d\uvec_{1:t-1}
	=f(\uvec_t)=1.
	$$
	Thus, \(\bm{U}_t\) has a product-uniform marginal. The marginals of \(\bm{U}_1,\ldots,\bm{U}_{t-1}\) are unchanged because \(f(\uvec_t\mid \uvec_{1:t-1})\) integrates to one. Hence, by induction, every block marginal of \(c^\dagger\) is product uniform, completing the proof.
\end{itemize}
\end{proof}

\begin{proof}[{\bf Proof of Theorem~\ref{thm:recursive}}]

Fix \(t=2,\ldots,T\) and \(j<t\), and condition throughout on
$\mU_{j+1:t-1}=\uvec_{j+1:t-1}$,
	with the convention that this conditioning set is empty when \(j=t-1\).
	The dependence of the conditional maps on \(\uvec_{j+1:t-1}\) is suppressed.
	
Define
\[
\mV_t=R^{-1}_{t|j+1}(\mU_t),
\qquad
\mV_j=R^{-1}_{j|t-1}(\mU_j).
\]
By~\eqref{eq:u2vcop}, this is the \(k=2\) vector copula model of Definition~\ref{def:vcopmod} applied to the conditional distribution of
\((\mU_t,\mU_j)\) given \(\mU_{j+1:t-1}=\uvec_{j+1:t-1}\). Hence, conditionally on
\(\mU_{j+1:t-1} = \uvec_{j+1:t-1}\), 
the pair
\((\mV_t,\mV_j)\) has linking vector copula \(C_{t,j}\), density \(c_{t,j}\), and marginal distributions \(\calU_t\) and \(\calU_j\).
	
We first prove the forward recursion. Fix \(\uvec_j\) and define the vector value
$\uvec_{j|t-1}:=R^{-1}_{j|t-1}(\uvec_j)$. 
Because \(R^{-1}_{j|t-1}\) is bijective, conditioning on \(\mU_j=\uvec_j\), in addition to
\(\mU_{j+1:t-1}=\uvec_{j+1:t-1}\), 
is equivalent to conditioning on
\(\mV_j=\uvec_{j|t-1}\).
Therefore,
\(
R^{-1}_{t|j+1}(\mU_t) \mid \mU_{j:t-1}=\uvec_{j:t-1}
\)
has distribution 
\(
\bbP_{\mV_t\mid \mV_j=\uvec_{j|t-1}}
\)
under the linking vector copula \(C_{t,j}\). By Assumption~\ref{asmp:h_fun},
\[
\lrb{H_{t|j}(\cdot\mid\uvec_{j|t-1})}_{\#}
\bbP_{\mV_t\mid \mV_j=\uvec_{j|t-1}} = \calU_t .
\]
Combining these two transformations gives the composition function
\[
\lrb{ H_{t|j}(\cdot\mid\uvec_{j|t-1})\circ R^{-1}_{t|j+1}}_{\#} \bbP_{\mU_t\mid \mU_{j:t-1}=\uvec_{j:t-1}} 
= \calU_t .
\]
which is the pull-back from
\(\mU_t\mid\mU_{j:t-1}=\uvec_{j:t-1}\) to \(\calU_t\).
With the pull-backs \(R^{-1}_{s|r}\) understood as those generated by the VDV construction, this composition function equals \(R^{-1}_{t|j}\). Hence, for any \(\uvec_t\),
\[
R^{-1}_{t|j}(\uvec_t) =
H_{t|j} \lrp{ R^{-1}_{t|j+1}(\uvec_t)\mid \uvec_{j|t-1}}.
\]
Using \(\uvec_{s|r} \coloneq R^{-1}_{s|r}(\uvec_s)\), 
we obtain
\( \uvec_{t|j} = H_{t|j}(\uvec_{t|j+1}\mid \uvec_{j|t-1})\).
	
The backward recursion is analogous. Fix \(\uvec_t\) and define the vector value 
\(\uvec_{t|j+1}:=R^{-1}_{t|j+1}(\uvec_t)\).
Because \(R^{-1}_{t|j+1}\) is bijective, conditioning on
\(\mU_t=\uvec_t\), in addition to
\(\mU_{j+1:t-1}=\uvec_{j+1:t-1}\), is equivalent to conditioning on \(\mV_t=\uvec_{t|j+1}\).
Therefore
\(
R^{-1}_{j|t-1}(\mU_j) \mid \mU_{j+1:t}=\uvec_{j+1:t}
\)
has distribution
\( \bbP_{\mV_j\mid \mV_t=\uvec_{t|j+1}} \)
under \(C_{t,j}\). 
By Assumption~\ref{asmp:h_fun},
\[
\lrb{H_{j|t}(\cdot\mid\uvec_{t|j+1})}_{\#} \bbP_{\mV_j\mid \mV_t=\uvec_{t|j+1}} = \calU_j .
\]
Combining these two transformations gives the composition
\[
\lrb{ H_{j|t}(\cdot\mid\uvec_{t|j+1})\circ R^{-1}_{j|t-1} }_{\#} \bbP_{\mU_j \mid \mU_{j+1:t}=\uvec_{j+1:t}} 
=\calU_j,
\]
which is the conditional pull-back \(R^{-1}_{j|t}\), and so
\[
R^{-1}_{j|t}(\uvec_j) = 
H_{j|t}\!\lrp{ R^{-1}_{j|t-1}(\uvec_j)\mid \uvec_{t|j+1}},
\]
equivalently,
\( \uvec_{j|t} = H_{j|t}(\uvec_{j|t-1}\mid \uvec_{t|j+1}) \).
When \(j=t-1\), the conditioning set is empty and the stated conventions \(\uvec_{t|t}=\uvec_t\) and \(\uvec_{t-1|t-1}=\uvec_{t-1}\) give the adjacent-block case. This proves both recursions.
\end{proof}

\begin{proof}[{\bf Proof of Lemma~\ref{lem:free_param}}]
	Let \(M\in{\cal M}_\delta\), and write a singular value decomposition as
	\(
	M=Q_2\Sigma Q_1^\top ,
	\)
	where \(Q_2\) and \(Q_1\) are orthogonal matrices and \(\Sigma\) is diagonal 
	matrix with singular values \(\sigma_1,\ldots,\sigma_r\), where
	\(r=\min(d_1,d_2)\). Since \(\|M\|_{\mathrm{op}}<\delta\), we have
	\(0\leq\sigma_i<\delta\) for all \(i\), so \(I_{d_1}-\delta^{-2}M^\top M\) is positive
	definite. Therefore
	\[ 
	\Psi_\delta(M)
	=
	Q_2\delta^{-1}\Sigma(I_{d_1}-\delta^{-2}\Sigma^\top\Sigma)^{-1}Q_1^\top .
	\]
	Thus \(\Psi_\delta(M)\) has the same left and right singular vectors as \(M\), and
	its singular values are
	\[
	\beta_i=\frac{\sigma_i/\delta}{1-\sigma_i^2/\delta^2},\qquad i=1,\ldots,r .
	\]
	The scalar map
	\[
	\sigma\mapsto \frac{\sigma/\delta}{1-\sigma^2/\delta^2},\qquad 0\leq\sigma<\delta,
	\]
	is strictly increasing and maps \([0,\delta)\) onto \([0,\infty)\). Its inverse is
	\[
	\sigma
	=
	\delta\frac{2\beta}{1+\sqrt{1+4\beta^2}},
	\]
	with the value at \(\beta=0\) understood by continuity.
	
	Now take any \(B\in\mathbb{R}^{d_2\times d_1}\), and write its singular value
	decomposition as
	\(
	B=Q_2\Gamma Q_1^\top
	\),
	with singular values \(\beta_1,\ldots,\beta_r\). Define
	\[
	\sigma_i=\delta\frac{2\beta_i}{1+\sqrt{1+4\beta_i^2}},
	\qquad i=1,\ldots,r,
	\]
	and set \(M=Q_2\Sigma Q_1^\top\), where \(\Sigma\) is rectangular diagonal
	with entries \(\sigma_i\). Since \(0\leq\sigma_i<\delta\), this \(M\) belongs to
	\({\cal M}_\delta\), and the preceding calculation gives \(\Psi_\delta(M)=B\). Hence
	\(\Psi_\delta\) is onto. The same scalar monotonicity also gives uniqueness of the
	singular values, with the singular subspaces inherited from \(B\), so
	\(\Psi_\delta\) is one-to-one.
	
	It remains only to verify the stated matrix form of the inverse. Since
	\( 
	B^\top B =
	Q_1\Gamma^\top\Gamma Q_1^\top ,
	\)
	the principal square root satisfies
	\[
	(I_{d_1}+4B^\top B)^{1/2}
	=
	Q_1(I_{d_1}+4\Gamma^\top\Gamma)^{1/2}Q_1^\top .
	\]
	Therefore
	\[
	2\delta B\bigl(I_{d_1}+(I_{d_1}+4B^\top B)^{1/2}\bigr)^{-1}
	=
	Q_2\Sigma Q_1^\top
	=
	M,
	\]
	because the singular values on the right-hand side are exactly
	\[
	\sigma_i=\delta\frac{2\beta_i}{1+\sqrt{1+4\beta_i^2}} .
	\]
	Since each \(\sigma_i<\delta\), the inverse transformation satisfies
	\(\|M\|_{\mathrm{op}}<\delta\).
\end{proof}

\begin{proof}[{\bf Proof of Lemma~\ref{lem:compforwardmap}}]		
For \(t=1\), the first identity is the empty-composition
case and the second follows from
\(
(S_1^{-1})_{\#}P_{\mX_1}=\calU_1 .
\)
Now suppose \(t>1\), and condition on
\(\mU_{1:t-1}=\uvec_{1:t-1}\). Throughout, write
\(
H_{t|j}(\cdot) \coloneq
H_{t|j}(\cdot\mid \uvec_{j|t-1}),
\)
for \(j=1,\ldots,t-1\).

By Theorem~\ref{thm:recursive},
\(
\uvec_{t|j}=H_{t|j}(\uvec_{t|j+1}),
\)
for \( j=t-1,\ldots,1 \), 
with \(\uvec_{t|t}\equiv \uvec_t\). Repeated substitution therefore gives
\[
\uvec_{t|1}
=
\left(H_{t|1}\circ H_{t|2}\circ\cdots\circ H_{t|t-1}\right)(\uvec_t).
\]
Since \(\uvec_{t|1}=R^{-1}_{t|1}(\uvec_t)\), it follows that
\(
R^{-1}_{t|1} =
H_{t|1}\circ H_{t|2}\circ\cdots\circ H_{t|t-1}.
\)
By definition, \(R_{t|1}\) pushes \(\calU_t\) forward to
\(P_{\mU_t\mid \mU_{1:t-1}=\uvec_{1:t-1}}\). Hence
\[
\left(
H_{t|1}\circ H_{t|2}\circ\cdots\circ H_{t|t-1}
\right)_{\#}
P_{\mU_t\mid \mU_{1:t-1}=\uvec_{1:t-1}}
=
\calU_t ,
\]
which proves~\eqref{eq:ctransportu}.

For~\eqref{eq:conditionaltransport}, set
\(
\uvec_s=S_s^{-1}(\xvec_s),
\)
for \(s=1,\ldots,t-1\).
Because the marginal maps are bijective, conditioning on
\(\mX_{1:t-1}=\xvec_{1:t-1}\) is equivalent to conditioning on
\(\mU_{1:t-1}=\uvec_{1:t-1}\), and
\[
(S_t^{-1})_{\#}
P_{\mX_t\mid \mX_{1:t-1}=\xvec_{1:t-1}} 
=
P_{\mU_t\mid \mU_{1:t-1}=\uvec_{1:t-1}} .
\]
Composing this identity with~\eqref{eq:ctransportu} gives
\[
\left(
H_{t|1}\circ H_{t|2}\circ\cdots\circ H_{t|t-1}\circ S_t^{-1}
\right)_{\#}
P_{\mX_t\mid \mX_{1:t-1}=\xvec_{1:t-1}}
=
\calU_t ,
\]
which proves the result.
\end{proof}

\begin{proof}[{\bf Proof of Lemma~\ref{lem:station}}]
Fix $m\geq 0$ and an admissible starting point $s$. Successive
integration of the endpoint blocks in~\eqref{eq:cvdagger} shows that the marginal density of
$\bm{U}_{s:s+m}$ is
\[
c^\dagger_{s:s+m}(\bm{u}_{s:s+m})
=
\prod_{t=s+1}^{s+m}\prod_{j=s}^{t-1}
c_{t,j}\bigl(\bm{u}_{t\mid j+1},\bm{u}_{j\mid t-1}\bigr),
\]
where the arguments are generated recursively using the orientated linking
specifications $L_{t,j}$. At each integration, the factors involving the
removed endpoint block form a normalized conditional density and therefore
integrate to one.

If ${\cal L}_{t,t-\ell}={\cal L}_{t',t'-\ell}$ for every lag $\ell$, then re-labeling the
indices by $t\mapsto t-s+1$ and $j\mapsto j-s+1$ leaves both the linking
copulas and the recursively generated arguments unchanged. Hence,
\[
(\bm{U}_s,\ldots,\bm{U}_{s+m})
\stackrel{d}{=}
(\bm{U}_{s'},\ldots,\bm{U}_{s'+m})
\]
for any two admissible starting points $s$ and $s'$. Since arbitrary finite
sub-vectors are marginals of consecutive blocks, $\{\bm{U}_t\}$ is strictly
stationary.

If, in addition, $S_t=S$ for all $t$, then
\[
(\bm{X}_s,\ldots,\bm{X}_{s+m})
=
\bigl(S(\bm{U}_s),\ldots,S(\bm{U}_{s+m})\bigr).
\]
Applying the same measurable transformation component-wise to
equal-in-distribution vectors establishes strict stationarity of
$\{\bm{X}_t\}$.
\end{proof}

\newpage
\section{Simulation Details}\label{sm:sim}

We conduct \(N=100\) Monte Carlo replications. In each replication, we
generate \(n=500\) independent samples for
estimation. Each replication consists of \(T=15\) vectors of dimension \(\tilde{d}=5\).

The two DGPs are designed to share exactly the same latent Gaussian dependence. 
Let $d=T\tilde{d}$, 
\(\mZ = \lrp{\mZ_{1}^{\top},\ldots,\mZ_{T}^{\top}}^{\top}
\sim \calN_{d}\lrp{\zerovec,\mR},
\) and \( \Var\lrp{\mZ_{t}}=I_{\tilde d}. \)
Define the lag-\(h\) correlation block by
\( R_h = \Cov\lrp{\mZ_{t},\mZ_{t-h}}, \) with \( R_0=I_{\tilde d}\). 
The joint correlation matrix is therefore
\[
\mR=
\begin{pmatrix}
R_0 &     &  &   & \\
R_1 & R_0 &  &   & \\
R_2 & R_1 &  R_0 & & \\
\vdots&\vdots&\vdots&\ddots& \\
R_{T-1} & R_{T-2} & R_{T-3} & \cdots & R_0
\end{pmatrix}.
\]
Thus, stationarity is equivalent to requiring that each covariance block
depends only on the lag \(h\), making \(\mR\) block Toeplitz.

\subsection{DGP1: Stationary VD-Vine and Marginals}

We construct \(\mR\) from a Gaussian VD-vine of Markov order two. Following the \(M\) parameterization, the first two layers of the VD-vine are specified by the \(\tilde{d}\times \tilde{d}\) matrices \(M_1\) and \(M_2\) for linking copulas \(C_{t,t-1} \) and \( C_{t,t-2\mid t-1} \) respectively,
while all higher-order copulas, \( C_{t,t-h\mid t-h+1:t-1} \) with \( h>2 \), are independent. The fixed VD-vine parameters are
\[
M_1=
\begin{pmatrix}
-0.0651& 0.4400& 0.0057&-0.1582&-0.1012\\
-0.0485&-0.2213&-0.0976&-0.3525&-0.0929\\
-0.2817&-0.0327& 0.3167&-0.0211& 0.1420\\
-0.3066&-0.1161&-0.3081& 0.1116& 0.0281\\
-0.0633&-0.1006& 0.0523& 0.1439&-0.4309
\end{pmatrix},
\]
\[
M_2=
\begin{pmatrix}
-0.1297& 0.0881&-0.1347&-0.0933&-0.0891\\
 0.0584&-0.1461& 0.0068&-0.1260& 0.0224\\
 0.1555& 0.2203& 0.0637&-0.0221& 0.0262\\
 0.0925&-0.0268&-0.1518& 0.1665&-0.1121\\
 0.0352&-0.0146&-0.1491&-0.0210& 0.1492
\end{pmatrix},
\]
and \(M_h = 0\) for \(h > 2\).



The first two lag matrices are determined by the nonzero VD-vine parameters:
\( R_1=M_1 \),  and
\[
R_2
=
R_1R_1+
\lrp{I_{\tilde d}-R_1R_1^\top}^{1/2}
M_2
\lrp{I_{\tilde d}-R_1^\top R_1}^{1/2}.
\]
All higher-tree VD-vine parameters are set to zero. Consequently, for
\(h>2\), the remaining lag correlations are obtained recursively as
\[
R_h
=
\begin{pmatrix}
R_1 & R_2 & \cdots & R_{h-1}
\end{pmatrix}
\begin{pmatrix}
I_{\tilde d}          & R_1          & \cdots & R_{h-2}\\
R_1^\top     & I_{\tilde d}          & \cdots & R_{h-3}\\
\vdots       & \vdots       & \ddots & \vdots\\
R_{h-2}^\top & R_{h-3}^\top & \cdots & I_{\tilde d}
\end{pmatrix}^{-1}
\begin{pmatrix}
R_{h-1}\\
R_{h-2}\\
\vdots\\
R_1
\end{pmatrix}.
\]
Since the expression for \(R_h\) involves only the previously recovered
matrices \(R_1,\ldots,R_{h-1}\), the two VD-vine parameters \(M_1\) and
\(M_2\) determine the complete lag-correlation sequence. 
Figure~\ref{fig:sim_dependence} presents the resulting correlation and standardized precision matrices. 
\begin{figure}[htbp]
    \centering
    \includegraphics[width=\textwidth]
    {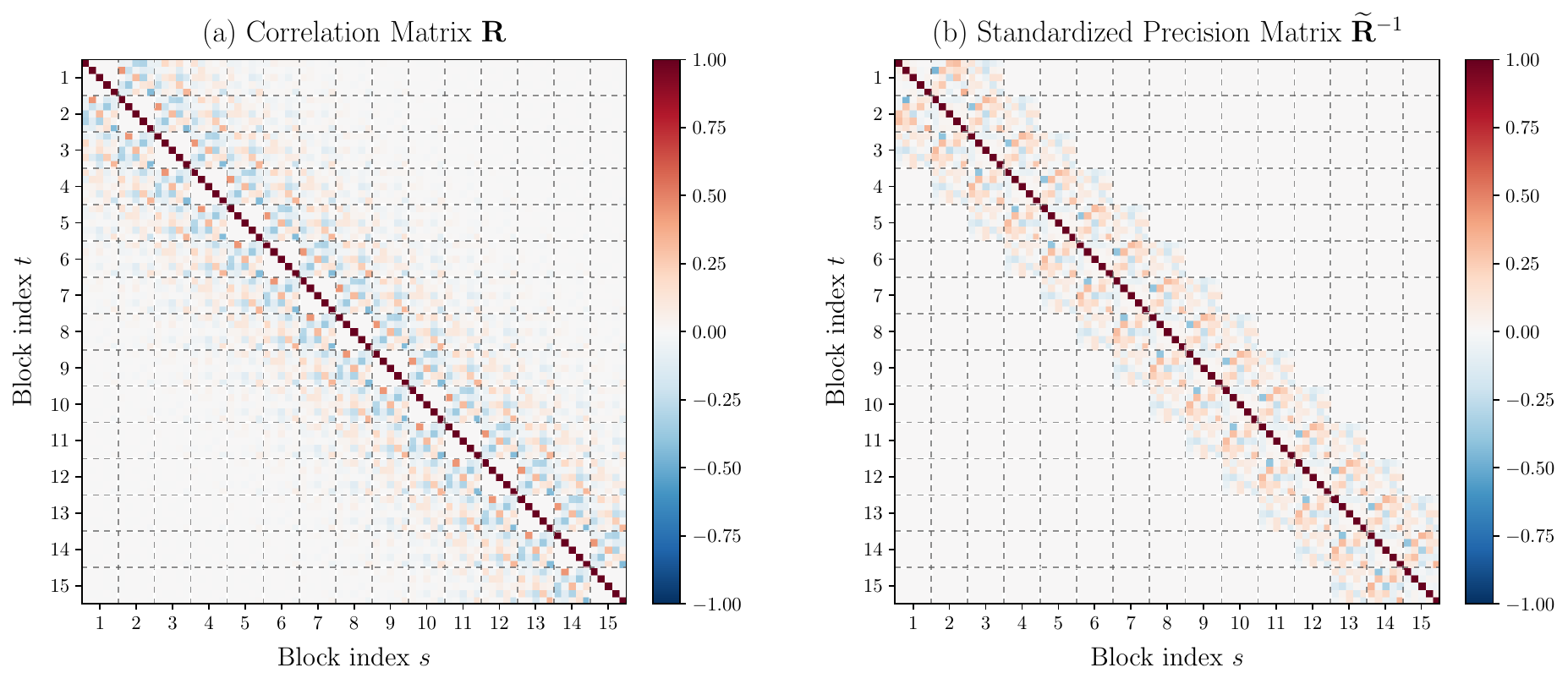}
    \caption{Common latent Gaussian dependence matrix used in both DGPs.
    Panel (a) presents the block correlation matrix \(\mR\), and panel
    (b) presents its standardized precision matrix. The repeated block
    diagonals of \(\mR\) reflect stationarity, while the block-banded
    precision matrix reflects the Markov-order-two restriction. Dashed lines
    identify the \(T=15\) vector blocks of dimension \(\tilde{d}=5\).}
    \label{fig:sim_dependence}
\end{figure}

DGP1 combines the stationary Gaussian VD-vine dependence described above with time-varying multivariate YJN margins. 
For each \(t\), let \(\muvec_t\), \(D_t\), and \(\etavec_t\) denote,
respectively, the location vector, diagonal scale matrix, and vector of
Yeo--Johnson parameters. The multivariate YJN construction is
\[
k_{\etavec_t} \lrp{ D_t^{-1}\lrp{\mX_t-\muvec_t} } = L\mZ_t,
\]
where \(k_{\etavec_t}\) applies the Yeo--Johnson transformation
componentwise. 
We fix the location and scale parameters at
\( \muvec_t=\zerovec_{\tilde d}\), \(D_t=I_{\tilde d}\) for \(t=1,\ldots,T\).
Let \(\Sigma \) denote the
within-vector correlation matrix and let \(L\) be its lower-triangular
Cholesky factor, \( \Sigma = LL^\top \).
The same \(\Sigma\) is used at every time point, specifically, we set
\[
\Sigma =
\begin{pmatrix}
 1       &  & &  &  \\
 0.1418  & 1      &  & & \\
 0.8868  & 0.3425 & 1      & &  \\
 0.3299  & 0.2043 & 0.2831 & 1      &  \\
-0.6624  &-0.1600 &-0.3459 &-0.1324 & 1
\end{pmatrix}.
\]
We use the time-varying asymmetry parameters
\[
\etavec_t=
\begin{cases}
\lrp{1,1,1,1,1}^{\top}, & t=1,\\
\lrp{0.60,1.40,0.70,1.30,1}^{\top}, & t\text{ even},\\
\lrp{1.35,0.65,1.30,0.70,1}^{\top}, & t\geq3\text{ odd}.
\end{cases}
\] 
The first vector margin is therefore symmetric. The remaining vector margins produce asymmetry in opposite directions across coordinates.
The within-vector Gaussian dependence is common across time and both DGPs. 
Here, \( L \) is the lower-triangular Cholesky factor used only for generating and estimating the \( \Sigma \). In transport from the fitted YJN marginal to the vector-copula scale uses the principal symmetric inverse square root \( \Sigma^{-1/2} \). 

The serial copula of DGP1 remains stationary because all latent vectors are
generated from the common block-Toeplitz correlation matrix \(\mR\).
However, the observed process is not marginally stationary because
\(\etavec_t\) varies with \(t\). DGP1 therefore evaluates whether the
VD-vine model can recover stationary serial dependence in the presence of
time-varying asymmetric margins.

\subsection{DGP2: Stationary Gaussian PVAR Model}

The construction of the matrix \(\mR\) for latent random vectors
\(\lrc{\mZ_t}\) uses the \(\lrc{M_h}\) parameterization as partial
correlation matrices. Moreover, \(M_h=\bm{0}\) for \(h>2\) implies that
the latent process is Markov of order two, which can be expressed as
\(\mZ_t \perp \lrp{\mZ_{t-3},\mZ_{t-4},\ldots}
\mid \lrp{\mZ_{t-1},\mZ_{t-2}}\). It remains to show that this stationary
Gaussian Markov process has a VAR(2) representation.

Under the definition \(R_h=\Cov\lrp{\mZ_t,\mZ_{t-h}}\), the covariance
matrix of the three vectors in chronological order is
\[
\Var\lrp{
\begin{pmatrix}
    \mZ_{t-2}\\
    \mZ_{t-1}\\
    \mZ_t
\end{pmatrix}
}
=
\begin{pmatrix}
I_{\tilde d}&R_1^\top&R_2^\top\\
R_1&I_{\tilde d}&R_1^\top\\
R_2&R_1&I_{\tilde d}
\end{pmatrix}.
\]
Thus, \(R_h\) appears below the block diagonal and \(R_h^\top\) appears
above it.

Define
\(\mG_t=\lrp{\mZ_{t-1}^\top,\mZ_{t-2}^\top}^\top\). After reordering the
same three vectors so that the current vector appears first, the
covariance matrix of the stacked vector
\(\lrp{\mZ_t^\top,\mG_t^\top}^\top\) is
\[
\Var\lrp{
\begin{pmatrix}
    \mZ_t\\
    \mG_t
\end{pmatrix}
}
=
\begin{pmatrix}
I_{\tilde d}&\Sigma_{ZG}\\
\Sigma_{ZG}^\top&\Sigma_{GG}
\end{pmatrix},
\]
where
\[
\Sigma_{ZG}
=
\begin{pmatrix}
R_1&R_2
\end{pmatrix}
\in\bbR^{\tilde d\times2\tilde d},
\qquad
\Sigma_{GG}
=
\begin{pmatrix}
I_{\tilde d}&R_1\\
R_1^\top&I_{\tilde d}
\end{pmatrix}
\in\bbR^{2\tilde d\times2\tilde d}.
\]
Here, \(\Sigma_{ZG}\) is the cross-covariance block between the current
vector and its two lags, whereas \(\Sigma_{GG}\) is the covariance
matrix of the lagged vectors. Then the Gaussian conditional expectation
is
\[
\bbE\lrp{\mZ_t\mid\mG_t}
=
\Sigma_{ZG}\Sigma_{GG}^{-1}\mG_t
=
\lrp{F_1,F_2}\mG_t,
\]
where the coefficient matrices are the solution to the multivariate
Yule--Walker equations,
\[
\lrp{F_1,F_2}
=
\lrp{R_1,R_2}
\begin{pmatrix}
I_{\tilde d}&R_1\\
R_1^\top&I_{\tilde d}
\end{pmatrix}^{-1}.
\]
Therefore,
\(\lrp{\mZ_t\mid\mZ_{t-1},\mZ_{t-2}}
\sim\calN_{\tilde d}
\lrp{F_1\mZ_{t-1}+F_2\mZ_{t-2},Q}\).

Define the innovation by
\(\uvec_t=\mZ_t-F_1\mZ_{t-1}-F_2\mZ_{t-2}\), whose covariance follows
the lag-zero Yule--Walker equation,
\[
\begin{aligned}
Q
&=
I_{\tilde d}
-
\begin{pmatrix}
R_1&R_2
\end{pmatrix}
\begin{pmatrix}
I_{\tilde d}&R_1\\
R_1^\top&I_{\tilde d}
\end{pmatrix}^{-1}
\begin{pmatrix}
R_1^\top\\
R_2^\top
\end{pmatrix} \\
&=
I_{\tilde d}-F_1R_1^\top-F_2R_2^\top.
\end{aligned}
\]
The Gaussian Markov property implies that \(\uvec_t\) is independent of
the entire history through time \(t-1\). Since \(F_1,F_2\), and \(Q\)
do not depend on \(t\), the innovations are identically distributed,
giving the exact stationary representation
\[
\mZ_t = F_1\mZ_{t-1} + F_2\mZ_{t-2} + \uvec_t,
\qquad
\uvec_t\sim\calN_{\tilde d}\lrp{\zerovec,Q}.
\]
Thus, the stationary VD-vine parameters \(M_1,M_2\) and the PVAR
parameters \(F_1,F_2,Q\) are two parameterizations of exactly the same
Gaussian dependence structure \(\mR\). The former describes serial
dependence through vector partial correlations, whereas the latter
describes it through linear regression coefficients and an innovation
covariance matrix.

For DGP2, we set marginal location and scale as
\(\muvec_t=\zerovec\) and \(D_t=I\) for every \(t\). The observed vector
is \(\mX_t=L\mZ_t\), where \(LL^\top=\Sigma\). Substituting
\(\mZ_t=L^{-1}\mX_t\) into the latent VAR representation gives
\[
\mX_t = A_1\mX_{t-1} + A_2\mX_{t-2} + \evec_t,
\qquad
\evec_t = L\uvec_t \sim \calN_{\tilde d}\lrp{\zerovec,Q_e},
\]
where \(A_1=LF_1L^{-1}\), \(A_2=LF_2L^{-1}\), and
\(Q_e=LQL^\top\).

\subsection{Additional Figures for Simulation Study}\label{sm:simfigs}
\begin{figure}[thbp]
    \centering
    \includegraphics[width=\linewidth]{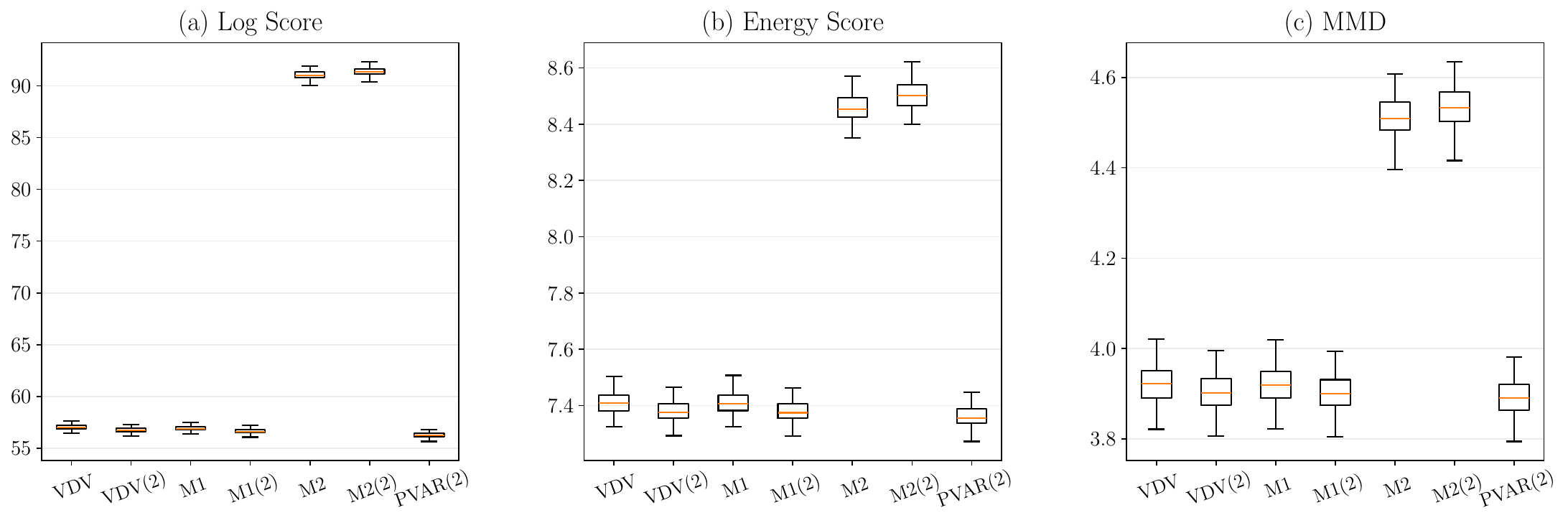}
    \caption{Boxplots of the log score, energy score,
    	and squared maximum mean discrepancy across 100 replications for DGP2, with lower values indicating greater accuracy.}
    \label{fig:simulation_boxplots_DGP2}
\end{figure}

\newpage

\section{Details on the Evaluation of Conditional Distributional Fit}\label{sm:evalfit}
To evaluate conditional distributional fit, we consider the negative logarithmic score (LS), energy score (ES), and maximum mean discrepancy (MMD) for the scoring function ${\cal S}$ in Sections~\ref{sec:sim} and~\ref{sec:unbalanced}. 
Let \(F\) be a \(d\)-variate conditional predictive distribution with density \(f\), let \(\yvec\) be the observed outcome, and let \(\widetilde{\yvec},\widetilde{\yvec}'\stackrel{\mathrm{iid}}{\sim}F\). The negative logarithmic score is \(\operatorname{LS}\lrp{F,\yvec}=-\log f\lrp{\yvec}\). If $d_0=\mbox{dim}(\yvec)$, then the dimension-normalized energy score is
\[
\operatorname{ES}\lrp{F,\yvec} =
{d_0}^{-1/2}\lrb{
\bbE\lrb{\lVert\widetilde{\yvec}-\yvec\rVert_2}
-\frac{1}{2}\bbE\lrb{\lVert\widetilde{\yvec}-\widetilde{\yvec}'\rVert_2} }.
\]
The squared MMD-based kernel score compares the predictive distribution \(F\) with the point mass at the observed outcome:
\[
\operatorname{MMD}\lrp{F,\yvec}=\bbE\lrb{k_r\lrp{\widetilde{\yvec},\widetilde{\yvec}'}}-2\bbE\lrb{k_r\lrp{\widetilde{\yvec},\yvec}}+k_r\lrp{\yvec,\yvec},
\]
where \(k_r\) is a Gaussian kernel with bandwidth selected by the median-distance heuristic using the observed target vectors in replication \(r\). The same kernel and bandwidth are used for all competing models within each replication.
ES and MMD are evaluated by Monte Carlo using \(1{,}000\) independent pairs of draws from each fitted conditional distribution.
Smaller values indicate better conditional distributional fit.

\newpage
\section{Benchmark PVAR with Missing Variables}\label{app:pvar}

We employ a Bayesian random-effects panel vector autoregression of order one (PVAR(1)) as a linear-Gaussian benchmark. Let \( \mZ_{it} = ( Z_{it,\mathrm{INC}},Z_{it,\mathrm{WLC}},Z_{it,\mathrm{DS}},Z_{it,\mathrm{PA}},Z_{it,\mathrm{NA}} )^\top \in \bbR^5 \) denote the complete vector for individual \(i\) at wave \(t\). All variables are expressed on the standardized scale. The complete process follows
\[
\mZ_{it} = \cvec + A\mZ_{i,t-1} + \mB_i + \varepsilonvec_{it},
\]
where \( \varepsilonvec_{it} \sim N_5( \zerovec,\Sigma_{\varepsilon} ) \) and \( \mB_i \sim N_5( \zerovec,\Sigma_b ) \). The vector \( \cvec \in \bbR^5 \) is the common intercept and \( A \in \bbR^{5\times5} \) is the transition matrix, whose diagonal elements describe persistence within each variable and off-diagonal elements describe cross-variable predictive effects. The covariance \( \Sigma_{\varepsilon} \) captures contemporaneous dependence in the innovations, while \( \Sigma_b \), assumed diagonal, captures time-invariant heterogeneity across individuals.

Let \( \mtildeZ_{it} \) denote the vector observed at wave \(t\), with realization \( \ztildevec_{it} \). At complete waves, \( \mtildeZ_{it} = \mZ_{it} \). In 2018, 2020, and 2022, WLC is unavailable and \( \mtildeZ_{it} = ( Z_{it,\mathrm{INC}},Z_{it,\mathrm{DS}},Z_{it,\mathrm{PA}},Z_{it,\mathrm{NA}} )^\top \in \bbR^4 \). Thus the tilde denotes the observed, and potentially incomplete, version of the complete vector. In particular, WLC is absent from \( \mtildeZ_{it} \) at these waves but remains a component of the dynamic process \( \mZ_{it} \).

The model is evaluated using a Kalman filter, which jointly propagates the complete vector \( \mZ_{it} \) and the time-invariant individual effect \( \mB_i \). Conditioning on the observed history \( \ztildevec_{i,1:t-1} \), the filter provides the one-step predictive distribution of the complete vector
\[
\mZ_{it} \mid \ztildevec_{i,1:t-1},\thetavec \sim N_5\lrp{ \muvec_{i,t|t-1},\Omega_{i,t|t-1} },
\]
where \( \muvec_{i,t|t-1} \) and \( \Omega_{i,t|t-1} \) are the predictive mean and covariance implied by \( \thetavec = \{ \cvec,A,\Sigma_{\varepsilon},\Sigma_b \} \) and the filtering distribution at wave \( t-1 \).

Let \( \widetilde{\muvec}_{i,t|t-1} \) and \( \widetilde{\Omega}_{i,t|t-1} \) denote the corresponding predictive mean and covariance of the observed vector \( \mtildeZ_{it} \). At complete waves, \( \widetilde{\muvec}_{i,t|t-1} = \muvec_{i,t|t-1} \) and \( \widetilde{\Omega}_{i,t|t-1} = \Omega_{i,t|t-1} \). When WLC is unavailable, \( \widetilde{\muvec}_{i,t|t-1} \) is obtained by removing its WLC entry from \( \muvec_{i,t|t-1} \), and \( \widetilde{\Omega}_{i,t|t-1} \) by removing the corresponding row and column from \( \Omega_{i,t|t-1} \). By Gaussian marginalization,
\[
\mtildeZ_{it} \mid \ztildevec_{i,1:t-1},\thetavec \sim N_{d_t}\lrp{ \widetilde{\muvec}_{i,t|t-1},\widetilde{\Omega}_{i,t|t-1} },
\]
where \( d_t = 5 \) at complete waves and \( d_t = 4 \) when WLC is unavailable. Hence missing WLC values do not need to be filled in before estimation; they are integrated out analytically from the complete Gaussian predictive distribution.

Conditioning on the fully observed first wave, the observed-data likelihood is
\[
f( \ztildevec;\thetavec ) = \prod_{i=1}^{n}\prod_{t=2}^{T}\phi_{d_t}\lrp{ \ztildevec_{it};\widetilde{\muvec}_{i,t|t-1},\widetilde{\Omega}_{i,t|t-1} }.
\]
Although WLC is unobserved at some waves, its predictive distribution continues to evolve through \( A \). The variables observed at the same wave also provide information about WLC through their predictive cross-covariances in \( \Omega_{i,t|t-1} \). The Kalman update therefore revises the distribution of the missing WLC component using the available INC, DS, PA, and NA observations and propagates this uncertainty to subsequent waves. When WLC is observed again, it re-enters the filtering recursion in the usual way.

After estimation, a missing WLC value can be recovered from the Kalman smoothing distribution \( p\lrp{ Z_{it,\mathrm{WLC}} \mid \ztildevec_{i,1:T},\thetavec } \), which uses observations both before and after the missing wave. Thus the Kalman filter marginalizes missing WLC values when evaluating the likelihood, while the smoother provides their conditional distributions when recovery of the missing values is required.

\newpage
\section{Further Information on the HILDA Panel Data Example}\label{sm:hildaextra}
\subsection{Data source and sample construction}
We used the Household, Income and Labour Dynamics in Australia (HILDA) Survey,
General Release 23, Version 2, covering Waves 1--23
\citep{NBTNMV_2024}. 
The analysis period was restricted to Waves 16--23 with individuals 
identified using the unique ID variable \texttt{xwaveid}. These waves correspond to 
time points $t=1,\ldots,T=8$ in the longitudinal model. 

The sample was formed in the following order. First, records treated as employed (\texttt{esempst} was neither $-10$ nor
$-1$) were retained. Observations with a valid response, from 1 to 7, to
\texttt{pawktdw} were retained, while preserving observations in Waves 18, 20, and 22, for
which the work-life-conflict was unobserved.
All negative survey codes for the component items and income were changed to
system missing.

For each multi-item scale, all observations for an individual were discarded if more than
one component was missing. A scale with no more than one missing component was
computed as the arithmetic mean of its available components. Overall life
satisfaction (\texttt{LS}) was a single item; observations with missing
\texttt{LS} were excluded during this preliminary cleaning step, although
\texttt{LS} was subsequently omitted from the analysis because it was discrete-valued on a small number of atoms. WLC, DS, PA and NA were treated as continuous for the purposes of analysis. The structural
work-life-conflict missingness in Waves 18, 20, and 22 was exempted from the
item-missingness exclusion.

Next, individuals with exactly one record in every wave from
16 through 23 were retained. Finally, we identified every individual whose gross
household wage and salary income (\texttt{INC}) equalled zero in at least one
of the eight waves and removed all records for that individual. There were only 58
such individuals. The resulting balanced panel contains $n=1,093$ individuals and
$nT=8,744$ individual-wave records. 
Balanced-panel status refers to participation in every wave; work-life conflict
remains structurally unavailable in waves 18, 20, and 22.

\subsection{Constructed variables}
The four psychosocial measures use the same constructs and component items as
\citet{Yuetal2025}, to the extent permitted by the selected waves and balanced-panel
design:

\begin{description}
	\setlength{\itemsep}{0.25em}
	\item[Work-life conflict (WLC).]
	The mean of the seven items:
	\texttt{pawkrap}, \texttt{pawklte}, \texttt{pawkmfh},
	\texttt{pawkfle}, \texttt{pawkle}, \texttt{pawkwc}, and
	\texttt{pawktdw}. Higher scores indicate greater conflict.
	
	\item[Domain satisfaction (DS).]
	The mean of eight 0--10 satisfaction items:
	\texttt{losateo}, \texttt{losatfs}, \texttt{losatft},
	\texttt{losathl}, \texttt{losatlc}, \texttt{losatnl},
	\texttt{losatsf}, and \texttt{losatyh}. Higher scores indicate greater
	satisfaction.
	
	\item[Positive affect (PA).]
	The mean of the four items:
	\texttt{gh9a}, \texttt{gh9d}, \texttt{gh9e}, and \texttt{gh9h}.
	The supplied construction leaves these items in their release coding, under
	which higher values indicate more frequent positive affect.
	
	\item[Reverse-coded Negative affect (NA).]
	The mean $\overline{\texttt{NA}}$ of five 1--6 SF-36 affect items
	(\texttt{gh9b}, \texttt{gh9c}, \texttt{gh9f}, \texttt{gh9g}, and
	\texttt{gh9i}) was first calculated and then transformed as
	$\texttt{NA}=7-\overline{\texttt{NA}}$. Higher values therefore
	indicate more frequent negative affect.
	
	\item[Income (INC).]
	Equal to the variable \texttt{hiwsfei}, which is the gross household wages and salary
	income for the financial year, measured in Australian dollars. Negative special
	codes were treated as missing, and the zero-income exclusion
	described above was applied. The logarithmic transformation was employed before the longitudinal models were applied.
\end{description}

This construction follows \citet{Yuetal2025} in its choice of work-life conflict,
domain satisfaction, positive affect, negative affect, and household income.
There are four deliberate differences. The present sample uses Waves 16--23
rather than Waves 1--21; requires the same individuals to appear in all eight
selected waves rather than including anyone who answered the work-family items
at least once; omits overall life satisfaction from the retained variables as this was a highly discrete variable; and
retains income in dollars for the summary table rather than converting it to
within-wave quintiles.

\subsection{Wave-specific descriptive statistics}
Table~\ref{tab:summary} reports statistics separately by wave, calculated
across the retained individuals. 

\begingroup
\footnotesize
\setlength{\tabcolsep}{4.5pt}
\begin{longtable}{@{}r l r r r r r r@{}}
	\caption{Summary statistics by survey wave}\label{tab:summary}\\
	\toprule
	Wave & Variable & $n$ & Mean & SD & Minimum & Maximum & Skewness \\
	\midrule
	\endfirsthead
	
	\multicolumn{8}{l}{\tablename\ \thetable\ (continued)}\\
	\toprule
	Wave & Variable & $n$ & Mean & SD & Minimum & Maximum & Skewness \\
	\midrule
	\endhead
	
	\midrule
	\multicolumn{8}{r}{Continued on next page}\\
	\endfoot
	
	\bottomrule
	\endlastfoot
	
	16 & Work-life conflict & 1,093 & 3.699 & 1.159 & 1.000 & 7.000 & 0.074 \\
	& Domain satisfaction & 1,093 & 7.254 & 1.039 & 3.625 & 10.000 & $-0.554$ \\
	& Positive affect & 1,093 & 2.905 & 0.922 & 1.000 & 6.000 & 0.673 \\
	& Negative affect (rev.) & 1,093 & 2.332 & 0.770 & 1.000 & 6.000 & 1.036 \\
	& Income (AUD) & 1,093 & 129,358 & 74,287 & 2,400 & 737,054 & 2.324 \\
	\addlinespace
	17 & Work-life conflict & 1,093 & 3.628 & 1.208 & 1.000 & 7.000 & 0.111 \\
	& Domain satisfaction & 1,093 & 7.287 & 1.065 & 3.000 & 10.000 & $-0.507$ \\
	& Positive affect & 1,093 & 2.956 & 0.954 & 1.000 & 6.000 & 0.610 \\
	& Negative affect (rev.) & 1,093 & 2.321 & 0.749 & 1.000 & 5.400 & 0.876 \\
	& Income (AUD) & 1,093 & 135,945 & 76,385 & 5,597 & 720,283 & 2.241 \\
	\addlinespace
	18 & Work-life conflict & 0 & -- & -- & -- & -- & -- \\
	& Domain satisfaction & 1,093 & 7.357 & 1.040 & 2.750 & 10.000 & $-0.522$ \\
	& Positive affect & 1,093 & 2.963 & 0.930 & 1.000 & 6.000 & 0.642 \\
	& Negative affect (rev.) & 1,093 & 2.326 & 0.762 & 1.000 & 5.600 & 1.094 \\
	& Income (AUD) & 1,093 & 145,308 & 81,882 & 3,000 & 645,709 & 1.926 \\
	\addlinespace
	19 & Work-life conflict & 1,093 & 3.636 & 1.160 & 1.000 & 7.000 & 0.098 \\
	& Domain satisfaction & 1,093 & 7.373 & 1.031 & 2.875 & 10.000 & $-0.467$ \\
	& Positive affect & 1,093 & 2.943 & 0.940 & 1.000 & 5.750 & 0.566 \\
	& Negative affect (rev.) & 1,093 & 2.325 & 0.768 & 1.000 & 5.400 & 0.968 \\
	& Income (AUD) & 1,093 & 152,555 & 89,483 & 5,500 & 778,086 & 2.279 \\
	\addlinespace
	20 & Work-life conflict & 0 & -- & -- & -- & -- & -- \\
	& Domain satisfaction & 1,093 & 7.539 & 0.960 & 3.875 & 10.000 & $-0.546$ \\
	& Positive affect & 1,093 & 3.042 & 0.950 & 1.000 & 5.750 & 0.519 \\
	& Negative affect (rev.) & 1,093 & 2.397 & 0.804 & 1.000 & 6.000 & 0.909 \\
	& Income (AUD) & 1,093 & 157,113 & 84,175 & 7,000 & 673,354 & 1.775 \\
	\addlinespace
	21 & Work-life conflict & 1,093 & 3.498 & 1.245 & 1.000 & 7.000 & 0.087 \\*
	& Domain satisfaction & 1,093 & 7.555 & 0.972 & 3.125 & 10.000 & $-0.603$ \\*
	& Positive affect & 1,093 & 3.073 & 0.955 & 1.000 & 6.000 & 0.439 \\*
	& Negative affect (rev.) & 1,093 & 2.405 & 0.807 & 1.000 & 6.000 & 0.933 \\*
	& Income (AUD) & 1,093 & 168,646 & 93,446 & 3,000 & 756,855 & 2.150 \\
	\addlinespace
	22 & Work-life conflict & 0 & -- & -- & -- & -- & -- \\
	& Domain satisfaction & 1,093 & 7.513 & 0.962 & 3.500 & 10.000 & $-0.380$ \\
	& Positive affect & 1,093 & 3.081 & 0.948 & 1.000 & 5.750 & 0.432 \\
	& Negative affect (rev.) & 1,093 & 2.427 & 0.831 & 1.000 & 5.800 & 0.950 \\
	& Income (AUD) & 1,093 & 183,078 & 101,144 & 2,000 & 780,917 & 1.881 \\
	\addlinespace
	23 & Work-life conflict & 1,093 & 3.402 & 1.224 & 1.000 & 7.000 & 0.147 \\
	& Domain satisfaction & 1,093 & 7.581 & 0.997 & 3.000 & 10.000 & $-0.542$ \\
	& Positive affect & 1,093 & 3.081 & 0.961 & 1.000 & 6.000 & 0.442 \\
	& Negative affect (rev.) & 1,093 & 2.427 & 0.801 & 1.000 & 5.400 & 0.978 \\
	& Income (AUD) & 1,093 & 196,723 & 106,969 & 24,000 & 955,681 & 2.286 \\
\end{longtable}
\endgroup

\noindent\textit{Note.} Statistics are computed separately within each wave
over individuals remaining after the zero-income exclusion.
Income is reported in nominal Australian dollars and rounded to the nearest
dollar in the table. Work-life conflict was not observed in Waves 18, 20, and
22; its zero counts represent structural non-administration, not sample
attrition.

\subsection{Additional Empirical Results}
\begin{table}[H]
	\centering
	\caption{HILDA 10-Fold CV Prediction Evaluation Scores}
	\label{tab:hilda_cv_scores}
	\small
	\resizebox{\textwidth}{!}{%
		\begin{tabular}{lccccc}
			\toprule
			\toprule
			& \multicolumn{5}{c}{VD-vine Model}\\ \cmidrule{2-6}
			Linking VC & VDV& M1 & M2 & M3 & M4 \\
			\midrule
			\multicolumn{6}{l}{\it Panel A: Negative Log Score} \\
			GVC & \textbf{106.08} (1.40) & 113.08 (1.31) & 116.32 (1.34) &  & 118.18 (1.08) \\
			FGMVC & 130.13 (1.22) & 135.40 (1.13) & 126.58 (1.11) & 141.45 (1.31) & 140.26 (1.30) \\
			Hybrid & 111.62 (1.07) & 118.49 (0.98) & 118.54 (1.26) &  & 118.09 (1.08) \\
			\midrule
			\multicolumn{6}{l}{\it Panel B: Normalized Energy Score} \\
			GVC & \textbf{11.82} (0.11) & 11.91 (0.11) & 12.10 (0.11) &  & 13.26 (0.08) \\
			FGMVC & 14.53 (0.11) & 14.57 (0.11) & 12.83 (0.11) & 16.21 (0.10) & 16.02 (0.10) \\
			Hybrid & 12.52 (0.09) & 12.63 (0.09) & 12.31 (0.11) &  & 13.24 (0.08) \\
			\midrule
			\multicolumn{6}{l}{\it Panel C: Normalized MMD Score} \\
			GVC & \textbf{10.66} (0.06) & 10.78 (0.05) & 10.90 (0.05) &  & 11.60 (0.03) \\
			FGMVC & 12.36 (0.02) & 12.43 (0.03) & 11.49 (0.05) & 13.23 (0.02) & 13.14 (0.01) \\
			Hybrid & 11.15 (0.03) & 11.27 (0.03) & 11.09 (0.05) &  & 11.59 (0.03) \\
			\bottomrule
			\bottomrule
		\end{tabular}
	}
	\caption*{Notes: Entries report the mean of the scores across all folds, with standard error in parentheses. Lower values are better for all three scores; the best value within each score block is bolded. VD-vine denotes the full vector D-vine model, while M1, M2, M3, and M4 correspond to nested sub-models. The model M3 has an independence vector vine copula, so that results are unaffected by linking vector copula choice.}
\end{table}

\subsection{Additional Figures}

\begin{figure}[H]
    \centering
    \caption{Ten-fold cross-validation boxplots of the MMD score for the HILDA application.}
    \includegraphics[width=\textwidth]{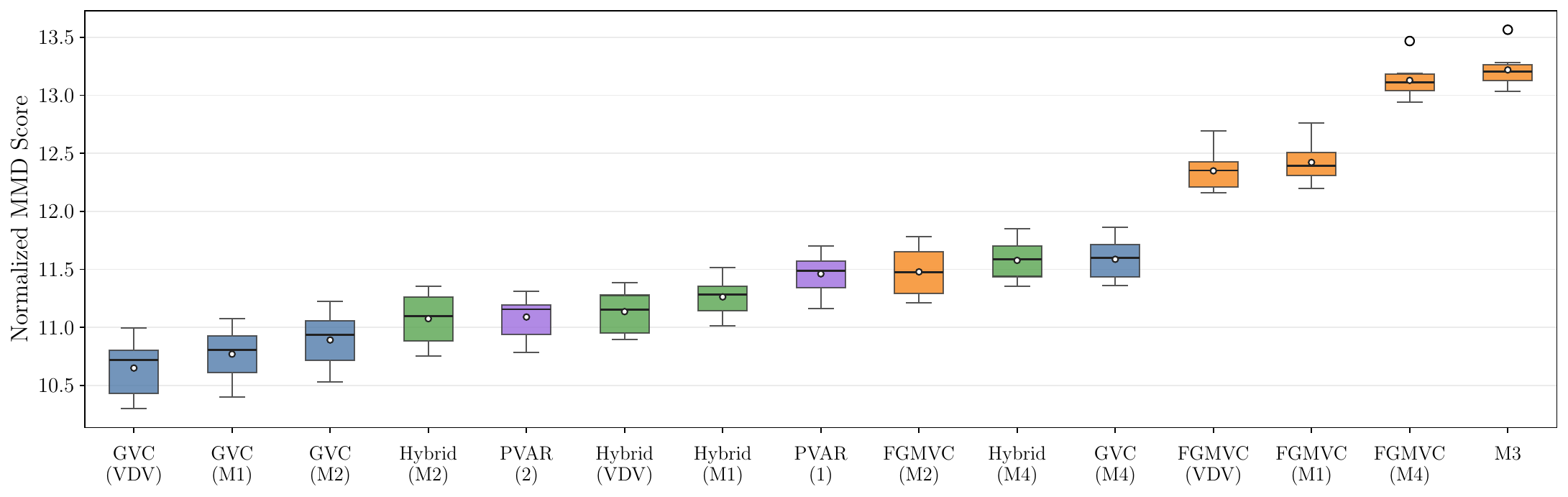}
    \label{fig:hilda_cv_mmd_boxplot}
\end{figure}

\begin{figure}[H]
    \centering
    \caption{Ten-fold cross-validation boxplots of LS for the HILDA application.}
    \includegraphics[width=\textwidth]{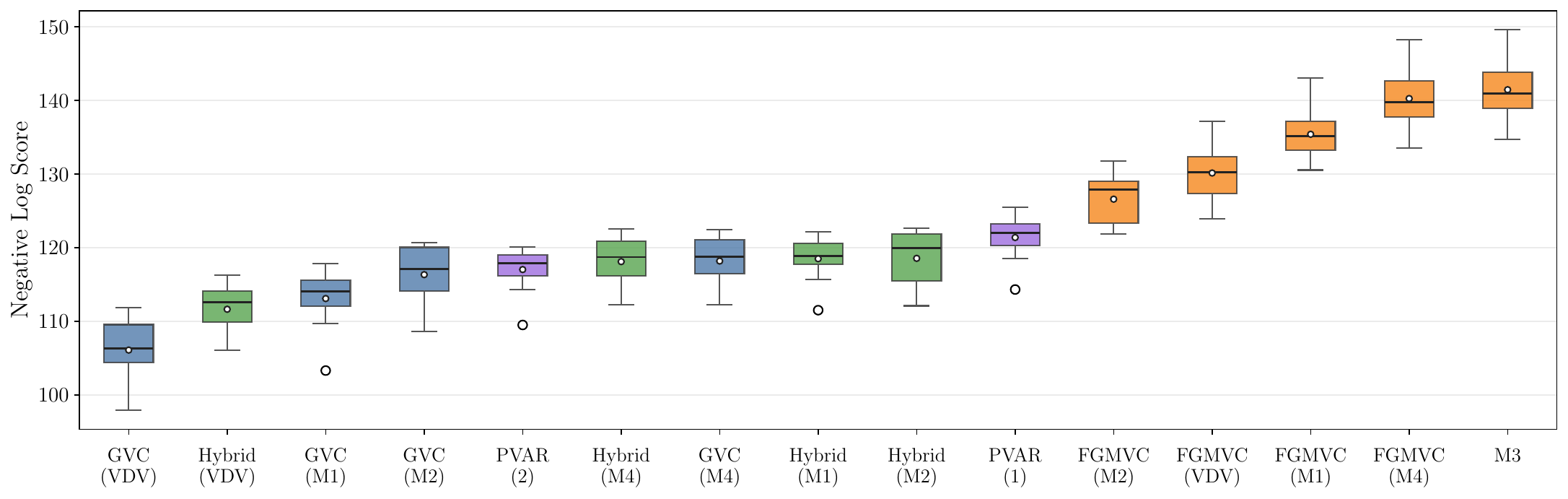}
    \label{fig:hilda_cv_logscore_boxplot}
\end{figure}

\begin{figure}[H]
    \centering
	\caption{Pairwise Spearman correlations from the VD-vine model. }
    \includegraphics[width=\textwidth]{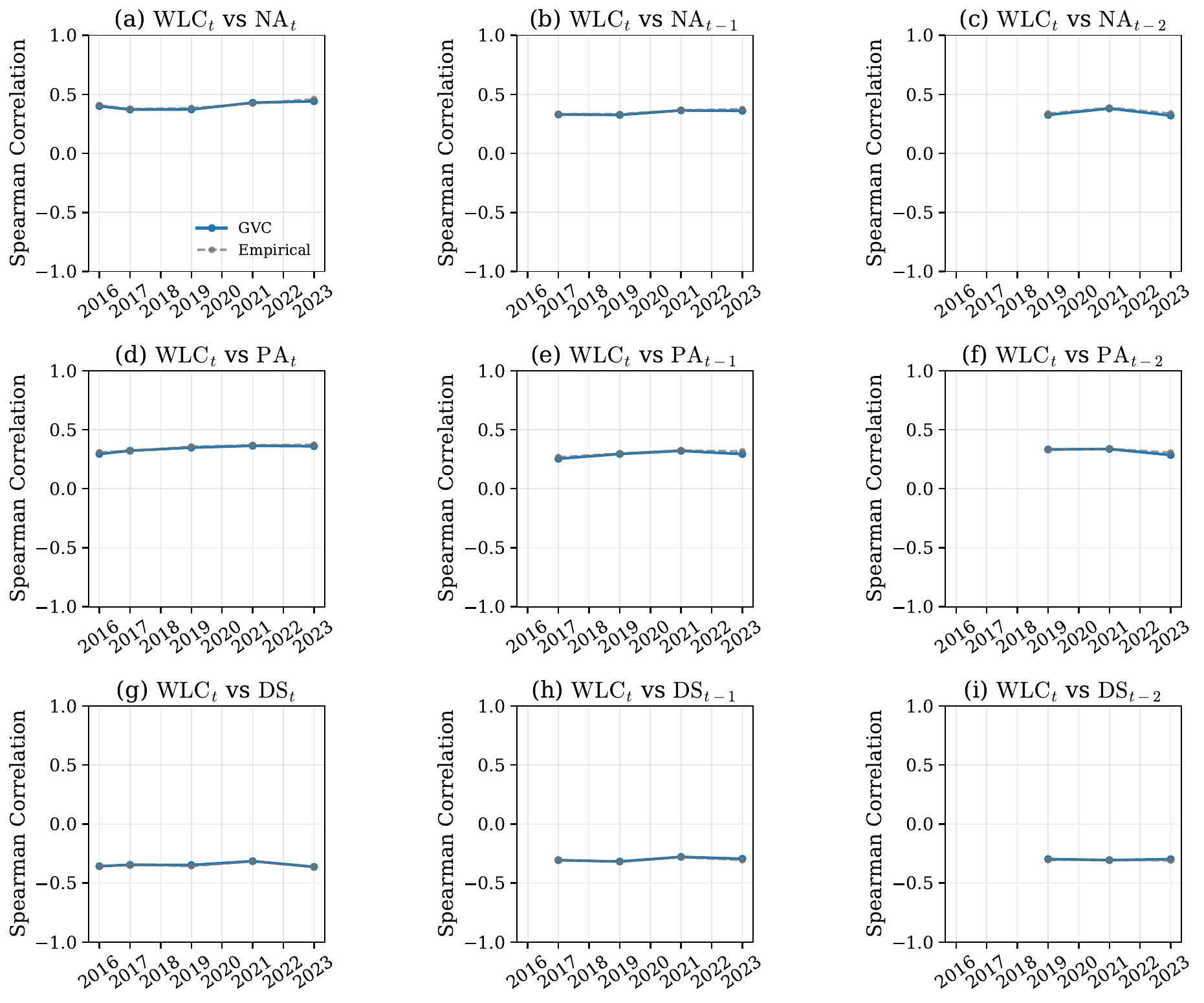}
    \caption*{Notes: Plots are Spearman correlations between WLC and the well-being measures NA, PA, and DS in the HILDA panel data. The horizontal axis gives the wave \(t\) from 2016 to 2023, and the vertical axis the Spearman correlation.}
    \label{fig:wlc_lagged_spearman}
\end{figure}

\newpage
\section{Extra Details on the Hyper-spherical Parametrization}\label{sm:priors}
Here we present some extra details on the hyper-spherical parameterization of the correlation matrix $\Sigma_j(\varphivec)$ in Section~\ref{sec:ltmaps}. Following~\cite{pinheiro1996unconstrained} and many
subsequent authors, we consider the upper triangular Cholesky factorization
\(\Sigma_j=A_j A_j^\top\) and parameterize $A_j$ in terms of $\varphivec_j$. Restrictions on $\varphivec_j$ give \(A_j\) a positive diagonal, so the resulting correlation matrix is positive definite and its Cholesky factor is unique, as below.

Writing the elements of the matrix $A_j=\{a_{j,kl}\}$, the corresponding lower-triangular hyperspherical Cholesky factor has entries
\begin{equation*}
	\begin{aligned}
		a_{j,kk} &=
		\begin{cases}
			1, & k=1,\\
			\displaystyle\prod_{m=1}^{k-1}\sin(\varphi_{j,km}), & k=2,\ldots,d_j,
		\end{cases}
		&
		a_{j,kl} &=
		\begin{cases}
			\cos(\varphi_{j,k1}), & l=1,\\
			\displaystyle
			\cos(\varphi_{j,kl})
			\prod_{m=1}^{l-1}\sin(\varphi_{j,km}), & l=2,\ldots,k-1,
		\end{cases}
	\end{aligned}
\end{equation*}
where the second expression applies for $k=2,\ldots,d_j$, the elements above the diagonal are zero, and $\varphi_{j,kl}\in(0,\pi)$.

\end{document}